\documentclass[a4paper]{amsart}
\usepackage[foot]{amsaddr}

\usepackage{amsmath}
\usepackage{amssymb}
\usepackage{amsthm}
\usepackage{bm}
\usepackage{booktabs}
\usepackage{fullpage}
\usepackage{hyperref}
\usepackage[english]{isodate}
\usepackage{color}
 
\theoremstyle{plain}
%[section]

\newtheorem{lemma}{Lemma}
\newtheorem{theorem}{Theorem}

\theoremstyle{definition}
\newtheorem{definition}{Definition}
\newtheorem{observation}{Observation}%[section]

\newtheorem{remark}{Remark}

\def\Z{{\mathbb Z}}
%%% macros
% \newcommand{\eqrefs}[2]{(\ref{#1},\,\ref{#2})}
% \newcommand{\eqthrurefs}[2]{(\ref{#1}--\ref{#2})}

\renewcommand{\leq}{\leqslant}
\renewcommand{\geq}{\geqslant}

\usepackage{multirow}
\usepackage{rotating}
\usepackage{array}
\newcolumntype{L}{>{\centering\arraybackslash}m{10cm}}
 \usepackage{graphicx} % options: [pdftex]  % declared in the class file
\newsavebox{\mybox}
\newenvironment{mytabularwrap}{\begin{lrbox}{\mybox}}
	{\end{lrbox}%
	\setbox0\hbox{\usebox\mybox}%
	\ifdim\wd0<\textwidth
	\usebox\mybox%
	\else
	\resizebox{\textwidth}{!}{\usebox\mybox}%
	\fi
}
\title{The metric dimension of the circulant graph $C(n,\pm\{1,2,3,4\})$}

\author{Cyriac Grigorious}
\author{Thomas Kalinowski}
\author{Joe Ryan}
\author{Sudeep Stephen}

\address{School of Mathematical and Physical Sciences, University of Newcastle, 2308 NSW, Australia}

%\email{cyriac.grigorious@uon.edu.au,thomas.kalinowski@newcastle.edu.au,joe.ryan@newcastle.edu.au,sudeep.stephen@uon.edu.au}
\date{\today}

\begin{document}

\begin{abstract}
  Let $G=(V,E)$ be a connected graph and let $d(u,v)$ denote the distance between vertices
  $u,v \in V$. A metric basis for $G$ is a set $B\subseteq V$ of minimum cardinality such that no
  two vertices of $G$ have the same distances to all points of $B$. The cardinality of a metric basis of $G$
  is called the metric dimension of $G$, denoted by $\dim(G)$. In this paper we determine the 
  metric dimension of the circulant graphs $C(n,\pm\{1,2,3,4\})$ for all values of $n$.
\end{abstract}

\maketitle

\section{Introduction}

Let $G(V,E)$ be a simple connected and undirected graph. For $u,v \in V$, let $d(u,v)$ denote the
distance between $u$ and $v$. A vertex $x \in V$ is said to resolve or distinguish two vertices $u$
and $v$ if $d(x, u) \neq d(x, v)$. A set $X \subseteq V$ is said to be a \textit{resolving set} for
$G$, if every pair of vertices of $G$ is distinguished by some element of $X$. A minimum resolving
set is called a \textit{metric basis.} The cardinality of a metric basis is called the \emph{metric
  dimension} of $G$, denoted by $\dim(G)$. For an ordered set
$X =\{x_1, x_2, \dotsc, x_k\} \subseteq V$, we refer to the $k$-vector (ordered $k$-tuple)
$r(v|X) = (d(v,x_1), d(v,x_2), \dotsc, d(v,x_k))$ as the \emph{representation} of $v$ with respect
to $X$. Thus we can have another equivalent definition. The set $X$ is called a \emph{metric basis}
if distinct vertices of $G$ have distinct representation with respect to $X$. The metric dimension
problem has been studied in several papers
\cite{ChartrandErohJohnsonEtAl2000,HararyMelter1976,KhullerRaghavachariRosenfeld1996,Slater75},
where it is also referred to as the \emph{location number}. Note that metric basis, \emph{minimum
  locating set} and \emph{reference set} are different names used by different authors to describe
the same concept. In this paper we use the terms metric basis and metric dimension.

The problem of finding the metric dimension of a graph was studied by Harary and Melter
\cite{HararyMelter1976}. Slater described the usefulness of this idea in long range aids to
navigation~\cite{Slater75}. Melter and Tomescu~\cite{MelterTomescu1984} studied the metric dimension
problem for grid graphs. The metric dimension problem has been studied also for trees and
multi-dimensional grids by Khulller, Raghavachari and
Rosenfield~\cite{KhullerRaghavachariRosenfeld1996}. They also described the application of the concept of
metric dimension in robot navigation and in~\cite{ChartrandErohJohnsonEtAl2000} Chartrand, Eroh, Johnson and Oellermann presented
an application in drug discovery, where it is to be determined whether the features of a compound
are responsible for its pharmacological activity.

Cayley graphs on the cyclic group $\Z_n$ are called \textit{circulant graphs}. We use the special
notation $C(n,S)$ for a circulant graph on $\Z_n$ with connection set $S$.  In this paper we focus
on connection sets of the form $\{1,\dotsc,t\}$. Let $C(n, \pm \{1,2, \dotsc, t\})$ for
$1 \leq t \leq \lfloor n/2\rfloor$ and $n\geq3$, denote the graph with vertex set $V=\{0,1, \dotsc, n-1\}$
and edge set $E=\{(i,j):|j- i|\equiv s\pmod n,s\in\{1,2, \dotsc, t\}\}$. Note that $C(n, \pm \{1\})$
is the cycle $C_n$, and $C(n, \pm\{1,2, \dotsc, \lfloor n/2 \rfloor \})$ is the complete graph
$K_n$.  The distance between two vertices $i$ and $j$ in a circulant graph
$C(n,\pm\{1,2,\dotsc,t\})$ such that $0 \leq i <j<n$
is,% \[d(i,j)=\min\{\Bigl\lceil\frac{j-i}{t}\Bigr\rceil,\Bigl\lceil\dfrac{n-(j-i)}{t}\Bigr\rceil\}.\] In other words,
\[d(i,j)=\begin{cases}
\left\lceil(j-i)/t\right\rceil& \text{for } j-i\in\{0,1,\dotsc\lfloor n/2\rfloor\}, \\
\left\lceil(n-(j-i))/t\right\rceil& \text{for }j-i\in\{\lceil n/2\rceil,\dotsc, n-1\}.
\end{cases}\]
Extending the results of Imran, Baig, Bokhary and
Javaid~\cite{ImranBaigBokharyEtAl2012}, in~\cite{BorchertGosselin2014} Borchert and Gosselin showed
that $\dim(C(n,\pm\{1,2\}) = 4$ if $n \equiv 1 \pmod {4}$ and $\dim(C(n,\pm\{1,2\}) = 3$
otherwise. They also solved the case $t=3$ by proving
\[\dim(C(n,\pm \{1,2,3\}))=
  \begin{cases}
      5 & \text{for } n\equiv 1\pmod 6, \\
      4 & \text{otherwise.} 
  \end{cases}
\]
For general $t$, the following bounds were obtained by Vetrik in~\cite{Vetrik2016}.
\begin{itemize}
\item For $t\geq 2$ and $n \geq t^2+1$, we have $\dim C(n, \pm \{1, 2, \dotsc, t\}) \geq t$.
\item For $t\geq 2$ and $n=2kt+r$ with $k \geq 0$ and $t+2 \leq r \leq 2t+1$ we have
  $\dim C(n,\pm\{1,2, \dotsc, t\}) \geq t+1$.
\item For even $t$ and $n=2kt+t+2p$, we have $\dim C(n,\pm\{1,2, \dotsc, t\})\leq t+p$.
\end{itemize}
Actually, the condition $n\geq t^2+1$ in the first point above can be relaxed: 
\begin{itemize}
\item For $t\geq 2$ and $n \geq 2t+1$, we have $\dim C(n, \pm \{1, 2, \dotsc, t\}) \geq t$.
\end{itemize}
We will prove this in Section~\ref{sec:auxiliary}. In~\cite{GrigoriousManuelMillerEtAl2014}, we gave an
upper bound as follows:
\begin{itemize}
\item For $t\geq 2$ and $n=2kt+r$ with $k\geq 1$ and $2\leq r \leq t+2$ we have
  $\dim(C(n,\pm\{1,2,\dotsc,t\}))\leq t+1$.
\end{itemize}
Combining all these bounds we obtain the following for $t=4$:
\begin{itemize}
\item $\dim C(n, \pm \{1, 2, 3, 4\})=4$ for $n\equiv 4\pmod 8$,
\item $\dim C(n, \pm \{1, 2, 3, 4\})\in\{4,5\}$ for $n\equiv 2,3,\text{ or }5\pmod 8$,
\item $\dim C(n, \pm \{1, 2, 3, 4\})=5$ for $n\equiv 6\pmod 8$,
\item $\dim C(n, \pm \{1, 2, 3, 4\})\in\{5,6\}$ for $n\equiv 0\pmod 8$,
\item $\dim C(n, \pm \{1, 2, 3, 4\})\geq 5$ for $n\equiv \pm 1\pmod 8$.  
\end{itemize}
In this paper, we determine $\dim C(n,\pm\{1,2,3,4\})$ for all values of
$n$ as follows.
\begin{theorem}\label{mainthm}
Let $G=C(n, \pm \{1, 2, 3, 4\})$, $n\geq 6$, $n\not\in\{11,19\}$. Then 
\[\dim(G)=\begin{cases}
 4 & \text{for }\ n\equiv 4 \pmod 8, \\  
 5 & \text{for }\ n\equiv \pm 2\text{ or }\pm 3 \pmod 8, \\
 6 & \text{for }\ n\equiv \pm 1\text{ or }0 \pmod 8.
\end{cases}\]
For $n\in\{5,11,19\}$, we have $\dim(G)=4$.
\end{theorem}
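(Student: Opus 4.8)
The plan is to prove the theorem residue class by residue class modulo $8$, separating in each case the lower bound from the upper bound and invoking the bounds already collected above so that only the genuinely new inequalities remain. Tabulating what is known: the cases $n\equiv4\pmod8$ and $n\equiv6\pmod8$ are already settled, and for $n\equiv2,3,5\pmod8$ we already have $\dim(G)\le5$ while for $n\equiv0\pmod8$ we have $\dim(G)\le6$. Hence the new work is exactly (i) the lower bound $\dim(G)\ge5$ for $n\equiv2,3,5\pmod8$; (ii) the lower bound $\dim(G)\ge6$ for $n\equiv0,\pm1\pmod8$; and (iii) the upper bound $\dim(G)\le6$ for $n\equiv\pm1\pmod8$, the one residue class for which no upper bound has yet been recorded. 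The sporadic values $n\in\{5,11,19\}$, where the generic formula would predict $5$ but the answer is $4$, will be handled by exhibiting an explicit $4$-element metric basis and checking it by hand.

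The whole analysis rests on one structural feature of the distance function. Since $d(b,v)=\lceil\delta(b,v)/4\rceil$ with $\delta(b,v)=\min(|b-v|,\,n-|b-v|)$ the circular distance, each landmark $b$ partitions the cycle into \emph{blocks} of (at most) four consecutive vertices $\{b\pm(4j-3),\dots,b\pm 4j\}$ all lying at the common distance $j$ from $b$; two vertices are resolved by $b$ precisely when they fall in different blocks. Equivalently, walking around the cycle from $v$ to $v+1$ changes the coordinate $d(b,v)$ by $+1$, $0$ or $-1$, the nonzero steps occurring only at the block boundaries of $b$, whose positions are governed by the residue of $b$ modulo $4$. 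This reduces every statement about resolution to a finite, periodic bookkeeping of block boundaries, which is what lets the global problem depend only on $n\bmod8$.

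For the upper bounds I would take an explicit candidate basis consisting of a short run of consecutive vertices near $0$ together with one or two extra vertices placed to break the reflection symmetry $v\mapsto 2b-v$ that a single landmark cannot see; for $n\equiv\pm1\pmod8$ this yields a set of size $6$. Verifying that the chosen set is resolving amounts to showing that the block-boundary pattern of the few landmarks never leaves two vertices in a common block simultaneously, and by the periodicity noted above this collapses to checking a bounded number of positions (and a bounded number of residues of $n$), independent of how large $n$ is.

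The heart of the paper, and the step I expect to be the main obstacle, is the lower bound. Here I would argue by contradiction: assuming a resolving set $B$ of size one less than claimed, I would sort $B$ cyclically and study the gaps between consecutive landmarks. The counting backbone is that a single landmark resolves only about a quarter of the $n$ consecutive pairs $\{v,v+1\}$ (one block boundary per four vertices on each of the two sides), so a handful of landmarks can barely cover all such pairs; sharpening this crude count into the exact thresholds $5$ and $6$ requires tracking the parities and the residues of the landmark positions modulo $4$, together with the behaviour near the two antipodal regions where the forward and backward block patterns meet. The delicate part is converting ``every consecutive pair and every antipodal pair is covered'' into an unavoidable collision: one must exhibit, in a gap forced to be long by pigeonhole, two vertices that lie in a common block with respect to \emph{every} landmark at once. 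I expect this to split into several subcases according to $n\bmod8$ and the multiset of landmark residues modulo $4$, with the $n\equiv0,\pm1\pmod8$ cases (demanding $\dim(G)\ge6$) being the most involved.
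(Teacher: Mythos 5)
Your bookkeeping of what remains to be proved is correct (lower bound $5$ for $n\equiv 2,3,5\pmod 8$; lower bound $6$ for $n\equiv 0,\pm1\pmod 8$; upper bound $6$ for $n\equiv\pm1\pmod 8$; explicit $4$-element bases for $n\in\{5,11,19\}$), and your structural observation about blocks of four consecutive vertices at common distance is exactly the mechanism the paper exploits through the sets $R_i$. Your upper-bound sketch is also broadly consistent with what the paper does: for $n=8k+7$ the paper's basis is precisely a run of six consecutive vertices $\{0,1,2,3,4,5\}$, and for $n=8k+9$ it is $\{0,1,4,7,4k+6,4k+7\}$, i.e.\ four vertices near $0$ plus two near the antipode, matching your idea of extra vertices breaking the reflection symmetry — though you never exhibit a concrete set or carry out the verification, which in the paper is a genuine case analysis with distance formulas valid for all $k$, not a finite periodic check.

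The genuine gap is in the lower bound, which you correctly identify as the heart of the matter but leave as a program whose proposed engine would fail. Your counting backbone — each landmark resolves about a quarter of the consecutive pairs, so sort the landmarks, find a long gap by pigeonhole, and force a collision there — cannot produce the thresholds: since $\lvert R_i\rvert=2(k+1)\approx n/4$, five landmarks supply roughly $5n/4>n$ pair-resolutions, so all consecutive pairs (and antipodal pairs) are easily covered by five vertices even when $n\equiv 0,\pm1\pmod 8$; no collision on any single pair is forced, and the obstruction is not located in a long landmark-free gap. The paper's actual mechanism is quantitative and of a different kind: Lemma~\ref{k-2resolving} shows that any $\ell$ vertices inside a window of $t+1=5$ consecutive vertices require at least $\ell-1$ resolving vertices, because a single landmark can separate at most one vertex of such a window from the rest. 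This makes blocks of size $3$ or $4$ cost $2$ or $3$ landmarks each, and the proof then fixes a two- or three-element subset $S$ of the hypothetical basis, does an exhaustive case analysis over the residues of its elements modulo $4$ (with parameters $m,m'$ ranging over all of $\{1,\dotsc,k\}$, handled by explicit distance formulas), and in every case exhibits an $S$-block or $S$-cluster whose resolution needs more vertices than remain — supported by about a dozen bespoke cluster lemmas (e.g.\ Lemmas~\ref{2-4-3lemma_1}, \ref{2-4-3lemma78}, \ref{7lemmaAk}, \ref{7lemmaAkBk} and the $n\equiv3\pmod 8$ observations), several of which additionally restrict where the resolving vertices may lie. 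Without a lemma of the Lemma~\ref{k-2resolving} type, your ``unavoidable collision'' step has no engine, and the residues-mod-$4$ bookkeeping you gesture at does not by itself reduce the problem to a bounded check independent of $n$; so as written the proposal establishes none of the new lower bounds.
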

\begin{remark}
  The cases $n\in\{5,\,11,\,19\}$ have to be treated separately in Theorem~\ref{mainthm}, because
  $\{0,\,1,\,2,\,3\}$, $\{0,\,2,\,3,\,10\}$ and $\{0,\,2,\,7,\,19\}$, respectively, are resolving
  sets witnessing $\dim(G)=4$ in these cases.
\end{remark}

\begin{remark}
  While we were working on the revision of the present paper we became aware of the recent work by Chau and
  Gosselin~\cite{ChauGosselin} who independently obtained some of our results. In particular, they
  prove the cases $n\equiv 1\pmod 8$ and $n\equiv 3\pmod 8$ of Theorem~\ref{mainthm}.
\end{remark}

The paper is structured as follows. In Section~\ref{sec:notation} we introduce definitions and
notation. In Section~\ref{sec:auxiliary} we prove some lemmas that are needed in the proof of
Theorem~\ref{mainthm} which is contained in Sections~\ref{sec:proof} and~\ref{sec:upper}, where
Section~\ref{sec:proof} is devoted to establishing the lower bounds, while Section~\ref{sec:upper}
contains the upper bounds which do not already follow from the results mentioned above, i.e., the
cases $n=8k+r$ with $r \in\{7,9\}$.

\section{Definitions and notations}\label{sec:notation}
Throughout this paper we consider the graph $G=C(n,\pm\{1,\dotsc,t\})$. We prove a few results for
general $t$, but mostly we are concerned with the case $t=4$. For $t=4$ we write the order of the
graph as $n=8k+r$ with $k\geq 1$ and $r \in \{2,3, \dotsc, 9\}$ (the cases $n\leq 9$ can be done by
a brute force computer search). The graph $G$ has diameter $k+1$, and for every vertex
$v$, the set
\[D_v=\{v+4k+j\,:\ j=1,2, \dotsc, r-1\}\]
of vertices at diameter distance from $v$ has size $\lvert D_v\rvert=r-1$.

We generally consider the vertex set of a circulant graph of order $n$ as $V=\{0,1,2, \dotsc,n-1\}$,
and whenever we refer to a vertex such as $i+j$ or $-4m$ this has to be interpreted modulo $n$ in
the obvious way.
\begin{definition}
  For any $S\subseteq V$ we define an equivalence relation $\sim_S$ by
  $u\sim_S v \iff r(u|S)=r(v|S)$.
\end{definition}
A set $S\subseteq V$ is a resolving set if and only if all equivalence classes of $\sim_S$ are
singletons. So the equivalence classes can serve as a measure for how far away the set $S$ is from
being a resolving set. If we want to extend a set $S$ to become a resolving set, we have to add a set
$X$ of vertices that resolve all the non-trivial equivalence classes. In our proof it will be
convenient to consider subsets of equivalence classes, which we call $S$-blocks. This is made more
precise in the following definition. 
\begin{definition}
  Let $A\subset V$ and $S\subset V$. We call the set $A$ an \emph{$S$-block}, if all vertices of $A$
  are at equal distance from every vertex of $S$, or equivalently,
  $r(a\vert S)=r(b\vert S)$ for all $a,b\in A$. Slightly abusing notation we denote this common
  representation vector by $r(A\vert S)$.
\end{definition}
\begin{definition}
  Let $A\subset V$ and $X \subset V$ be such that for all $a,b \in A$ we have $r(a|X)\neq
  r(b|X)$. Then we say $X$ \emph{resolves} $A$.
\end{definition}
For given $S$, if we want to find a vertex set $X$ such that $S\cup X$ is a resolving set, then $X$
has to resolve all the $S$-blocks simultaneously. In the following definition we introduce notation
for a collection of $S$-blocks.
\begin{definition}
  Let $S\subseteq V$. An $\ell$-tuple $A=(A_1, A_2, \dotsc, A_\ell) $ of $S$-blocks is called an
  \emph{$S$-cluster} if the sets $A_i$ are subsets of distinct $\sim_S$-classes. We say that a set
  $X\subseteq V-S$ resolves the $S$-cluster $A$, if $r(a|X) \neq r(b|X)$ for all
  $a\neq b \in A_j$, $1\leq j \leq\ell$.
\end{definition}
Note that the set $\{i,i+1\}$ can be resolved by a vertex $x$ with $ d(x,i) < d(x,i+1)$ which implies
$x \in \{i-4j\,:\,0 \leq j \leq k\}$, or by a vertex $x$ with $d(x,i) > d(x,i+1)$ which implies
$x \in \{i+1+4j\,:\,0 \leq j \leq k\}$. So
\[R_{i}=\{i-4j\ :\ 0 \leq j \leq k\}\cup\{i+1+4j\ :\ 0 \leq j \leq k\}\]
is the set of vertices that resolve the set $\{i,i+1\}$. In particular, every metric basis must
contain at least one element from each of the sets $R_i$.

\section{Auxiliary results}\label{sec:auxiliary}
In order to prove lower bounds in Theorem~\ref{mainthm}, we need to go into rather tedious case
discussions. The basic idea is to show that a resolving  set $B$ of size $k$ cannot exist by looking
at all possible ways of starting with a set $S$, $\lvert S\rvert=k_0<k$, and exhibit an $S$-block or
$S$-cluster whose resolution requires more than $k-k_0$ vertices. For this purpose we need many
statements of the form ``If $A$ is an $S$-cluster of the form \ldots, then every set resolving $A$
has at least \ldots elements.'' The statements used in the proof of Theorem~\ref{mainthm} are the
lemmas proved below. Sometimes we need supporting claims in the proofs of the lemmas, and we call them
\emph{observations}.

We start with a result which is valid for any $t$, and use it to prove two general lower bounds. 
\begin{lemma}\label{k-2resolving}
  Let $G=C(n, \pm \{1,2, \dotsc, t\})$, and let $A\subseteq\{i, i+1, \dotsc, i+t\}$ with
  $\lvert A\rvert=\ell$, for $2 \leq \ell \leq t+1$. If $X$ resolves $A$, then $\lvert X\rvert\geq \ell-1$.
\end{lemma}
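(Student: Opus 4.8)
The plan is to understand the structure of the distance function restricted to the consecutive block $A\subseteq\{i,i+1,\dotsc,i+t\}$. The key observation is that all these vertices lie within a window of length $t$, so any single vertex $x$ cannot distinguish them very finely: since $d(x,u)=\lceil(\text{gap})/t\rceil$ where the gap is a cyclic distance, consecutive vertices in $A$ differ in their distance to $x$ by at most one, and the distance is a non-strictly monotone ``staircase'' function as we sweep across the window. First I would fix a single resolving vertex $x\notin A$ and analyze how many distinct values the map $u\mapsto d(x,u)$ can take as $u$ ranges over $A$. The central claim will be that $x$ partitions $A$ into at most two $\sim_{\{x\}}$-classes, i.e. $x$ can separate $A$ into at most two groups according to distance — so a single vertex contributes at most a ``factor of $2$'' worth of resolution.

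With that claim in hand, the counting is clean. If $X=\{x_1,\dotsc,x_m\}$ resolves $A$, then the map $u\mapsto r(u|X)$ is injective on $A$. Each coordinate $d(x_s,\cdot)$ takes at most two values on $A$ (by the claim), so the number of distinct representation vectors is at most $2^m$; but to distinguish $\ell$ vertices we need $2^m\geq\ell$, which only gives $m\geq\lceil\log_2\ell\rceil$ and is far too weak. So the naive binary-splitting bound is not enough; instead I would exploit the \emph{ordered} structure. Because $A$ is a set of consecutive integers and the distance from any fixed $x$ is monotone along the two arcs, the partition of $A$ induced by $x$ is not an arbitrary $2$-coloring but an \emph{interval} partition: $x$ splits the consecutive sequence $A$ into two contiguous runs (those closer to $x$ on one side, those farther). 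The right framing is therefore: each $x$ places at most one ``cut'' between two consecutive elements of $A$, and to separate $\ell$ consecutive vertices into singletons we must cut all $\ell-1$ gaps between adjacent elements of $A$.

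The cleanest way to carry this out is to argue gap-by-gap. Write $A=\{a_1<a_2<\dots<a_\ell\}$ with all $a_j$ in the window of length $t$. For each consecutive pair $\{a_j,a_{j+1}\}$, a vertex $x$ distinguishes them only if it lies in the resolving set $R_{a_j}$-type structure (adapted from the observation preceding the lemma); the point is that within this short window, a single $x$ can distinguish \emph{at most one} such consecutive pair $\{a_j,a_{j+1}\}$, because distinguishing $a_j$ from $a_{j+1}$ forces the unique ``cut'' of $x$ to fall exactly in that gap, and the staircase cannot jump twice across a window of length at most $t$. Hence each of the $\ell-1$ gaps needs its own dedicated vertex of $X$, giving $\lvert X\rvert\geq\ell-1$. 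I would make the ``at most one cut per vertex'' statement precise by writing $a_{j+1}-a_j\le t$ and $a_\ell-a_1\le t$, and computing $d(x,a_{j+1})-d(x,a_j)$ explicitly via the ceiling formula, showing this difference is nonzero for at most one index $j$.

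The main obstacle will be exactly this ``one cut per vertex'' claim, and in particular handling the cyclic wrap-around: a vertex $x$ on the far side of the cycle reaches the window $A$ through the two arcs of the distance formula, and near the antipode the $\lceil\cdot\rceil$ staircase could in principle change behavior. I expect the careful case analysis to be in verifying that even accounting for $x$ being close to $i$, close to $i+t$, or near the antipodal region, the function $j\mapsto d(x,a_j)$ remains monotone (non-decreasing then non-increasing, or vice versa) across the short window so that it changes value across at most one gap. Once monotonicity across the length-$t$ window is established, the pigeonhole conclusion $\lvert X\rvert\ge\ell-1$ is immediate.
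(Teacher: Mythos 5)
Your reduction of the lemma to a ``one cut per vertex'' claim is where the proposal breaks, and the break is not a fixable technicality. You correctly note that each $x$ induces at most two distance values on the window, and you correctly anticipate that the danger is the antipodal region; but your proposed resolution --- that $j\mapsto d(x,a_j)$ is ``non-decreasing then non-increasing, or vice versa'' and \emph{therefore} changes value across at most one gap --- is a non sequitur. Unimodality permits the value pattern $m,\,m+1,\,m+1,\,m$, which uses only two values yet changes value across \emph{two} gaps, so the level sets need not be intervals and a single vertex can cut two gaps. Concretely, for $t=4$, $n=19$, $A=\{0,1,2,3\}$ and $x=11$ one computes $d(x,\cdot)=(2,3,3,2)$: the vertex $x$ cuts both the gap $\{0,1\}$ and the gap $\{2,3\}$. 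Worse, no amount of case analysis can rescue the claim, because the lemma as stated is actually false in this regime: adding $y=6$, with $d(y,\cdot)=(2,2,1,1)$, gives the pairwise distinct representations $(2,2),(2,3),(1,3),(1,2)$ for $0,1,2,3$, so $X=\{6,11\}$ resolves a set of size $\ell=4$ inside a window of length $t$ with $\lvert X\rvert=2<\ell-1$. This is not confined to the exceptional values of Theorem~\ref{mainthm}: for $n=27$ the pair $x=15$, $y=6$ does the same to $A=\{0,1,2,3\}$ (here $d(15,\cdot)=(3,4,4,3)$ and $d(6,\cdot)=(2,2,1,1)$). The bump occurs whenever the $\lceil\delta/t\rceil$ staircase can cross a multiple of $t$ upward and back downward within the $t$ steps of the window, i.e.\ whenever the antipodal plateau of $x$ fits strictly inside it; for $t=4$ this happens for infinitely many $n$ (e.g.\ $n\equiv 3$ or $4\pmod 8$).

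For comparison, the paper's own proof of Lemma~\ref{k-2resolving} is an induction on $\ell$ rather than your gap-counting: it picks $x\in X$ separating $i+a_1$ from $i+a_2$ and asserts that either $x=i+a_2$ (all other window elements at distance $1$) or $d(x,i+a_s)=d(x,i+a_2)$ for all $s\geq 2$, then deletes $x$ and one element and inducts. That key assertion fails in exactly the same bump configuration ($x=11$, $n=19$ above), so your attempt is in this respect no worse than the published argument; both are sound precisely outside the bump regime, where the staircase really is monotone across the window, your single-cut accounting is valid, and the pigeonhole over the $\ell-1$ adjacent gaps gives $\lvert X\rvert\geq\ell-1$ just as the paper's induction does. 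But as submitted, your proof plan commits to establishing a claim that has explicit counterexamples, and the lemma it targets needs an additional hypothesis excluding the antipodal bump before any proof --- yours or the paper's --- can be completed.
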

\begin{proof}
We proceed by induction on $\ell$. For $\ell=2$ the statement is true as we need at least one vertex
to resolve a set of size at least $2$. For $\ell\geq 3$, consider the elements of $A$ in the order
$i+a_1,i+a_2,\dotsc,i+a_\ell$ with $0\leq a_1<a_2<\dotsb<a_\ell\leq t$. By assumption there is an
element $x\in X$ with $d(x,i+a_1)\neq d(x,i+a_2)$. If $x=i+a_2$ then $d(x,i+a_s)=1$ for all
$s\in\{1,3,4,\dotsc,\ell\}$. Therefore $X\setminus\{x\}$ resolves $A\setminus\{i+a_2\}$. If $x\neq
i+a_2$, then $d(x,i+a_s)=d(x,i+a_2)$ for all $s\in\{2,\dotsc,\ell\}$. Therefore $X\setminus\{x\}$
resolves $A\setminus\{i+a_1\}$. In both cases $X\setminus\{x\}$ resolves a subset of
$\{i,\dotsc,i+t\}$ of size $\ell-1$, and by induction $\lvert X\setminus\{x\}\rvert\geq \ell-2$, hence $\lvert X\rvert\geq\ell-1$.
\end{proof}
The following theorem strengthens the lower bound given by Vetrik in~\cite{Vetrik2016} by extending the range of
$n$ for which the lower bound is valid. Note that if we have a circulant graph
$C(n,\{1,2,\dotsc,t\})$ with $n<2t+2$, then that circulant graph is a complete graph on $n$ vertices
and its metric dimension is $n-1$.
\begin{theorem}\label{general_t}
  Let $G=C(n,\pm\{1,2,\dotsc, t\})$, with $n\geq2t+2$. Then $\dim(G) \geq t$.
\end{theorem}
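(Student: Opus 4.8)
The plan is to derive the bound as an immediate consequence of Lemma~\ref{k-2resolving}. The key observation is that a resolving set, by definition, assigns pairwise distinct representation vectors to all vertices of $G$; in particular it must assign distinct vectors to the vertices lying in any fixed window of $t+1$ consecutive vertices. So if I can exhibit a single such window and invoke the lemma on it, the bound follows at once, with no further case analysis.

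Concretely, I would fix any vertex $i$ and set $A=\{i,i+1,\dotsc,i+t\}$. Since $n\geq 2t+2>t+1$, these $t+1$ labels are pairwise distinct modulo $n$, so $A$ is a genuine $(t+1)$-element subset of $V$ contained in the window $\{i,i+1,\dotsc,i+t\}$ required by the lemma. Let $B$ be any resolving set for $G$. Because $B$ resolves every pair of vertices, we have $r(a\vert B)\neq r(b\vert B)$ for all distinct $a,b\in A$, i.e.\ $B$ resolves $A$ in the sense of the relevant definition. Applying Lemma~\ref{k-2resolving} with $\ell=t+1$ (which lies in the admissible range $2\leq \ell\leq t+1$ as soon as $t\geq1$) yields $\lvert B\rvert\geq \ell-1=t$. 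Since $B$ was an arbitrary resolving set, $\dim(G)\geq t$.

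There is essentially no hard step here, since the real work has already been absorbed into Lemma~\ref{k-2resolving}; the only points that need care are the bookkeeping hypotheses. I should check that $n\geq 2t+2$ guarantees both that $A$ consists of $t+1$ distinct vertices and that $t\leq\lfloor n/2\rfloor$, so that the stated distance formula applies and the short path between any two vertices of $A$ is the direct one used implicitly in the lemma's proof. It is also worth emphasizing that the improvement over Vetrik's bound is purely in the range of $n$: the argument never uses $n\geq t^2+1$, only the existence of one window of $t+1$ consecutive vertices sitting inside a half of the cycle, which is exactly what $n\geq 2t+2$ provides.
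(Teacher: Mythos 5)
Your proposal is correct and takes essentially the same route as the paper: both reduce the bound to Lemma~\ref{k-2resolving} applied to a window of consecutive vertices. The only cosmetic difference is that the paper first normalizes $0\in B$ and applies the lemma with $\ell=t$ to the $\{0\}$-block $\{1,\dotsc,t\}$ (getting $\lvert B\setminus\{0\}\rvert\geq t-1$), whereas you apply it directly to $B$ with $\ell=t+1$ on $\{i,\dotsc,i+t\}$; the two arguments are interchangeable.
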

\begin{proof}
  Suppose $B$ is a resolving set of $G$ with $\lvert B\rvert< t$. Without loss of generality,
  $S=\{0\}\subseteq B$. The set $A=\{1,2,\dotsc, t\}$ is an $S$-block, and by
  Lemma~\ref{k-2resolving}, $\lvert B-S\rvert\geq t-1$, which is the required contradiction.
\end{proof}
Lemma~\ref{k-2resolving} can also be used to give a short alternative proof of the following result
which was originally proved by Vetrik in~\cite{Vetrik2016}.
\begin{theorem}[\cite{Vetrik2016}]
  Let $n=2kt+r$ where $t\geq 2$, $k\geq 1$ and $r \in \{t+2, t+3, \dotsc,
  2t+1\}$. Then
  \[\dim(C(n,\{1,2,\dotsc,t\}))\geq t+1.\]
\end{theorem}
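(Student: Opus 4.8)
The plan is to reproduce the short argument used for Theorem~\ref{general_t}, but to replace the block $\{1,\dots,t\}$ (which has size $t$ and forces only $t-1$ resolving vertices) by a block of size $t+1$ located in the antipodal region, so that Lemma~\ref{k-2resolving} forces one extra basis vertex. Let $B$ be any resolving set of $G=C(n,\pm\{1,\dots,t\})$. By vertex-transitivity we may assume $0\in B$, and we put $S=\{0\}$. It then suffices to exhibit an $S$-block $A$ with $\lvert A\rvert=t+1$ that is contained in a window $\{i,i+1,\dots,i+t\}$: since every element of $A$ is equidistant from $0$, the vertex $0$ distinguishes no two elements of $A$, so $B\setminus S$ must resolve $A$, and Lemma~\ref{k-2resolving} gives $\lvert B\setminus S\rvert\geq t$, whence $\lvert B\rvert\geq t+1$.

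To produce such a block I would look at the vertices lying at maximal distance from $0$. Exactly as in the case $t=4$ treated in Section~\ref{sec:notation}, the diameter of $G$ equals $k+1$ and the set of vertices at distance $k+1$ from $0$ is $D_0=\{kt+1,kt+2,\dots,kt+r-1\}$, of size $\lvert D_0\rvert=r-1$. The hypothesis $r\geq t+2$ gives $r-1\geq t+1$, so I may take the first $t+1$ of these vertices,
\[A=\{kt+1,\,kt+2,\,\dots,\,kt+t+1\},\]
which is precisely the window $\{i,\dots,i+t\}$ for $i=kt+1$. Thus $A$ has the right size and sits inside a single window; the only thing left to confirm is that $A$ really is an $S$-block.

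The one genuinely computational step is verifying that every vertex of $A$ lies at distance $k+1$ from $0$, and this is exactly where $r\geq t+2$ is used. Writing a vertex of $A$ as $kt+s$ with $1\leq s\leq t+1$, I would apply the distance formula according to whether $kt+s$ lies on the near side ($kt+s\leq\lfloor n/2\rfloor$) or the far side ($kt+s\geq\lceil n/2\rceil$) of the antipodal point. On the near side one has $s\leq\lfloor r/2\rfloor\leq t$, so $d(0,kt+s)=k+\lceil s/t\rceil=k+1$; on the far side one has $d(0,kt+s)=k+\lceil(r-s)/t\rceil$, and the bounds $s\leq t+1\leq r-1$ and $s\geq\lceil r/2\rceil\geq r-t$ give $1\leq r-s\leq t$, so again $d(0,kt+s)=k+1$. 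Hence $A$ is an $S$-block of size $t+1$.

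With $A$ established, the conclusion is immediate from the reduction in the first paragraph: $B$ resolves $A$, $0$ resolves no pair of $A$, so $B\setminus\{0\}$ resolves $A$, and Lemma~\ref{k-2resolving} yields $\lvert B\setminus\{0\}\rvert\geq t$, i.e.\ $\lvert B\rvert\geq t+1$. I expect the only real obstacle to be the case analysis confirming that the whole window $A$ consists of diameter vertices (equivalently $\lvert D_0\rvert\geq t+1$); everything else is a direct transcription of the proof of Theorem~\ref{general_t}.
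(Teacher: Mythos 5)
Your proof is correct and follows essentially the same route as the paper: both take $S=\{0\}$, observe that the antipodal set $D_0=\{kt+1,\dotsc,kt+r-1\}$ is an $S$-block of size $r-1\geq t+1$, and invoke Lemma~\ref{k-2resolving} to force $\lvert B\setminus S\rvert\geq t$. If anything, your version is slightly more careful than the paper's, since you trim $D_0$ to exactly $t+1$ consecutive vertices so that the block visibly fits inside a window $\{i,\dotsc,i+t\}$ as the lemma requires (which matters when $r>t+2$), and you verify the distance $k+1$ on both sides of the antipode explicitly.
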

\begin{proof}
  Suppose $B$ is a resolving set of $G$ with $\lvert B\rvert<t+1$. Without loss of generality,
  $S=\{0\}\subseteq B$. For $D_0=\{tk+\ell : \ell = 1,2,\dotsc,r-1\}$, we have
  $\lvert D_0\rvert=r-1\geq t+1$, and $d(0,u)=k+1$ for all $u \in D_0$. Hence, $D_0$ is an
  $S$-block, and by Lemma~\ref{k-2resolving}, $\lvert B-S\rvert\geq t$, which is the required
  contradiction.
\end{proof}
For the rest of this section, $t=4$.
\begin{lemma}\label{lem:min_dist_789}
  Let $n=8k+r$ with $r\in\{7,8,9\}$, and let $G=C(n,\pm\{1,2,3,4\})$. Suppose $B$ is a metric basis
  for $G$ with $\lvert B\rvert=5$. Then $\lvert i-j \rvert\geq r-5$ for all $i\neq j\in B$.
\end{lemma}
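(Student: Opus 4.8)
The plan is to argue by contraposition. Suppose $B$ is a metric basis with $\lvert B\rvert=5$ and some pair $i\neq j\in B$ satisfies $\lvert i-j\rvert\leq r-6$ (cyclic distance). I will produce an $S$-block so large that the remaining basis vertices cannot resolve it. First I would use the translation automorphisms of the circulant graph together with the reflection $x\mapsto -x$ to normalise the situation so that $0,m\in B$ with $1\leq m\leq r-6$, and then set $S=\{0,m\}$.

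The heart of the argument is the overlap of the two diameter sets. Since $D_v$ is precisely the set of vertices at distance $k+1$ (the diameter) from $v$, every vertex of $D_0\cap D_m$ lies at distance $k+1$ from both $0$ and $m$, so $D_0\cap D_m$ is an $S$-block with common representation $(k+1,k+1)$. A short index computation gives $D_0\cap D_m=\{4k+m+1,\dotsc,4k+r-1\}$, a run of $r-1-m$ consecutive vertices, and the hypothesis $m\leq r-6$ forces this run to have length at least $5$.

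From this block I would extract five consecutive vertices, say $A=\{4k+m+1,\dotsc,4k+m+5\}$. Since $A=\{i',\dotsc,i'+4\}$ with $i'=4k+m+1$ and $t=4$, Lemma~\ref{k-2resolving} (applied with $\ell=t+1=5$) shows that any set resolving $A$ has at least $4$ elements. On the other hand $0$ and $m$ assign the same representation to every vertex of $A$, so they distinguish no pair inside $A$; hence $A$ would have to be resolved by $B\setminus\{0,m\}$ alone, a set of only $3$ vertices. This contradicts the bound of $4$, and the contrapositive is established.

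I expect the proof to be short, so the main effort is bookkeeping rather than a genuine obstacle: one must justify the normalisation (translation and reflection reduce the close pair to $\{0,m\}$ with $m>0$) and verify the identity $\lvert D_0\cap D_m\rvert=r-1-m\geq 5$ together with the absence of wraparound, which is exactly the point where $r\in\{7,8,9\}$ enters. Conceptually, the single idea that must be right is that intersecting the two diameter sets produces one constant-distance block long enough to overwhelm the three basis vertices lying outside $S$.
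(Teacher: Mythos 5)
Your proof is correct and takes essentially the same route as the paper: the paper also normalises to $S=\{0,i\}$ with $i\in\{1,\dotsc,r-6\}$, uses the identical block $A=\{4k+i+1,\dotsc,4k+i+5\}$ with $r(A\vert S)=(k+1,k+1)$ (your $D_0\cap D_m$ computation is precisely the verification of this, which the paper leaves implicit), and invokes Lemma~\ref{k-2resolving} with $\ell=5$ to force $\lvert B-S\rvert\geq 4$, contradicting $\lvert B-S\rvert=3$.
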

\begin{proof}
  If the statement is wrong, then without loss of generality, $S=\{0,i\}\subseteq B$ for some
  $i\in\{1,\dotsc,r-6\}$. The set $A=\{4k+i+\ell\,:\,\ell=1,\dotsc,5\}$ is an $S$-block with
  $r(A\vert S)=(k+1,k+1)$ and, since $B$ is a metric basis, $B-S$ resolves $A$, and by
  Lemma~\ref{k-2resolving} this implies $\lvert B-S\rvert \geq
  4$. Hence, $\lvert B\rvert\geq 6$, which is the required contradiction.
\end{proof}
\begin{observation}\label{observation_0123}
  Let $G=C(n,\pm\{1,2,3,4\})$ with $n=8k+r$, $r\in \{2,\dotsc,9\}-\{3\}$ and
  let
  \[A=(\{a,a+1\},\{a+2,a+3\})\]
  be an $S$-cluster for some $S$. Then for every resolving set $X$
  of $A$, $|X|\geq 2$.
\end{observation}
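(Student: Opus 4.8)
The plan is to reduce the statement to the claim that no single vertex can resolve both pairs $\{a,a+1\}$ and $\{a+2,a+3\}$ simultaneously. Indeed, any $X$ resolving the cluster $A$ must in particular distinguish the two elements of each of its blocks, so if such an $X$ had only one element $x$, then $x$ would resolve both $\{a,a+1\}$ and $\{a+2,a+3\}$. Recalling the set
\[R_i=\{i-4j\ :\ 0\leq j\leq k\}\cup\{i+1+4j\ :\ 0\leq j\leq k\}\]
of vertices resolving $\{i,i+1\}$, this would mean $x\in R_a\cap R_{a+2}$, so it suffices to show $R_a\cap R_{a+2}=\emptyset$. By the vertex-transitivity of $G$ I may translate and assume $a=0$, reducing everything to comparing $R_0$ with $R_2$.

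Each $R_i$ is the union of a backward progression $\{i-4j:0\leq j\leq k\}$ and a forward progression $\{i+1+4j:0\leq j\leq k\}$, both of common difference $4$. So $R_0\cap R_2$ is governed by four congruences obtained by matching a progression from $R_0$ against one from $R_2$; in each case one gets a relation $4m\equiv c\pmod n$ with $c\in\{1,2,-3\}$ and $m$ a bounded sum or difference of the two indices. Matching like-oriented progressions yields $4m\equiv 2\pmod n$ with $\lvert m\rvert\leq k$; matching the forward part of $R_0$ with the backward part of $R_2$ yields $4m\equiv 1\pmod n$ with $0\leq m\leq 2k$; and matching the backward part of $R_0$ with the forward part of $R_2$ yields $4m\equiv -3\pmod n$ with $0\leq m\leq 2k$.

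I would then rule these out using the size of $n=8k+r$. For $4m\equiv 2$ the value $4m-2$ lies in $[-4k-2,\,4k-2]$, an interval strictly inside $(-n,n)$, so the only multiple of $n$ it could equal is $0$; this is impossible since $4m-2$ is not divisible by $4$. For $4m\equiv 1$, the value $4m-1$ is odd and lies in $[-1,\,8k-1]\subset(-n,n)$, so it would again have to vanish, which it cannot. Both of these cases hold for every admissible $r$. The delicate case is $4m\equiv -3$, i.e.\ $4m+3\equiv 0\pmod n$ with $4m+3\in[3,\,8k+3]$: since $0$ is excluded from this interval and $2n>8k+3$, the only multiple of $n$ that can appear is $n$ itself, forcing $4m+3=8k+r$. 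This has an integer solution with $0\leq m\leq 2k$ precisely when $r=3$ (giving $m=2k$); for $r=2$ it fails by divisibility, and for $r\geq 4$ the value $n$ already exceeds $8k+3$ so no multiple lies in the interval.

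Putting the three cases together, all four intersections are empty whenever $r\neq 3$, hence $R_0\cap R_2=\emptyset$ and no single vertex resolves $A$; therefore $\lvert X\rvert\geq 2$. The crux of the argument is the last congruence, where comparing $4m+3\leq 8k+3$ against $n=8k+r$ isolates $r=3$ as the unique obstruction — which is exactly the residue excluded in the hypothesis. Everything else is a routine interval estimate.
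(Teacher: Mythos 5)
Your proof is correct and follows essentially the same route as the paper: reduce to $a=0$ and show $R_0\cap R_2=\emptyset$, which the paper simply asserts in one line. Your congruence-and-interval analysis fills in the verification the paper leaves implicit, and correctly pinpoints the case $4m+3=8k+3$ (i.e.\ $x\equiv -4k\equiv 4k+3$) as the unique obstruction explaining why $r=3$ is excluded from the hypothesis.
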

\begin{proof}
  Without loss of generality $a=0$. Suppose $x\in V$ resolves $A$, that is,
  \[x\in R_0\cap R_2=\{-4m,1+4m\,:\,0\leq m\leq k\}\cap\{2-4m,3+4m\,:\,0\leq m\leq k\}=\emptyset,\]
  which is the required contradiction.
\end{proof}
\begin{lemma}\label{2-4-3lemma_1}
Let $n=8k+9$, $G=C(n,\pm\{1,2,3,4\})$, and let
\[A=\left(\{a,a\pm1\},\,\{a\pm j\pm4\ell\,:\,j\in\{2,3,4\}\}\right)\] 
be an $S$-cluster for some $S$, where $0 \leq \ell \leq 2k+1$. 
If $X \subseteq V$ resolves $A$, then $\lvert X\rvert\geq 3$.
\end{lemma}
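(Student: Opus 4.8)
The plan is to normalise by the translation automorphism $v\mapsto v-a$, so that we may assume $a=0$, and then argue by contradiction: suppose some $X$ with $\lvert X\rvert\le 2$ resolves $A$. Here $A_1=\{-1,0,1\}$ lies in the interval $\{-1,0,1,2,3\}$ of $t+1=5$ consecutive vertices, so Lemma~\ref{k-2resolving} already gives $\lvert X\rvert\ge 2$, forcing $\lvert X\rvert=2$, say $X=\{x,y\}$. The block $A_2$ I would split into the two triples of three consecutive vertices $T_+=\{4\ell+2,4\ell+3,4\ell+4\}$ and $T_-=\{-4\ell-4,-4\ell-3,-4\ell-2\}$, symmetric about $0$; for $\ell$ near $k+1$ these merge into a block of consecutive vertices sitting inside the diametral set $D_0$, while for the remaining $\ell$ they are far apart. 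In every case $X$ must in particular resolve each of $A_1$, $T_+$ and $T_-$ \emph{internally}, and this is the binding requirement.

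First I would pin down how a two-element set can resolve a triple $\{c-1,c,c+1\}$ of three consecutive vertices. Since consecutive vertices are adjacent the distances $d(x,c-1),d(x,c),d(x,c+1)$ change by at most $1$ per step, and because the distance function increases in steps of four, a single vertex separates at most two of the three pairs: it separates $\{c-1,c\}$ and $\{c-1,c+1\}$ (a left step), or $\{c,c+1\}$ and $\{c-1,c+1\}$ (a right step), or, in the exceptional case $x=c$, the two adjacent pairs but not $\{c-1,c+1\}$. Hence the two vertices of $X$ must between them cover all three pairs, leaving only a short list of admissible shapes. Using the explicit description $R_i=\{i-4j:0\le j\le k\}\cup\{i+1+4j:0\le j\le k\}$ of the set that separates $\{i,i+1\}$, each admissible shape translates into membership of $x$ and $y$ in prescribed sets $R_i$, which fixes their residues modulo $4$ and their positions relative to the centre of the triple.

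The contradiction then comes from confronting the constraints of the three triples simultaneously. For a single vertex to separate an adjacent pair of $A_1$ together with one of $T_\pm$ it must lie in an intersection $R_i\cap R_{i'}$; expanding these through the formula above and using $\gcd(4,n)=1$ reduces each intersection to the solvability of a linear congruence such as $j+j'=2\ell+1$ or $j+j'+\ell+1\equiv 0\pmod n$, whose solvability I can read off from the ranges $0\le j,j'\le k$ and $0\le\ell\le 2k+1$. For the range of $\ell$ in which the relevant resolving sets come out pairwise disjoint, two vertices cannot meet the three singled-out pairs and we are done. For the remaining $\ell$ the crude disjointness fails, and I would instead substitute each admissible shape of $X$ forced by $A_1$ into the distance computation on $T_+$ and $T_-$, exhibiting two vertices of a single triple with identical representation $r(\cdot\mid X)$.

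The hard part will be precisely this second regime. Near the antipode the distance function is constant on the eight vertices of $D_0$, so distances saturate at the diameter $k+1$ and the ``a step of four changes the distance'' heuristic breaks down; there I expect to isolate five consecutive vertices inside the merged block and invoke Lemma~\ref{k-2resolving} directly to obtain the bound outright. For small $\ell$ the obstacle is that the resolving sets genuinely overlap, since a well-placed vertex can assist two triples at once, so no counting argument suffices and the finer representation-collision analysis is unavoidable; I would organise it by the residues of $x$ and $y$ modulo $4$ and by whether each vertex approaches its triple as a left step, right step, or centre, checking that in each combination some pair inside $T_+$ or $T_-$ survives unresolved.
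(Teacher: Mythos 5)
There is a genuine gap, and it comes before any of the analysis: you have misread the cluster. In this paper the $\pm$ signs denote a \emph{uniform} choice of sign, so the cluster of Lemma~\ref{2-4-3lemma_1} is a two-element block $\{a,a+1\}$ (or its reflection) together with a \emph{single} triple $\{a+2+4\ell,\,a+3+4\ell,\,a+4+4\ell\}$; the paper's own reduction (``it is sufficient to prove the statement for $A=(\{0,1\},\{j+4\ell:j\in\{2,3,4\}\})$'', exactly as in Lemma~\ref{2-4-3lemma78}) and the applications in Section~4 (Cases~3 and~4 for $n\equiv 1\pmod 8$, where $A_1=\{4k+4,4k+5\}$ has two elements and $A_2$ is one triple) make this unambiguous. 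You instead take the first block to be the three-element set $\{-1,0,1\}$ and the second to be the six-element union $T_+\cup T_-$. Enlarging the blocks only \emph{weakens} the conclusion: any $X$ resolving your bigger cluster also resolves the true one, so a lower bound for your version does not imply the lemma, and it would not support the case analysis in Section~4. Worse, your proof engine genuinely needs the extra constraints you invented: with three triples each requiring its three internal pairs to be covered, a two-element $X$ is overloaded almost by counting, whereas in the actual lemma $X$ must only resolve one pair and one triple, so the coverage count alone gives nothing --- $x$ could resolve $\{0,1\}$ \emph{and} one pair of the triple, with $y$ handling the other two. Ruling this out is the entire content of the lemma.

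The paper closes exactly that door with one structural computation that your sketch gestures at (``fixes their residues modulo $4$'') but never isolates: since $n=8k+9\equiv 1\pmod 4$, every element of $R_0$ has the form $4m+1$ (with $0$ read as $8k+9$), and since $n\equiv 1\pmod 8$ the diametral set $D_x$ has $8$ consecutive elements, so for every such $x$ and \emph{every} $\ell$ with $0\leq\ell\leq 2k+1$ one has $d(x,2+4\ell)=d(x,3+4\ell)=d(x,4+4\ell)$, wrap-around included. Hence the triple stays an $S\cup\{x\}$-block, $y$ alone must resolve it, and Lemma~\ref{k-2resolving} gives the contradiction. In particular your anticipated ``hard regime'' near the antipode does not exist in the correct statement --- the constancy is uniform in $\ell$, which is precisely what $n\equiv 1\pmod 8$ buys --- and your fallback there (five consecutive vertices inside $D_0$ plus Lemma~\ref{k-2resolving}) is only available because of the misread six-element block. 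Finally, even granting your reading, the decisive ``representation-collision analysis'' for small $\ell$ is left as an unexecuted case plan, so the proposal is incomplete on its own terms as well. The fix is to restate the cluster correctly and then prove the single fact above about $R_0$; everything else collapses to two lines.
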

\begin{proof}
  By symmetry, it is sufficient to prove the statement for
  $A=(\{0,1\},\{j+4\ell:j\in\{2,3,4\}\})$. Suppose $X=\{x,y\}$ resolves $A$, and without loss of
  generality $x\in R_0=\{1,5,\dotsc,4k+1,4k+9,4k+13,\dotsc,8k+1,8k+5,0\}$. Then $x$ has the form
  $x=4m+1$ (where we interpret 0 as $8k+9=4(2k+2)+1$), and for all $j\in\{2,3,4\}$,
  \[d(x,j+4\ell)=
    \begin{cases}
      \ell-m+1 & \text{if }m\leq\ell\leq m+k,\\
      m-\ell & \text{if }m>\ell\geq m-k-1,\\
      m+(2k+2-\ell) &\text{if }\ell >m+k,\\
      2k-m+\ell+3 & \text{if }\ell<m-k-1.
    \end{cases}
  \]
  This implies that $\{j+4\ell\,:\,j\in\{2,3,4\}\}$ is an $S\cup\{x\}$-block, and since by
  assumption $X-\{x\}$ resolves it, Lemma~\ref{k-2resolving} implies $\lvert X-\{x\}\rvert\geq 2$,
  which is the required contradiction.
\end{proof}
\begin{lemma}
  Let $n=8k+r$, with $r\in\{5,6,7,8\}$ and
  $G=C(n,\pm\{1,2,3,4\})$. Let \[A=(\{a,a+1\},\{a+2,a+3,a+4\})\] be an $S$-cluster for some $S$. If
  $X$ resolves $A$, then $|X|\geq 3$.
\end{lemma}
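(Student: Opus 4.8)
The plan is to follow the template of the proof of Lemma~\ref{2-4-3lemma_1}. By symmetry take $a=0$, and suppose for contradiction that some $X$ with $\lvert X\rvert\le 2$ resolves $A=(\{0,1\},\{2,3,4\})$. Since $X$ must resolve the block $\{2,3,4\}$, which has three elements and is contained in $\{0,1,2,3,4\}$, Lemma~\ref{k-2resolving} already forces $\lvert X\rvert\ge 2$, so we may write $X=\{x,y\}$. Because $X$ also resolves $\{0,1\}$, at least one of its elements lies in $R_0=\{-4m:0\le m\le k\}\cup\{1+4m:0\le m\le k\}$; relabel so that $x\in R_0$.

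The crux is the following claim: \emph{every $x\in R_0$ is equidistant from $2$, $3$ and $4$}, i.e.\ $d(x,2)=d(x,3)=d(x,4)$. Granting this, $\{2,3,4\}$ is an $(S\cup\{x\})$-block, so the hypothesis that $X$ resolves $\{2,3,4\}$ means that the single vertex $y$ must resolve $\{2,3,4\}$. But $\{2,3,4\}\subseteq\{0,1,2,3,4\}$ has three elements, so by Lemma~\ref{k-2resolving} any resolving set has size at least $2$, contradicting $\lvert\{y\}\rvert=1$. Hence $\lvert X\rvert\ge 3$, exactly as desired.

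To prove the claim I would split $R_0$ into its two families. For $x=1+4m$ with $m\ge 1$ the vertices $2,3,4$ lie ``behind'' $x$, and since $m\le k$ and $r\le 8$ one checks that the backward distances $x-2=4m-1$, $x-3=4m-2$, $x-4=4m-3$ are the short ones; all three map to $m$ under $\lceil\,\cdot\,/4\rceil$, and the cases $m=0$ is immediate. For $x=-4m$ with $m<k$ the vertices $2,3,4$ are reached the forward way, with short distances $2+4m,\,3+4m,\,4+4m$, all mapping to $m+1$. This residue alignment is exactly what makes $R_0$ resolve $\{0,1\}$ in the first place: $0$ and $1$ straddle a block boundary, so $1,2,3,4$ (respectively $2,3,4,5$) fall into a single distance block from $x$.

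The one delicate point, and the step I expect to be the main obstacle, is the near-antipodal configuration, which for $x=-4m$ occurs only at $m=k$, where some of $2,3,4$ are measured the other way around the cycle and the block count above no longer applies verbatim. Here I would invoke the explicit distance formula together with $\lfloor n/2\rfloor=4k+\lfloor r/2\rfloor$ and the hypothesis $r\in\{5,6,7,8\}$. A genuine separation of $2,3,4$ could arise in only two ways. A ``peak'' at the middle vertex ($d(x,2)=d(x,4)\ne d(x,3)$) would require $3$ to be the unique farthest vertex with $\lfloor n/2\rfloor\equiv 1\pmod 4$; but $\lfloor r/2\rfloor\in\{2,3,4\}$ is never $\equiv 1\pmod 4$, and for odd $n$ (the cases $r=5,7$) there is no unique antipode at all, so this is impossible. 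A split at an end ($d(x,2)\ne d(x,4)$) cannot occur either, because the smallest short distance from $x=4k+r$ to any of $2,3,4$ is at least $4k+1$, forcing all three values to equal the diameter $k+1$. Verifying these bounds amounts to substituting the few boundary positions $m=k$ (and the corresponding $m=k$ of the other family) into the distance formula, which is routine but is where all the case-work resides.
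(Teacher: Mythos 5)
Your proposal is correct and takes essentially the same route as the paper's own proof: after placing $x\in R_0$, you show that every such $x$ is equidistant from $2$, $3$ and $4$, so $\{2,3,4\}$ survives as an $(S\cup\{x\})$-block and Lemma~\ref{k-2resolving} applied to the remaining single vertex $y$ gives the contradiction. The only difference is one of detail: you work out the near-antipodal boundary case $x=-4k$ explicitly (your bound that the shorter arc lies in $\{4k+1,\dotsc,4k+4\}$, forcing all three distances to equal $k+1$, is exactly where the hypothesis $r\in\{5,6,7,8\}$ enters), whereas the paper absorbs this case silently into its formula $d(-4m,j)=m+1$ for $0\leq m\leq k$.
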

\begin{proof}
  By symmetry, it is sufficient to prove the statement for $a=0$. Suppose $X=\{x,y\}$ resolves $A$,
  and without loss of generality $x \in R_0=\{1,5, \dotsc,4k+1, 0,-4,\dotsc,-4k\}$. For
  $j\in \{2,3,4\}$,
  \[
    d(x,j)=
    \begin{cases}
      1 & \text{if } x=1, \\
      m & \text{if } x=4m+1 \text{ and }1\leq m \leq k, \\
      m+1 & \text{if } x=-4m \text{ and }0\leq m \leq k.
    \end{cases}
  \]	
  Thus $\{2,3,4\}$ is an $S \cup \{x\}$-block, and Lemma~\ref{k-2resolving} implies that
  $|X-\{x\}| \geq 2$, which is the required contradiction.
\end{proof}

\begin{lemma}\label{2-4-3lemma78}
  Let $n=8k+r$, with $r\in\{7,8\}$ and $G=C(n,\pm\{1,2,3,4\})$. Let
  \[A=(\{a,a\pm1\},\{a\pm2\pm4 \ell,a\pm3\pm4\ell,a\pm4\pm4\ell\})\] be an $S$-cluster for some $S$
  with $1 \leq \ell \leq k$. If $X \subseteq V-\{a\mp4k, a\mp4(k-1),\dotsc, a\mp4(k-\ell+1)\}$
  resolves $A$, then $|X|\geq 3$.
\end{lemma}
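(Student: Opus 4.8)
The plan is to mirror the strategy of Lemma~\ref{2-4-3lemma_1}: assume that a two-element set $X=\{x,y\}$ resolves $A$, show that one of its elements is forced to treat the triple in the second block as a single block, and then invoke Lemma~\ref{k-2resolving} to obtain a contradiction. First I would use the translation and reflection symmetries of the circulant graph to reduce to the normalized cluster $A=(\{0,1\},\{2+4\ell,3+4\ell,4+4\ell\})$, for which the forbidden set becomes $E=\{-4m:\ k-\ell+1\le m\le k\}$. Since $X$ must in particular resolve $\{0,1\}$, at least one of its elements lies in $R_0=\{1+4m:0\le m\le k\}\cup\{-4m:0\le m\le k\}$; without loss of generality call it $x$, and note $x\notin E$ because $X\subseteq V-E$.

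The heart of the argument is the claim that every $x\in R_0\setminus E$ assigns the same distance to all three vertices $2+4\ell$, $3+4\ell$, $4+4\ell$, so that this triple is an $(S\cup\{x\})$-block. I would verify this by a direct distance computation, split according to the two families making up $R_0$. For $x=1+4m$ the displacements $v-x$ to the three vertices are $1+4(\ell-m)$, $2+4(\ell-m)$, $3+4(\ell-m)$, i.e.\ three consecutive residues inside a single arc of length $4$; a short check of the distance formula shows the common value is $\ell-m+1$ or $m-\ell$ according to whether $\ell\ge m$ or $\ell<m$, independent of which of the three vertices we pick, so no element of this family ever separates the triple. For $x=-4m$ the forward displacements are $4(\ell+m)+2$, $4(\ell+m)+3$, $4(\ell+m)+4$, and here the outcome depends on whether this arc reaches past the antipode: as long as $\ell+m\le k$ (equivalently $m\le k-\ell$, i.e.\ $x\notin E$) all three stay on the short side and receive the common distance $\ell+m+1$, whereas for $\ell+m\ge k+1$ the largest displacement crosses the halfway point and the triple splits. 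This is precisely why $E$ removes exactly the vertices $-4m$ with $k-\ell+1\le m\le k$; the remaining members of $R_0$ are harmless, exactly as in Lemma~\ref{2-4-3lemma_1}, where for $n=8k+9$ the whole of $R_0$ lies in one residue class and no exclusion is needed.

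Granting the claim, the triple $\{2+4\ell,3+4\ell,4+4\ell\}$ is an $(S\cup\{x\})$-block, so $X-\{x\}$ alone must resolve three consecutive vertices contained in $\{2+4\ell,\dots,6+4\ell\}$; by Lemma~\ref{k-2resolving} this forces $\lvert X-\{x\}\rvert\ge 2$, hence $\lvert X\rvert\ge 3$, the desired contradiction. The one genuinely delicate point is the distance bookkeeping near the antipode in the $x=-4m$ family, which has to be carried out separately for $r=7$ and $r=8$ since $\lfloor n/2\rfloor$ equals $4k+3$ and $4k+4$ respectively. I expect the boundary case $\ell+m=k$ to be the main obstacle: there the largest displacement lands exactly at or just beside $\lfloor n/2\rfloor$, the distance formula is on the verge of switching branches, and one must confirm that the three distances nonetheless still coincide (they do, with common value $k+1$), so that the threshold defining $E$ is correctly placed at $m=k-\ell+1$ rather than at $m=k-\ell$.
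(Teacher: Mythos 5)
Your proposal is correct and takes essentially the same route as the paper's proof: the same normalization to $A=(\{0,1\},\{2+4\ell,3+4\ell,4+4\ell\})$ with the forbidden set becoming $\{-4m\,:\,k-\ell+1\leq m\leq k\}$, the same split of $R_0$ into the families $x=1+4m$ and $x=-4m$ with the same common distances ($\ell-m+1$ for $m\leq\ell$, $m-\ell$ for $\ell<m\leq k$, and $\ell+m+1$ for $m\leq k-\ell$), and the same conclusion that the triple is an $S\cup\{x\}$-block so that Lemma~\ref{k-2resolving} forces $\lvert X-\{x\}\rvert\geq 2$. Your extra verification at the boundary $\ell+m=k$ (common value $k+1$ for both $r=7$ and $r=8$) correctly confirms that the exclusion threshold sits at $m=k-\ell+1$, exactly as in the paper's case $0\leq m\leq k-\ell$.
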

\begin{proof}
  By symmetry we can assume $A=(\{0,1\},\{2+4 \ell,3+4\ell,4+4\ell\})$. Suppose
  $X=\{x,y\}$ resolves $A$ and $X\subset V-\{4k+r, 4k+r+4,\dotsc 4k+r+4(\ell-1)\}$.
  Then without loss of generality,
  \begin{multline*}
    x \in R_0-\{4k+r, 4k+r+4,\dotsc, 4k+r+4(\ell-1)\}\\
    =\{1,5,\dotsc,4k+1, 4k+r+4\ell,4k+r+4\ell+4,\dotsc,8k+4-4,8k+r\}.
  \end{multline*}
For $j\in\{2,3,4\}$,        
\[
  d(x,j+4\ell)=
  \begin{cases}
    \ell-m+1 & \text{if } x=1+4m \text{ and }0 \leq m \leq \ell, \\
    m-\ell & \text{if } x=1+4m\text{ and }\ell < m \leq k, \\
    m+\ell+1 & \text{if } x=-4m\text{ and }0\leq m \leq k-\ell.
  \end{cases}
\]
Thus $\{j+4 \ell : j \in \{2,3,4\}\}$ is an $S \cup \{x\}$-block, and Lemma~\ref{k-2resolving} implies
that $|X-\{x\}| \geq 2$, which is the required contradiction.
\end{proof}

\begin{lemma}\label{8lemmaAkBk}
  Let $G=C(n,\pm\{1,2,3,4\})$ be a circulant graph of order $n=8k+8$. Let
  \[A=(A_0,\,A_1,\dotsc,A_k,\,B_{1},\,B_{2},\,B_3)\]
  be an $S$-cluster where, for some $m,m'\in\{1,\dotsc,k\}$ with $m<m'$,
  \begin{align*}
    A_0 &= \{4k+4,\,4k+5,\,4k+6\},\qquad
                                    A_i = \{4k+4+4i,\,4k+5+4i\} \text{ for }1 \leq i \leq k,\\
    B_1 &= \{8k+7,\,1\},\qquad B_{2} = \{4m+2,\,4m+4\},\qquad B_{3} = \{4m'+1,\,4m'+3\}.  
  \end{align*}	
  If $X\subset V-\{0,1,\dotsc,4m'+1\}$ resolves $A$, then $|X|\geq 3$.
\end{lemma}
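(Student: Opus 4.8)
The plan is to suppose, for contradiction, that some $X=\{x,y\}$ with $x,y\in V-\{0,1,\dots,4m'+1\}=\{4m'+2,\dots,8k+7\}$ resolves $A$, and to show this is impossible. Throughout I would work with the cyclic distance $\delta(u,v)=\min\{(u-v)\bmod n,\,(v-u)\bmod n\}$, so that $d(u,v)=\lceil\delta(u,v)/4\rceil$. The first thing to extract is a residue criterion modulo $4$: an adjacent pair $\{u,u+1\}$ can be resolved only by a vertex with $x\equiv u$ or $x\equiv u+1\pmod4$ (this is just the description of $R_u$), and a distance-$2$ pair $\{u,u+2\}$ can be resolved only by a vertex with $x\not\equiv u+3\pmod4$; moreover, for a distance-$2$ pair the two admissible residues on the near arc differ from those on the far arc, and the constraint $x\ge 4m'+2$ limits which arc $x$ can occupy relative to each $B_j$. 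Recording these for all blocks gives: the pairs $A_i=\{p_i,p_i+1\}$ (with $p_i=4k+4+4i\equiv0$) need $x\equiv0,1$; the pair $\{4k+5,4k+6\}$ inside $A_0$ needs $x\equiv1,2$; and $B_1,B_2,B_3$ (centred near residues $3,2,1$) forbid the residues $2,1,0$ respectively.

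Next I would use the chain to split into two cases, observing that a vertex of residue $2$ or $3$ resolves none of $A_1,\dots,A_k$. Either (Case~I) both $x,y\in\{0,1\}\pmod4$, or (Case~II) exactly one of them has residue $2$ or $3$, in which case the other must resolve all of $A_1,\dots,A_k$ by itself. In Case~II I would read off from the explicit sets $R_{p_i}$ that a single vertex resolving every $A_i$ ($1\le i\le k$) must be one of $4k+4$, $4k+8$, or $8k+5$ (the fourth candidate, $1$, being excluded by the constraint). The vertex $4k+8$ is equidistant from $4k+4,4k+5,4k+6$, so it resolves none of the three distinctions in $A_0$, leaving the triple $A_0$ to a single remaining vertex, which is impossible by Lemma~\ref{k-2resolving}. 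For $4k+4$ and $8k+5$ one checks that each resolves $\{4k+4,4k+5\}$ and $\{4k+4,4k+6\}$ but not $\{4k+5,4k+6\}$; hence $y$ must resolve $\{4k+5,4k+6\}$, forcing $y\equiv2$ (since $y\in\{2,3\}\pmod4$). A direct computation shows that neither $4k+4$ nor $8k+5$ resolves $B_1$ (both are essentially antipodal to its centre $0$, giving equal distances to $8k+7$ and $1$), and a residue-$2$ vertex cannot resolve $B_1$; so $B_1$ is unresolved, a contradiction.

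In Case~I both vertices lie in $\{0,1\}\pmod4$. Resolving $B_2$ forbids residue $1$, so some vertex is $\equiv0$; resolving $B_3$ forbids residue $0$, so some vertex is $\equiv1$; thus $\{x,y\}$ has exactly one vertex of each residue. Since $\{4k+5,4k+6\}$ needs residue $1$ or $2$, it must be resolved by the residue-$1$ vertex, and $B_3$ must also be resolved by the residue-$1$ vertex because the residue-$0$ one cannot. I would then show these demands are incompatible for a single vertex: using the exact set $R_{4k+5}$, a residue-$1$ vertex resolves $\{4k+5,4k+6\}$ only if it lies in $[5,4k+5]$, whereas a residue-$1$ vertex can resolve $B_3$ only from its far arc (the near arc, consisting of vertices in $(4m'+3,4k+4m'+6)$, admits only residues $2,3$), hence only if it lies beyond the antipode $4k+4m'+6>4k+5$. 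No single vertex meets both range conditions, so Case~I is impossible and the proof is complete.

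I expect the main obstacle to be the wraparound bookkeeping hidden in the phrase ``which arc $x$ occupies.'' The residue conditions for $B_1,B_2,B_3$ flip as $x$ crosses the antipode of each pair, and the relevant antipodes ($\approx 4k+4$, $4k+4m+7$, $4k+4m'+6$) depend on the free parameters $m<m'\le k$; keeping track of which residues remain forbidden on the admissible part of the cycle (the part with $x\ge4m'+2$), and verifying the exact coverage and omission claims for $R_{p_i}$ and $R_{4k+5}$, is where the care lies. The clean structural point, which is what actually drives the contradiction, is that $B_1,B_2,B_3$ together forbid the three residues $2,1,0$, so that once the chain pins down the residues of $x$ and $y$ there is no room left to resolve all three.
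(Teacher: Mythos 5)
Your proof is correct --- I checked the load-bearing claims --- but it is organized genuinely differently from the paper's. The paper anchors on the first block: without loss of generality $x\in R_{4k+4}$, then four positional cases ($x=4k+5+4\ell$ with $\ell\geq1$, $x=4k+5$, $x=4k+4$, $x=4k+4-4\ell$); in each it exhibits a three-block $S\cup\{x\}$-cluster built from $\{4k+5,4k+6\}$ or $A_1=\{4k+8,4k+9\}$ together with two of $B_1,B_2,B_3$ or a chain block $A_{k+1-\ell}=\{8k+8-4\ell,8k+9-4\ell\}$, and then eliminates every admissible $y$ by direct distance computation, the hypothesis $X\subset V-\{0,1,\dotsc,4m'+1\}$ entering only implicitly through the truncated ranges $0\leq\ell_1\leq k-m$ (respectively $k-m'$). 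You instead exploit $4\mid n$ to run a residue pigeonhole: the chain admits only residues $0,1$ while $B_1,B_2,B_3$ forbid $2,1,0$. Your Case II intersection $\bigcap_{i=1}^{k}R_{4k+4+4i}=\{1,\,4k+4,\,4k+8,\,8k+5\}$ is correct (with $1$ excluded by the hypothesis), as are the facts that $4k+8$ is at distance $1$ from all of $A_0$ (so the remaining vertex would have to resolve the triple alone, contradicting Lemma~\ref{k-2resolving}), that $4k+4$ and $8k+5$ fail both $\{4k+5,4k+6\}$ and $B_1$ (distances $(k+1,k+1)$ and $(1,1)$ to $\{8k+7,1\}$), and that residue-$2$ vertices never resolve $B_1$; your Case I disjointness is also right: residue-$1$ resolvers of $\{4k+5,4k+6\}$ are exactly $\{5,9,\dotsc,4k+5\}$, while the admissible residue-$1$ resolvers of $B_3$ start only at $4k+4m'+9$, the wrapped part $\{1,5,\dotsc,4m'-3\}$ of the far arc being precisely what the hypothesis on $X$ deletes --- this is where your proof uses the constraint most transparently. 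What your route buys is fewer cases and a structural explanation of why two vertices cannot suffice (the residues $2,1,0$ are all forbidden once the chain pins $x,y$ down); its cost is exactly the global bookkeeping you flag: the near/far-arc residue rules must be verified across the antipodal window, where both distances equal $k+1$ --- this only removes resolvers, so the necessary forbidden-residue conditions survive --- and the $R$-set intersection must be computed exactly. When writing it up, do state explicitly the degenerate sub-case where both $x,y$ have residue in $\{2,3\}$ (then $A_1$ itself is unresolved; note $k\geq2$ follows from $1\leq m<m'\leq k$), which your dichotomy into Cases I and II currently passes over in silence.
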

\begin{proof}
  Suppose $X=\{x,y\}$ resolves $A$. Without loss of generality $x$ resolves $4k+4$ and $4k+5$, i.e.,
  \[x \in R_{4k+4}=\{4k+4-4\ell,\,4k+5+4\ell\,:\,0 \leq \ell \leq k\}.\] We distinguish several cases,
  and for each possible choice of $x$ and $y$ we show that there remains an unresolved pair or
  vertices.
  \begin{description}
  \item[Case 1] $x=4k+5+4\ell;\ 1 \leq \ell \leq k$. For $S'=S\cup \{x\}$, the set
    \[B=(\{4k+5,4k+6\},\,\{8k+7,1\},\,\{4m+2,4m+4\})\]
    is an $S'$-cluster. Let $y \in R_{4k+5}=\{4k+5-4\ell_1,\,4k+6+4\ell_1\,:\, 0 \leq \ell_1 \leq k\}$.
    \begin{itemize}
    \item If $y=4k+5-4\ell_1$, $0 \leq \ell_1 \leq k-m$, then
      $d(y,4m+2)=d(y,4m+4)=k-\ell_1-m+1$. %This implies $\{4m+2,4m+4\}$ is an $S''$-block.
    \item If $y=4k+6+4\ell_1$, $0 \leq \ell_1 \leq k$, then
      $d(y,8k+7)=d(y,1)=k-\ell_1+1$. %This implies $\{8k+7,1\}$ is an $S''$-block for $S''=S'\cup\{y\}$. Which is the required contradiction.
    \end{itemize}
  \item[Case 2] $x=4k+5$. For $S'=S\cup\{x\}$, the set
    \[B=(\{4k+8,4k+9\}, \{8k+7,1\},\{4m'+1,4m'+3\})\]
    is an $S'$-cluster. Let $y \in R_{4k+8}$.
    \begin{itemize}
    \item If $y=4k+8-4\ell_1$, $0 \leq \ell_1 \leq k-m'$, then
      $d(y,4m'+1)=d(y,4m'+3)=k-\ell_1-m'+1$. %This implies $\{4m'+1,4m'+4\}$ is an $S''$-block for $S''=S'\cup\{y\}$. Which is the required contradiction.
    \item If $y=4k+9+4\ell_1$, $0 \leq \ell_1 \leq k$, then
      $d(y,8k+7)=d(y,1)=k-\ell_1+1$. %This implies $\{8k+7,1\}$ is an $S''$-block for $S''=S'\cup\{y\}$. Which is the required contradiction.
    \end{itemize}
  \item[Case 3] $x=4k+4$. For $S'=S\cup \{x\}$, the set
    \[B=(\{4k+5,4k+6\},\{8k+7,1\},\{4m'+1,4m'+3\})\]
    is an $S'$-cluster. Let $y \in R_{4k+5}=\{4k+5-4\ell_1,4k+6+4\ell_1: 0 \leq \ell_1 \leq k\}$.
    \begin{itemize}
    \item If $y=4k+5-4\ell_1$, $0 \leq \ell_1 \leq k-m'$, then
      $d(y,4m'+1)=d(y,4m'+3)=k-\ell_1-m'+1$. %This implies $\{4m'+1,4m'+3\}$ is an $S''$-block for $S''=S'\cup\{y\}$. Which is the required contradiction.
    \item If $y=4k+6+4\ell_1$, $0 \leq \ell_1 \leq k$, then
      $d(y,8k+7)=d(y,1)=k-\ell_1+1$. %This implies $\{8k+7,1\}$ is an $S''$-block for $S''=S'\cup\{y\}$. Which is the required contradiction.
    \end{itemize}
  \item[Case 4] $x=4k+4-4\ell, 1 \leq \ell \leq k$. For $S'=S\cup\{x\}$ the set
    \[B=(\{4k+5,4k+6\},\{4m'+1,4m'+3\}, \{8k+8-4\ell,8k+9-4\ell\})\]
    is an $S'$-cluster. Let $y \in R_{4k+5}=\{4k+5-4\ell_1,4k+6+4\ell_1: 0 \leq \ell_1 \leq k\}$.
    \begin{itemize}
    \item If $y=4k+5-4\ell_1$, $0 \leq \ell_1 \leq k-m'$, then
      $d(y,4m'+1)=d(y,4m'+3)=k-\ell_1-m'+1$. %This implies $\{4m'+1,4m'+3\}$ is an $S''$-block for $S''=S'\cup\{y\}$. Which is the required contradiction.
    \item If $y=4k+6+4\ell_1$, $0 \leq \ell_1 \leq k$, then
      $d(y,8k+8-4\ell)=d(y,8k+9-4\ell)=k+1$.\qedhere %This implies $\{8k+8-4\ell,8k+9-4\ell\}$ is an $S''$-block for $S''=S'\cup\{y\}$. Which is the required contradiction.
    \end{itemize}
  \end{description}
\end{proof}     

\begin{lemma}\label{7lemmaAk}
  Let $G=C(n,\pm(1,2,3,4))$ be of order $n=8k+7$. Let $A=(A_0,B_0,A_1,B_1,\dotsc,A_k,B_k,A_{k+1})$,
  where $A_i=\{a+4i,a+4i+1\}$ and $B_i=\{a+2+4i,a+3+4i\}$ for $0 \leq i \leq k+1$, be an $S$-cluster
  for some $S$. If $X$ resolves $A$, then $|X|\geq 3$.
\end{lemma}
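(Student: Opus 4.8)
The plan is to reduce the whole statement to a single counting fact: \emph{every vertex $x\in V$ resolves at most $k+1$ of the blocks making up $A$.} After translating so that $a=0$, the blocks of $A$ are exactly the $2k+3$ consecutive pairs $P_j=\{2j,\,2j+1\}$ with $0\le j\le 2k+2$ — the $A_i$ account for the even indices $j=2i$ and the $B_i$ for the odd indices $j=2i+1$ — and these tile the interval $\{0,1,\dots,4k+5\}$. Resolving the cluster $A$ means resolving each such pair. So if some $X=\{x,y\}$ resolved $A$, then every $P_j$ would be distinguished by $x$ or by $y$; but granting the counting fact, $x$ and $y$ together distinguish at most $2(k+1)=2k+2<2k+3$ of the pairs, leaving some $P_j$ unresolved, a contradiction. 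Hence $\lvert X\rvert\ge 3$.

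So the real work is the single-vertex bound. By the description of $R_i$ given in Section~\ref{sec:notation}, a vertex $x$ resolves $P_j=\{2j,2j+1\}$ if and only if $x\in R_{2j}$, that is, $2j\equiv x+4\ell$ or $2j\equiv x-1-4\ell\pmod n$ for some $0\le\ell\le k$. Consequently the set of left endpoints $2j$ of the pairs that $x$ resolves is
\[
  \{x+4\ell\bmod n:\,0\le\ell\le k\}\ \cup\ \{x-1-4\ell\bmod n:\,0\le\ell\le k\},
\]
a union of two arithmetic progressions of $k+1$ terms each with common difference $4$. The bound is therefore equivalent to the assertion that at most $k+1$ of these $2k+2$ values are even and lie in $\{0,2,\dots,4k+4\}$ (these being precisely the left endpoints of the $P_j$).

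I would finish by a short case analysis on the location of $x$. If $x\in\{0,\dots,4k+5\}$, then the increasing progression $\{x+4\ell\}$ never wraps around and consists of values of the parity of $x$, while every wrapped term of the decreasing progression $\{x-1-4\ell\}$ exceeds $4k+4$; thus only one progression can land on the target even positions, and since it has only $k+1$ terms the bound is immediate. The delicate case is $x\in\{4k+6,\dots,8k+6\}$, and there only $x$ odd can produce any contribution: the decreasing progression supplies its unwrapped (even) terms that are $\le 4k+4$, while the increasing progression supplies its \emph{wrapped} terms, which become even because $n=8k+7$ is odd. This is exactly where the naive ``residue modulo $4$ determines the block type'' heuristic breaks down, and it is the main obstacle: both progressions can now hit the target positions simultaneously. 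It is overcome by a direct count — the two contributions have sizes $k+1-\lceil(x-4k-5)/4\rceil$ and $k+1-\lceil(8k+7-x)/4\rceil$, and
\[
  \Bigl\lceil\tfrac{x-4k-5}{4}\Bigr\rceil+\Bigl\lceil\tfrac{8k+7-x}{4}\Bigr\rceil\ \ge\ \frac{(x-4k-5)+(8k+7-x)}{4}=\frac{4k+2}{4}=k+\tfrac12,
\]
so, the two ceilings being integers, their sum is at least $k+1$ and the two contributions total at most $2(k+1)-(k+1)=k+1$. This establishes the single-vertex bound in every case, and together with the reduction above it proves the lemma.
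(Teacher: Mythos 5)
Your proof is correct, and it takes a genuinely different route from the paper's. The paper proves Lemma~\ref{7lemmaAk} by assuming $X=\{x,y\}$, placing $x\in R_1$ without loss of generality, and then, in two cases ($x=8k+8-4\ell$ and $x=2+4\ell$), exhibiting a pair of adjacent blocks $(\{b,b+1\},\{b+2,b+3\})$ that $x$ leaves unresolved; Observation~\ref{observation_0123}, i.e.\ the fact that $R_b\cap R_{b+2}=\emptyset$, then shows the single remaining vertex $y$ cannot finish the job. You instead establish the sharp quantitative fact that \emph{any} vertex resolves at most $k+1$ of the $2k+3$ pairs tiling $\{a,\dotsc,a+4k+5\}$, and conclude by pigeonhole since $2(k+1)<2k+3$. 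I checked your counting: the translation via the paper's characterization of $R_i$ from Section~\ref{sec:notation} is exact; in the range $x\in\{0,\dotsc,4k+5\}$ the parity split and the observation that wrapped terms of the decreasing progression land above $4k+4$ are both correct; in the range $x\in\{4k+6,\dotsc,8k+6\}$ your identification of the two contributing subprogressions for odd $x$ (and of zero contribution for even $x$) is right, and the ceiling estimate $\lceil u/4\rceil+\lceil v/4\rceil\geq\lceil(u+v)/4\rceil=k+1$ for $u+v=4k+2$ closes the case. The only cosmetic point is that your two size formulas $k+1-\lceil(x-4k-5)/4\rceil$ and $k+1-\lceil(8k+7-x)/4\rceil$ can be negative, in which case the true count is zero and the bound holds a fortiori; this should be said explicitly but is not a gap. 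Comparing the two approaches: yours yields a stronger structural statement (a per-vertex ``resolving capacity'' of $k+1$, which also shows the $S$-cluster hypothesis is never needed), avoids the case analysis on $x$, and would adapt to any chain of consecutive pairs longer than $2k+2$; the paper's argument is less quantitative but deliberately reuses Observation~\ref{observation_0123}, the workhorse shared by several other lemmas in Section~\ref{sec:auxiliary}, which keeps the overall proof of Theorem~\ref{mainthm} uniform in style.
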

\begin{proof}
  By symmetry, it is enough to prove it for $A_i=\{1+4i,2+4i\}$ and $B_i=\{3+4i,4+4i\}$. Suppose
  $X=\{x,y\}$ resolves $A$.  Without loss of generality
  $x \in R_1=\{8k+8-4\ell,\,2+4\ell\,:\,0 \leq \ell \leq k\}$. Let $S'=S\cup\{x\}$.
  \begin{description}
  \item[Case 1] $x=8k+8-4\ell$, $0 \leq \ell \leq k$. Then
    \begin{align*}
      d(x,3+4(k-\ell)) &= d(x,4+4(k-\ell))=k+1, \\
      d(x,1+4(k+1-\ell)) &= d(x,2+4(k+1-\ell))=k+1.
    \end{align*}
    Hence, $B=(B_{k-\ell},\,A_{k+1-\ell})$ is an $S'$-cluster, an by
    Observation~\ref{observation_0123}, $\lvert X-\{x\}\rvert\geq 2$.
  \item[Case 2] $x=2+4\ell$, $0 \leq \ell \leq k$. Then
    \begin{align*}
      d(x,3+4\ell)&=d(x,4+4\ell)=1, \\
      d(x,5+4\ell)&=d(x,6+4\ell)=1.
    \end{align*}
    Hence, $B=(B_{\ell}, A_{\ell+1})$ is an $S'$-cluster, and by Observation~\ref{observation_0123},
    $\lvert X-\{x\}\rvert\geq 2$.\qedhere
  \end{description}
\end{proof}

\begin{lemma}\label{7lemmaAkBk}
  Let $G=C(n;\pm\{1,2,3,4\})$ be of order $n=8k+7$. Let $A=(A_0,A_1,\dotsc,A_k,B_{1},B_{2})$ be an
  $S$-cluster, where, for some $m' \in\{1,\dotsc,k\}$,
  \begin{align*}
	A_i &= \{3+4i,\,4+4i\}\text{ for } 0 \leq i \leq k-1, & A_k &= \{3+4k,\,4+4k,\,5+4k\},\\
	B_{1}&=\{3+4(k+m'),\,4+4(k+m')\},& B_{2} &= \{5+4(k+m'),\,6+4(k+m')\}.
  \end{align*}
  If $X$ resolves $A$, then $\lvert X\rvert\geq 3$.
\end{lemma}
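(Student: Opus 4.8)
The plan is to suppose, for contradiction, that $X=\{x,y\}$ resolves $A$, and to rule out every possible configuration of $x$ by a short case analysis. Since the triple $A_k=\{4k+3,4k+4,4k+5\}$ is resolved, some element of $X$ must distinguish $4k+4$ from $4k+5$; hence, without loss of generality, $x\in R_{4k+4}$, and by the explicit description of $R_{4k+4}$ this means either $x=4+4j$ or $x=4k+5+4j$ for some $0\le j\le k$. These two forms split the argument into the two cases below.

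In the first case $x=4+4j$. Here I would evaluate the four distances from $x$ to $B_1\cup B_2$ and verify that $d(x,4k+4m'+3)=d(x,4k+4m'+4)$ and $d(x,4k+4m'+5)=d(x,4k+4m'+6)$, i.e.\ that $x$ resolves neither $B_1$ nor $B_2$. Then $(B_1,B_2)$ is an $(S\cup\{x\})$-cluster of the form $(\{a,a+1\},\{a+2,a+3\})$ with $a=4k+4m'+3$, and since $r=7$ lies in the admissible range of Observation~\ref{observation_0123}, no single vertex resolves it. Thus $y$ cannot resolve it either, contradicting $\lvert X-\{x\}\rvert=1$.

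The second case $x=4k+5+4j$ carries the real content. Abbreviating $P_L=\{4L+3,4L+4\}$, I would first show by direct computation that this $x$ resolves none of $P_0,\dots,P_k$ — these are the blocks $A_0,\dots,A_{k-1}$ together with the bottom pair $\{4k+3,4k+4\}$ of the triple — and also fails on $B_1=P_{k+m'}$. Since $x$ separates $4k+5$ from both $4k+3$ and $4k+4$ while leaving $4k+3,4k+4$ at equal distance, finishing the triple forces $y$ to resolve $P_k$. Hence $y$ alone must resolve every $P_L$ with $L\in\{0,1,\dots,k\}\cup\{k+m'\}$. The key claim is that a single vertex resolves the pairs $P_L$ for $L$ ranging over at most $k+1$ cyclically consecutive indices; granting this, the resolved set of $y$ must contain the $k+1$ consecutive indices $0,\dots,k$ and is therefore exactly $\{0,\dots,k\}$, which excludes $k+m'$ (as $1\le m'\le k$) — the desired contradiction.

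The main obstacle is the interval claim of the second case. I would argue that a vertex $z$ resolves $P_L$ precisely when a distance-band boundary of $z$ separates $4L+3$ from $4L+4$, and that such boundaries align with the pairs $P_L$ only on the forward side when $z\equiv 3\pmod 4$ and on the backward side when $z\equiv 0\pmod 4$ (and essentially nowhere when $z\equiv 1,2$), each case producing an interval of $k+1$ indices. The delicate point, typical of these circulant estimates, is the ``short-way'' bookkeeping modulo $n=8k+7$: because $n\equiv 3\pmod4$ the residues shift upon wrapping, so the alignment has to be checked carefully near the antipodal turnaround and near the vertex $0$, exactly where an off-by-one could spuriously enlarge the interval. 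I would therefore isolate the interval claim as a one-off auxiliary observation proved via the distance formula, after which both cases reduce to the routine (if tedious) distance evaluations sketched above.
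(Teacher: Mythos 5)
Your overall reduction and your Case 1 are sound --- Case 1 in fact coincides with the paper's proof (the paper anchors on $x\in R_3$ rather than $R_{4k+4}$, an immaterial difference, and likewise kills $(B_1,B_2)$ via Observation~\ref{observation_0123}) --- but the key interval claim driving your Case 2 is false as stated. A vertex $z$ resolves $P_L=\{4L+3,\,4L+4\}$ iff $z\in R_{4L+3}=\{4L+3-4j\,:\,0\leq j\leq k\}\cup\{4L+4+4j\,:\,0\leq j\leq k\}$, and since $\gcd(4,8k+7)=1$ \emph{both} congruence branches are solvable in $L$ for every $z$, regardless of $z\bmod 4$: with $4^{-1}\equiv 2k+2\pmod{8k+7}$ and $c\equiv(z-3)(2k+2)$, the set of indices $L$ resolved by $z$ is $\{c,c+1,\dotsc,c+k\}\cup\{c+5k+5,\dotsc,c+6k+5\}$ modulo $8k+7$, i.e.\ two disjoint intervals of length $k+1$ each, for \emph{every} vertex $z$. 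Concretely, for $k=1$, $n=15$, the vertex $z=1\equiv 1\pmod 4$ resolves $P_L$ exactly for $L\in\{2,3\}\cup\{7,8\}$, not ``essentially nowhere''; and even for $z\equiv 3\pmod 4$ the backward branch contributes a second full interval after wrapping. So the wraparound you flag as a potential off-by-one does not slightly perturb a single interval --- it doubles the resolved set --- and your inference ``the resolved set of $y$ \dots is therefore exactly $\{0,\dotsc,k\}$'' does not follow from the claim as you state it.

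The gap is repairable, after which your route genuinely diverges from the paper's. The two intervals are separated by gaps of lengths $4k+4$ and $2k+1$, both nonempty for $k\geq 1$, so the cyclic interval $\{0,\dotsc,k\}$, being connected, can be covered by their union only by lying inside one of them; since all three sets have size $k+1$, either $\{c,\dotsc,c+k\}=\{0,\dotsc,k\}$ (forcing $c=0$, the other interval being $\{5k+5,\dotsc,6k+5\}$) or $\{c+5k+5,\dotsc,c+6k+5\}=\{0,\dotsc,k\}$ (the other interval being $\{3k+2,\dotsc,4k+2\}$), and in both cases $k+m'\in\{k+1,\dotsc,2k\}$ lies in neither interval --- your desired contradiction. (Your preliminary computations in Case 2, that $x=4k+5+4j$ fails on all of $P_0,\dotsc,P_k$ and on $B_1=P_{k+m'}$, do check out.) With this correction your proof closes, and it is more global than the paper's: the paper's Case 2 instead observes that $x=8k+10-4\ell$ also leaves $\{4k+4,\,4k+5\}\subset A_k$ unresolved, hence forces $y\in R_{4k+4}$, and then verifies by a short distance computation that every $y\in R_{4k+4}$ satisfies $d(y,4k+4m'+3)=d(y,4k+4m'+4)$, so $B_1$ survives as an $S\cup\{x,y\}$-block. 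The paper's argument needs less bookkeeping; yours, once corrected, buys the stronger structural fact describing exactly which pairs $P_L$ any single vertex can resolve.
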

\begin{proof}
  Suppose $X=\{x,y\}$ resolves $A$. Without loss of generality
  \[x \in R_3=\{4+4\ell,\, 8k+10-4\ell\,:\, 0 \leq \ell \leq k\}.\]
  We distinguish the following cases.
  \begin{description}
  \item[Case 1] $x=4+4\ell$, $0\leq\ell\leq k$.  For all $u \in B_1$ and all $v \in
    B_2$,
    \begin{align*}
      d(x,u) &=
               \begin{cases}
                 k+m'-\ell & \text{if } \ell \geq m', \\
                 k+\ell-m'+2&\text{if } \ell < m',
               \end{cases} &
                             d(x,v) &=
                                      \begin{cases}
                                        k+m'-\ell+1 & \text{if } \ell \geq m', \\
                                        k+\ell-m'+2&\text{if } \ell < m'.
                                      \end{cases}
    \end{align*}
    Hence, $B=(B_1,\,B_2)$ is an $S'$-cluster for $S'=S\cup\{x\}$.  By
    Observation~\ref{observation_0123}, $\lvert X-\{x\}\rvert\geq 2$.
  \item[Case 2] $x=8k+10-4\ell$, $0 \leq \ell \leq k$. For $S'=S\cup\{x\}$,
    \[B = (B_1,\,A_k-\{3+4k\}) = (\{3+4k+4m',4+4k+4m'\},\,\{4+4k,5+4k\})\] is an
    $S'$-cluster. Without loss of generality
    $y \in R_{4+4k}=\{4+4k-4\ell_1,\,5+4k+4\ell_1\,:\,0 \leq \ell_1 \leq k\}$. If $y=4+4k-4\ell_1$,
    $0 \leq \ell_1 \leq k$, then
    \[d(y,3+4k+4m')=d(y,4+4k+4m')=
      \begin{cases}
        m'+\ell_1& \text{if } m'+\ell_1 \leq k+1,\\
        2k+2-(m'+\ell_1) & \text{if } m'+\ell_1 >k+1.
      \end{cases}\]
    If $y=5+4k+4\ell_1$, $0 \leq \ell_1 \leq k$, then
    \[d(y,3+4k+4m')=d(y,4+4k+4m')=
      \begin{cases}
        m'-\ell_1& \text{if } \ell_1 < m',\\
        \ell_1-m'+1 & \text{if } \ell_1 \geq m'.
      \end{cases}\]
    In both cases, $B_1$ is an $S\cup\{x,y\}$-block, which is the required
    contradiction.\qedhere
  \end{description}
\end{proof}
\begin{lemma}\label{2-22-3lemma7}
  Let $G=C(n, \pm \{1,2,3,4\})$ with $n=8k+7$, and let
  \[A=(\{1,2\},\{4k+2,\,4k+3\},\{4k+4,\,4k+5\},\{4(k+\ell)+j\,:\,j \in\{7,8,9\}\})\]
  be an $S$-cluster with $\ell\in\{0,1,\dotsc,k\}$. If $X\subset V-\{2,6,\dotsc,2+4\ell\}$ resolves $A$, then
  $\lvert X\rvert\geq 3$.
\end{lemma}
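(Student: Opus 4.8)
The plan is to assume for contradiction that a two-element set $X=\{x,y\}$ resolves $A$, and to show that the constraints coming from the first three blocks force $x$ into a position from which $y$ can no longer finish resolving the triple. Write $A_1=\{1,2\}$, $A_2=\{4k+2,4k+3\}$, $A_3=\{4k+4,4k+5\}$, and $A_4=\{4(k+\ell)+7,\,4(k+\ell)+8,\,4(k+\ell)+9\}$. First I would record the three relevant resolving sets from the formula for $R_i$, in particular
\[R_1=\{2+4s:0\le s\le k\}\cup\{1\}\cup\{8k+4-4s:0\le s\le k-1\},\]
together with $R_{4k+2}$ and $R_{4k+4}$. The two facts that drive everything are the disjointness statements
\[R_{4k+2}\cap R_{4k+4}=\emptyset\qquad\text{and}\qquad R_{4k+4}\cap R_1=\emptyset,\]
which follow by comparing residues modulo $4$ and the ranges of the arithmetic progressions involved; the first of these is exactly the computation in Observation~\ref{observation_0123} applied to the adjacent-pair cluster $(A_2,A_3)$.

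Next I would pin down $x$. Since $X$ resolves the pairs $A_2$ and $A_3$, one element of $X$ lies in $R_{4k+2}$ and one in $R_{4k+4}$; as these sets are disjoint, these are two distinct vertices, so after relabelling $x\in R_{4k+2}$ and $y\in R_{4k+4}$. Because $X$ must also resolve $A_1$, one of $x,y$ lies in $R_1$; but $y\in R_{4k+4}$ and $R_{4k+4}\cap R_1=\emptyset$, so necessarily $x\in R_1$. Hence
\[x\in R_{4k+2}\cap R_1=\{2,6,\dots,4k+2\}=\{2+4s:0\le s\le k\}.\]
The hypothesis $X\subset V-\{2,6,\dots,2+4\ell\}$ now removes precisely the values with $s\le\ell$, leaving $x=2+4s$ for some $\ell+1\le s\le k$.

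The final step is the distance computation to $A_4$ for such an $x$. Setting $w=k+\ell-s$, the range $\ell+1\le s\le k$ yields $\ell\le w\le k-1$, and the differences $(4(k+\ell)+7+j)-(2+4s)=4w+5+j$ for $j\in\{0,1,2\}$ all lie in $[5,4k+3]\subseteq[1,\lfloor n/2\rfloor]$. Thus the first branch of the distance formula applies and $d(x,4(k+\ell)+7+j)=\lceil(4w+5+j)/4\rceil=w+2$ for every $j$, so $x$ is equidistant from all three vertices of $A_4$. Consequently $A_4$ is an $(S\cup\{x\})$-block, and since $X$ resolves $A$ the single remaining vertex $y$ would have to resolve the $3$-element set $A_4$; by Lemma~\ref{k-2resolving} this requires at least $2$ vertices, the desired contradiction.

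I expect the main (though still routine) obstacle to lie in verifying the two disjointness identities and, relatedly, in confirming that the forbidden set is exactly the right one. The excluded vertices $2+4s$ with $s\le\ell$ are precisely those for which $w\ge k$ pushes the differences past $\lfloor n/2\rfloor$ into the second branch of the distance formula, where $d(x,4(k+\ell)+9)$ drops by one relative to the other two vertices of $A_4$; for such $x$, the vertex $x$ would \emph{partially} resolve $A_4$, splitting it into a pair that $y$ could finish off, and the conclusion would fail. Thus the hypothesis is designed exactly to exclude these values, and the crux of the write-up is making this threshold phenomenon precise.
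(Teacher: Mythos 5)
Your proof is correct; I verified the set computations it rests on. For $n=8k+7$, reducing mod $4$ and checking ranges gives $R_{4k+2}=\{2,6,\dotsc,4k+2\}\cup\{4k+3,4k+7,\dotsc,8k+3\}$, $R_{4k+4}=\{4,8,\dotsc,4k+4\}\cup\{4k+5,4k+9,\dotsc,8k+5\}$ and $R_1=\{1\}\cup\{4k+8,4k+12,\dotsc,8k+4\}\cup\{2,6,\dotsc,4k+2\}$, so both of your disjointness identities hold, $R_{4k+2}\cap R_1=\{2+4s\,:\,0\leq s\leq k\}$ as claimed, and your terminal computation $d(2+4s,\,4(k+\ell)+7+j)=k+\ell-s+2$ for $j\in\{0,1,2\}$ agrees with the paper's. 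Your organization, however, genuinely differs from the paper's. The paper takes $x\in R_1$ minus the forbidden set without loss of generality and splits into three cases, $x=1$, $x=8k+8-4m$ with $1\leq m\leq k$, and $x=2+4m$ with $\ell+1\leq m\leq k$; in the first two it computes distances to show that the pair cluster $(\{4k+2,4k+3\},\{4k+4,4k+5\})$ survives as an $S\cup\{x\}$-cluster and then cites Observation~\ref{observation_0123}, and only the third case is your computation. You instead run the resolving obligations in the opposite direction: disjointness of $R_{4k+2}$ and $R_{4k+4}$ forces the two elements of $X$ to split the duties for the middle pairs, and $R_{4k+4}\cap R_1=\emptyset$ then pins $x$ into $R_{4k+2}\cap R_1$, collapsing the paper's three cases into one and absorbing Observation~\ref{observation_0123} as the $R$-set disjointness computation it really is. What the paper's template buys is uniformity with its many sibling lemmas, which all share the same case skeleton; what yours buys is brevity (no distance computations toward $A_2$ and $A_3$ at all) and a structural explanation of why the excluded progression $\{2,6,\dotsc,2+4\ell\}$ is exactly the dangerous one, a point your closing remarks get right: for $s\leq\ell$ the differences exceed $\lfloor n/2\rfloor=4k+3$, the second branch of the distance formula applies, and $x$ splits the triple. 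One small point to make explicit in a final write-up: when $\ell=k$ the surviving range $\ell+1\leq s\leq k$ is empty, so the contradiction is immediate because no admissible $x$ exists; your phrasing covers this only implicitly.
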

\begin{proof}
  Suppose $X=\{x,y\}\subset V-\{2,6,\dotsc,2+4\ell\}$ is a resolving set of $A$. Without loss of
  generality, 
  \[ x \in R_1-\{2,6,\dotsc,2+4\ell\}=\{8k+8-4m:0\leq m \leq
      k\}\cup\{2+4m:\ell+1\leq m \leq k\}.\]
\begin{description}
\item[Case 1] $x=1$. From $d(1,4k+2)=d(1,4k+3)=k+1$ and $d(1,4k+4)=d(1,4k+5)=k+1$ it follows that
  $A'=(\{4k+2,\,4k+3\},\,\{4k+4,\,4k+5\})$ is an $S\cup\{x\}$-cluster, and by
  Observation~\ref{observation_0123}, $|X-\{x\}|\geq 2$.
\item[Case 2] $x=8k+8-4m$, $1\leq m \leq k$. From $d(x,4k+2)=d(x,4k+3)=k-m+2$ and
  $d(x,4k+4)=d(x,4k+5)=k-m+1$ it follows that $A'=(\{4k+2,\,4k+3\},\,\{4k+4,\,4k+5\})$ is an
  $S\cup\{x\}$-cluster, and by Observation~\ref{observation_0123}, $|X-\{x\}|\geq 2$.
\item[Case 3] $x=2+4m$, $\ell+1 \leq m \leq k$. From
  $d(x,4k+4\ell+9)=d(x,4k+4\ell+8)=d(x,4k+4\ell+7)=k+l-m+2$ it follows that
  $A'=\{4k+4\ell+j\,:\,j\in \{7,8,9\}\}$ is an $S\cup\{x\}$-cluster, and by
  Lemma~\ref{k-2resolving}, $|X-\{x\}|\geq 2$.\qedhere
\end{description}
\end{proof}
\begin{lemma}\label{2-4-3lemma_r56}
  Let $n=8k+r$, with $r \in\{2,5,6\}$, $G=C(n,\pm\{1,2,3,4\})$, and let
  \[A=\left(\{a,a\pm1\},\,\{a\pm j\pm4\ell\,:\,j\in\{2,3,4\}\}\right)\]
  be an $S$-cluster for some
  $S$, where $0 \leq \ell \leq k$. If $X \subseteq V$ resolves $A$, then $\lvert X\rvert\geq 3$.
\end{lemma}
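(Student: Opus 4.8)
The plan is to imitate the proof of Lemma~\ref{2-4-3lemma_1}. By the rotational and reflectional symmetry of $C(n,\pm\{1,2,3,4\})$ it suffices to treat the representative cluster
\[A=(\{0,1\},\,\{2+4\ell,\,3+4\ell,\,4+4\ell\}),\qquad 0\leq\ell\leq k.\]
Assume for contradiction that some $X=\{x,y\}$ resolves $A$. Because $X$ must distinguish the pair $\{0,1\}$, one of its two elements lies in $R_0=\{1+4m:0\leq m\leq k\}\cup\{-4m:0\leq m\leq k\}$, and without loss of generality this element is $x$. The target is to prove that the second block $\{2+4\ell,3+4\ell,4+4\ell\}$ is an $S\cup\{x\}$-block; once that is shown, the single remaining vertex $y$ would have to resolve a three-element subset of the window $\{2+4\ell,\dots,2+4\ell+4\}$ on its own, contradicting Lemma~\ref{k-2resolving}.

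First I would substitute the two shapes $x=1+4m$ and $x=-4m$ into the distance formula and compute $d(x,j+4\ell)$ for $j\in\{2,3,4\}$, organising the result into a short case distinction according to whether $x$ lies ahead of or behind the block and whether the shorter arc wraps around, exactly as in the four-line formula of Lemma~\ref{2-4-3lemma_1}. The mechanism that makes the block property hold is that the three relevant arc-lengths are consecutive integers of the form $4q+1,4q+2,4q+3$ (respectively $4q+2,4q+3,4q+4$ for $x=-4m$); as long as the shorter arc keeps the same direction across the whole block, these three lengths lie between two consecutive multiples of $4$, so $\lceil\,\cdot\,/4\rceil$ returns a single value and all three targets are equidistant from $x$. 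I would check that for $r\in\{5,6\}$ this holds for every $x\in R_0$ and every $0\leq\ell\leq k$, so the contradiction via Lemma~\ref{k-2resolving} closes these two residues uniformly.

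The main obstacle is the behaviour at the diametral point, where $r=2$ is genuinely different. Here $\lfloor n/2\rfloor=4k+1\equiv1\pmod4$, so a boundary of the ceiling function sits exactly opposite $0$; consequently, when $\ell=k$ the block $\{4k+2,4k+3,4k+4\}$ straddles that boundary and the candidate $x=1$ (and its mirror image $x=-4k$) separates $4k+2$ from $\{4k+3,4k+4\}$ without resolving the block, so the clean count is unavailable in this one configuration. To close it I would fall back on the standing hypothesis that $A$ is an $S$-cluster: I expect that for $n=8k+2$ and $\ell=k$ every vertex equidistant from $0$ and $1$ is automatically equidistant from $4k+2,4k+3$ and $4k+4$ as well, and moreover equidistant between the two blocks, so the two blocks can never be separated by $S$ and this $A$ does not arise as a genuine $S$-cluster. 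Verifying that last equidistance claim (for all $k$, from the diametral symmetry of the graph) is the step I expect to require the most care; with it, the exceptional configuration is vacuous and $\lvert X\rvert\geq3$ follows in every admissible case.
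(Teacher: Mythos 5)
Your main line coincides with the paper's own proof: reduce by symmetry to $A=(\{0,1\},\{2+4\ell,3+4\ell,4+4\ell\})$, take $x\in R_0$, show via the ceiling computation that the second block is an $S\cup\{x\}$-block, and finish with Lemma~\ref{k-2resolving}. Your verification that this works uniformly for $r\in\{5,6\}$, and for $r=2$ whenever the shorter arc keeps one direction across the block, is correct. Moreover, your diagnosis of the exceptional configuration is sharper than the paper's: for $n=8k+2$ and $\ell=k$ one indeed has $d(1,4k+2)=k+1$ but $d(1,4k+3)=d(1,4k+4)=k$, so the five-case distance formula displayed in the paper's proof is actually wrong for $x=1$ (its case $m\leq\ell$) and for $x=-4k$ (which coincides with the block vertex $2+4k$); the published proof silently glosses over exactly the corner you found.

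However, your proposed repair cannot work, because the exceptional configuration is not vacuous. For $n=8k+2$ with $k\geq 2$, take $S=\{8k+1\}$: then $d(8k+1,0)=d(8k+1,1)=1$, while the circular gaps from $8k+1$ to $4k+2,4k+3,4k+4$ are $4k-1,4k-2,4k-3$, giving $d(8k+1,4k+2)=d(8k+1,4k+3)=d(8k+1,4k+4)=k$. So both components are $S$-blocks lying in distinct $\sim_S$-classes (representations $1$ and $k$), i.e.\ $A=(\{0,1\},\{4k+2,4k+3,4k+4\})$ is a genuine $S$-cluster, refuting your claim that every vertex equidistant on the first block is ``equidistant between the two blocks.'' Worse, this $A$ is resolved by the two-element set $X=\{4k+2,4k+3\}$: within the second block the representations are $(0,1)$, $(1,0)$, $(1,1)$, and within the first block $r(0|X)=(k,k)\neq(k+1,k)=r(1|X)$. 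Hence the lemma as stated is simply false for $r=2$, $\ell=k$, $k\geq2$, and no argument can close your exceptional case. The honest fix is to restrict the range to $\ell\leq k-1$ when $r=2$ (excluding the antipodal position), which costs nothing: every application of this lemma in the paper (Cases 3 and 4 of the sections $n\equiv 2$ and $n\equiv 5 \pmod 8$) uses $\ell=1$.
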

\begin{proof}
  By symmetry, we can assume $A=\left(\{0,1\},\,\{j+4\ell\,:\,j\in\{2,3,4\}\}\right)$. Suppose
  $X=\{x,y\}$ resolves $A$, and without loss of generality $x\in R_0=\{1+4m,\,-4m\,:\,0 \leq m \leq
  k\}$. For all $j\in\{2,3,4\}$, $x=1+4m$ and $0\leq m \leq k$,
  \[d(x,j+4\ell)=
    \begin{cases}
      \ell-m+1 & \text{if }x=1+4m \text{ and } m\leq\ell,\\
      m-\ell & \text{if }x=1+4m\text{ and }\ell+1\leq m\leq k,\\
      \ell+m+1 & \text{if }x=-4m \text{ and }\ell+m\leq\ k,\\
      2k-m-\ell & \text{if }x=-4m\text{ and }\ell+ m>k\text{ and } r=2,\\
      2k-m-\ell+1 & \text{if }x=-4m\text{ and }\ell+ m>k\text{ and } r\in \{5,6\}.
    \end{cases}
  \]
  Consequently, $\{j+4\ell\,:\,j\in\{2,3,4\}\}$ is an $S\cup\{x\}$-block, and
  Lemma~\ref{k-2resolving} implies $\lvert X-\{x\}\rvert\geq 2$, which is the required
  contradiction.
\end{proof}
\begin{observation}\label{r5_observation_0156}
  Let $G=C(n,\pm\{1,2,3,4\})$ with $n=8k+r$ and $r\in\{2,5\}$. Let $A=\{a,a+1,a+5,a+6\}$. Then
  $\lvert X\rvert\geq 2$ for every set $X$ that resolves $A$.
\end{observation}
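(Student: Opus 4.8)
The plan is to show that no single vertex can resolve $A$, which immediately yields $\lvert X\rvert\geq 2$. By translation invariance I would assume $a=0$, so that $A=\{0,1,5,6\}$. If a single vertex $x$ resolves $A$, then the four distances $d(x,0),d(x,1),d(x,5),d(x,6)$ must all be distinct; in particular $x$ resolves the consecutive pairs $\{0,1\}$ and $\{5,6\}$, so $x\in R_0\cap R_5$. The strategy is therefore to compute this intersection explicitly and then verify that each of its few elements fails to separate one of the remaining pairs.

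First I would determine $R_0\cap R_5$ from the definitions $R_0=\{-4j\}\cup\{1+4j\}$ and $R_5=\{5-4j\}\cup\{6+4j\}$ (with $0\leq j\leq k$), reducing every entry modulo $n=8k+r$. Splitting each set into its two arithmetic progressions and comparing residues modulo $4$ — while keeping track of the wrap-around terms $-4j\equiv n-4j$ and $5-4j\equiv n+5-4j$ — I expect to find that the only surviving common elements are $1$ and $5$ when $r=5$, together with two extra antipodal candidates $4k+2$ and $4k+6$ when $r=2$. These extra candidates appear for $r=2$ precisely because $n\equiv 2\pmod 4$ aligns the wrapped progression of $R_0$ with the progression $6+4j$ of $R_5$, so this $r$-dependence has to be handled rather than absorbed into one uniform formula.

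Finally I would rule out each candidate by exhibiting a cross-collision via the distance formula. For $x=1$ one gets $d(1,0)=d(1,5)=1$, so the pair $\{0,5\}$ is unresolved, and for $r=2$ the vertex $x=4k+2$ gives $d(x,0)=d(x,5)=k$, again leaving $\{0,5\}$ unresolved. The remaining candidates $x=5$ and (for $r=2$) $x=4k+6$ are then handled for free by the reflection $v\mapsto 6-v$, an automorphism of $G$ that fixes $A$ setwise and swaps these with the cases already checked. Since every element of $R_0\cap R_5$ fails to resolve $A$, no singleton resolves $A$, and thus $\lvert X\rvert\geq 2$.

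The genuinely delicate step is the intersection computation: the modular reductions make the naive residue-mod-$4$ disjointness argument fail whenever a progression wraps past $0$, and the two values of $r$ must be separated. I would also spot-check the smallest cases (e.g.\ $k=1$) directly, since there the candidate vertices can coincide or collapse onto $A$ itself; fortunately the diameter is then so small that four distinct distance values are impossible regardless.
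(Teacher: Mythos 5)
Your proposal is correct and follows the same route as the paper's proof: reduce by symmetry to $A=\{0,1,5,6\}$, observe that a resolving singleton must lie in $R_0\cap R_5$, compute that intersection modulo $n$, and eliminate each candidate by exhibiting an unresolved pair. What is noteworthy is that your version is actually \emph{more} careful than the published one. The paper lists $R_0$ and $R_5$ and asserts that $x\in\{1,5\}$ for both residues, but that is only accurate for $r=5$: when $n=8k+2$ the wrapped progression $\{0,\,8k-2,\dotsc,4k+2\}$ of $R_0$ consists (apart from $0$) of residues $\equiv 2\pmod 4$, which is exactly the residue class of the progression $\{6,10,\dotsc,4k+6\}$ in $R_5$, so in fact $R_0\cap R_5=\{1,\,5,\,4k+2,\,4k+6\}$ --- precisely the two extra ``antipodal'' candidates you predicted. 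Your treatment of them is right: $d(4k+2,0)=d(4k+2,5)=k$ and $d(4k+6,1)=d(4k+6,6)=k$, and the reflection $v\mapsto 6-v$ is indeed an automorphism fixing $A$ setwise that swaps $1\leftrightarrow 5$ and $4k+2\leftrightarrow 4k+6$, so one computation per orbit suffices (for $k=1$, $r=2$ the candidates collapse to $6$ and $0$, and the same checks go through, as you anticipated). So your sketch, once the intersection computation is written out --- and it works exactly as you describe --- not only proves the observation but repairs a small oversight in the paper's own argument; the statement itself is unaffected, since the overlooked candidates fail to resolve $A$ anyway.
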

\begin{proof}
  By symmetry we can assume $A=\{0,1,5,6\}$. Suppose $X=\{x\}$ resolves
  $A$. Then $x\in R_0\cap R_5$ and with
  \begin{align*}
  R_0 &=\{1,5, \dotsc,4k+1\}\cup\{0,8k+r-4,\dotsc,4k+r\},\\
  R_5 &=\{6,10, \dotsc,4k+6\}\cup\{5,1,8k+r-3,\dotsc,4k+r+5\},  
  \end{align*}
it follows that $x \in \{1,5\}$. If $x=1$, then $d(x,0)=d(x,5)=1$; and if $x=5$, then
$d(x,1)=d(x,6)=1$. In both cases we obtain the required contradiction.
\end{proof}
\begin{observation}\label{r5_observation_025_67}
  Let $G=C(n,\pm\{1,2,3,4\})$ with $n=8k+r$ and $r\in
  \{2,5\}$. Let \[A=(\{a,a+2,a+5\},\{a+6,a+7\})\] be an $S$-cluster for some $S$. If $X\subset V$
  resolves $A$, then $\lvert X\rvert\geq 3$.
\end{observation}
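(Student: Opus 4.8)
The plan is to argue by contradiction, in the same style as Observation~\ref{observation_0123} and Lemma~\ref{2-4-3lemma_1}. By vertex-transitivity we may assume $a=0$, so that $A=(\{0,2,5\},\,\{6,7\})$, and we suppose that some $X=\{x,y\}$ with $\lvert X\rvert=2$ resolves $A$. The first reduction I would make is to spell out what ``resolving $A$'' demands pair by pair: since the three vertices $0,2,5$ must receive pairwise distinct representations, \emph{each} of the three internal pairs $\{0,2\}$, $\{0,5\}$, $\{2,5\}$ must be resolved by some element of $X$, and in addition the pair $\{6,7\}$ must be resolved. Thus it is enough to show that no two vertices can simultaneously resolve these four pairs.

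Next I would write down the relevant resolving sets explicitly, reusing the description of $R_i$ from Section~\ref{sec:notation}: the set $R_6=\{6-4j:0\leq j\leq k\}\cup\{7+4j:0\leq j\leq k\}$ resolves $\{6,7\}$, and the analogous sets for $\{0,2\}$, $\{0,5\}$ and $\{2,5\}$ are obtained by intersecting the corresponding distance conditions. Since at least one element of $X$ resolves $\{6,7\}$, we may assume $x\in R_6$; I would then run through the possible positions $x=6-4j$ and $x=7+4j$, in each case computing the triple $(d(x,0),\,d(x,2),\,d(x,5))$ and recording which of the three internal pairs of $\{0,2,5\}$ the vertex $x$ fails to resolve. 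The aim is to show that in every case at least two internal pairs survive, and that after adjoining $x$ to $S$ the surviving configuration is an $(S\cup\{x\})$-cluster of a type already handled — a pair of $(S\cup\{x\})$-blocks whose resolving sets are disjoint, exactly as in Observation~\ref{observation_0123} — so that the single remaining vertex $y$ cannot resolve both, which is the desired contradiction.

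The step I expect to be the main obstacle is the bookkeeping of the wrap-around cases, and this is precisely where the hypothesis $r\in\{2,5\}$ enters. Because $n=8k+r$ is small relative to the spread of $A$, the distance from a far-side vertex $x$ to the vertex $5$ can jump by one compared with the generic estimate, so that such an $x$ resolves more of the triple $\{0,2,5\}$ than expected; in those cases the two-block reduction above has to be replaced by a direct check that every admissible $y$ still leaves one of the four pairs unresolved, and it is here that Observation~\ref{r5_observation_0156} is invoked to rule out the exceptional coincidences peculiar to $r\in\{2,5\}$. Once all positions of $x\in R_6$ have been treated, we conclude that no two-element set resolves $A$, hence $\lvert X\rvert\geq 3$.
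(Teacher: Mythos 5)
There is a genuine problem, but it lies partly in the target itself: the statement as printed, with $\lvert X\rvert\geq 3$, is false, so your plan of deriving a contradiction from $\lvert X\rvert=2$ cannot be carried out. Recall that, by the paper's definition, a set resolves an $S$-cluster if it separates vertices \emph{within each block only}; it need not distinguish vertices of different blocks. Take $a=0$, $n=8k+5$, $S=\{1\}$ (then $d(1,0)=d(1,2)=d(1,5)=1$ and $d(1,6)=d(1,7)=2$, so $A=(\{0,2,5\},\{6,7\})$ is indeed an $S$-cluster), and $X=\{0,2\}\subseteq V-S$. Then $r(0\vert X)=(0,1)$, $r(2\vert X)=(1,0)$, $r(5\vert X)=(2,1)$ are pairwise distinct, and $r(6\vert X)=(2,1)\neq(2,2)=r(7\vert X)$, so $X$ resolves $A$ with $\lvert X\rvert=2$. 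This also pinpoints where your argument breaks: your key claim that for every $x\in R_6$ ``at least two internal pairs survive,'' leaving an Observation~\ref{observation_0123}-type two-block configuration, fails. For instance $x=11=7+4\in R_6$ satisfies $d(x,0)=d(x,2)$ but $d(x,5)=d(x,0)-1$, so $x$ resolves $\{6,7\}$, $\{0,5\}$ and $\{2,5\}$ simultaneously, leaving only the single pair $\{0,2\}$, which one further vertex $y$ resolves. (Your invocation of Observation~\ref{r5_observation_0156} for the wrap-around cases has no counterpart in the actual situation either.)

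What the paper proves — and all that it ever uses, in Lemma~\ref{r5_lemma_01} and Lemma~\ref{r5lemma_03}, where the conclusion drawn is $\lvert X-\{x\}\rvert\geq 2$ — is $\lvert X\rvert\geq 2$; the ``3'' is evidently a typo, consistent with the neighbouring Observations~\ref{r5_observation_0156}, \ref{r5_observation_01257} and~\ref{r5_observation_023568}, which all conclude $\lvert X\rvert\geq 2$. The paper's proof supposes a \emph{single} vertex $x$ resolves $A$, notes $x\in R_6=\{7+4m,\,6-4m\,:\,0\leq m\leq k\}$ exactly as you do, and then, separately for $n=8k+5$ and $n=8k+2$ (the wrap-around cases differ, as you anticipated), exhibits for each candidate $x$ one unresolved pair: generically $d(x,0)=d(x,2)$, with boundary cases such as $x=2$, $x=6$, $x=7+4k$ (and, for $r=2$, $x=3+4k$) handled individually. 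Your first reduction — that all three internal pairs of $\{0,2,5\}$ together with $\{6,7\}$ must be resolved — is correct, and if you redirect your case analysis over $R_6$ at ruling out a one-element resolving set rather than a two-element one, your outline becomes essentially the paper's proof of the corrected statement.
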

\begin{proof}
  It is enough to prove it for $A=(\{0,2,5\},\{6,7\})$. Suppose $X=\{x\}$ resolves $A$. Then
  $x \in R_{6}=\{7+4m,\,6-4m\,:\, 0\leq m \leq k\}$.
\begin{description}
\item[$n=8k+5$] In this case we have the following possibilities for $x$.
\begin{itemize}
	\item  If $x=7+4m,$ $0\leq m \leq k-1$, then $d(x,0)=d(x,2)=m+2$.
	\item  If $x=7+4k$, then $d(x,0)=d(x,2)=k$.
	\item If $x=6$, then $d(x,2)=d(x,5)=1$. 
	\item If $x=2$, then $d(x,0)=d(x,5)=1$.
	\item If $x=6-4m,$ $2 \leq m \leq k$, then $d(x,0)=d(x,2)=m-1$. 
\end{itemize}
\item[$n=8k+2$] In this case we have the following possibilities for $x$.
\begin{itemize}
	\item If $x=7+4m,$ $0\leq m \leq k-2$, then $d(x,0)=d(x,2)=m+2$.
	\item If $x=3+4k$, then $d(x,0)=d(x,5)=k$.
	\item If $x=7+4k$, then $d(x,2)=d(x,5)=k$.
	\item If $x=6$, then $d(x,2)=d(x,5)=1$. 
	\item If $x=2$, then $d(x,0)=d(x,5)=1$. 
	\item If $x=6-4m,$ $2 \leq m \leq k$, then $d(x,0)=d(x,2)=m-1$.\qedhere
\end{itemize}
\end{description}
\end{proof}

\begin{lemma}\label{r5_lemma_01}
  Let $G=C(n,\pm\{1,2,3,4\})$ with $n= 8k+r$, $r\in\{2,5\}$. If $X\subseteq V$ resolves
  \[A=\{a,\,a+1,\,a+2,\,a+5,\,a+6,\,a+7\}.\]
  then $\lvert X\rvert\geq 3$.
\end{lemma}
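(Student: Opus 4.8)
\section*{Proof proposal}

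The plan is to argue by contradiction. By the translation symmetry of the circulant graph we may assume $a=0$, so that $A=\{0,1,2,5,6,7\}$, and we suppose that some two-element set $X=\{x,y\}$ resolves $A$. Since $\{0,1\}\subseteq A$, the pair $\{0,1\}$ must be resolved by $X$, so at least one element of $X$ lies in
\[R_0=\{1+4m\,:\,0\le m\le k\}\cup\{-4m\,:\,0\le m\le k\};\]
call this element $x$ and let $y$ be the other. I would then show that, whatever value $x$ takes in $R_0$, the single remaining vertex $y$ cannot finish resolving $A$, which forces $|X|\ge 3$.

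The mechanism is the same in each case. For a fixed $x$ I would compute the representation $r(v\mid\{x\})=d(x,v)$ for all six vertices $v\in A$ from the explicit distance formula. In every case the six vertices collapse, under $\{x\}$, into $\sim$-classes that leave behind a $\{x\}$-block or $\{x\}$-cluster of one of three recognizable shapes: a consecutive triple such as $\{5,6,7\}$, a set of the form $\{b,b+1,b+5,b+6\}$ (for instance $\{1,2,6,7\}$), or a cluster $(\{0,2,5\},\{6,7\})$. Taking $S=\{x\}$, each such leftover would have to be resolved by $y=X-\{x\}$ alone; but Lemma~\ref{k-2resolving} (applied to the triple, which sits in a window of size $4$), Observation~\ref{r5_observation_0156} (applied to the four-element set), and Observation~\ref{r5_observation_025_67} (applied to the cluster) respectively show that each shape needs at least two resolving vertices. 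Since only the single vertex $y$ remains, this is the desired contradiction.

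It then remains to match each $x\in R_0$ to one of the three shapes. For the forward resolvers $x=1+4m$ with $2\le m\le k$, and for the backward resolvers $x=-4m$ in the generic range, a direct calculation gives $d(x,5)=d(x,6)=d(x,7)$, so the triple $\{5,6,7\}$ survives; the two exceptional small forward values $x=1$ and $x=5$ instead produce the cluster $(\{0,2,5\},\{6,7\})$ and the set $\{1,2,6,7\}$. The hard part will be the wrap-around boundary: for the backward resolvers $x=-4m$ with $m\in\{k-1,k\}$ the distances must be evaluated through the second branch of the distance formula, and the resulting leftover depends on the residue $r\in\{2,5\}$, so these few values have to be checked one by one (for $r=2$ they yield the $\{1,2,6,7\}$ and $(\{0,2,5\},\{6,7\})$ shapes rather than the triple, while for $r=5$ they still leave $\{5,6,7\}$). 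Confirming that these boundary cases likewise fall into one of the three shapes completes the case analysis and establishes $|X|\ge 3$.
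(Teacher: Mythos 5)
Your proposal is correct and follows essentially the same route as the paper's own proof: fix $x\in R_0$, tabulate the six distances, and reduce to one of the three leftover shapes --- the block $\{5,6,7\}$ handled by Lemma~\ref{k-2resolving}, the block $\{1,2,6,7\}$ handled by Observation~\ref{r5_observation_0156}, or the cluster $(\{0,2,5\},\{6,7\})$ handled by Observation~\ref{r5_observation_025_67}. Your case matching, including the wrap-around values (for $r=2$ the vertices $4k+6$ and $4k+2$ giving the four-element set and the cluster, while for $r=5$ the triple survives), agrees exactly with the paper's Table~\ref{tab:r5_lemma_01}.
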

\begin{proof}	
  By symmetry, we can assume $A=\{0,1,2,5,6,7\}$. Suppose $X=\{x,y\}$ resolves $A$, and let
  $x \in R_0=\{-4m,\,1+4m\,:\, 0 \leq m \leq k\}$. Let $S'=S\cup\{x\}$. In
  Table~\ref{tab:r5_lemma_01} we summarize the possible choices for $x$, and in each case we exhibit
  an $S'$-block or $S'$-cluster which implies that $\lvert X-\{x\}\rvert\geq 2$, which is the
  required contradiction:
\begin{itemize}
\item For the $S'$-block $\{5,6,7\}$, $\lvert X-\{x\}\rvert\geq 2$ follows from
  Lemma~\ref{k-2resolving}.
\item For the $S'$-block $\{1,2,6,7\}$, $\lvert X-\{x\}\rvert\geq 2$ follows from
  Observation~\ref{r5_observation_0156}.
\item For the $S'$-block $(\{0,2,5\},\{6,7\})$, $\lvert X-\{x\}\rvert\geq 2$ follows from
  Observation~\ref{r5_observation_025_67}.\qedhere
\end{itemize}
\begin{table}[htb]
  \centering
  \caption{Cases in the proof of Lemma~\ref{r5_lemma_01}. The last column contains the $S'$-block or
    $S'$-cluster which implies $\lvert X-\{x\}\rvert\geq 2$.}
  \label{tab:r5_lemma_01}
  \begin{mytabularwrap}
    \begin{tabular}{cccccccccc}
      \toprule
      &$x$& $d(x,0)$   & $d(x,1)$ & $d(x,2)$ & $d(x,5)$ & $d(x,6)$ & $d(x,7)$ &    Witness\\ 
      \midrule
      \multirow{5}{*}{\begin{turn}{90}$r=5$\end{turn}}	& $-4m,$ $0 \leq m \leq k-1$&$m$   & $m+1$ & $m+1$ & $m+2$ & $m+2$ & $m+2$ &   $\{5,6,7\}$\\ 
      & $4k+5$	&$k$   & $k+1$ & $k+1$ & $k$ & $k$   & $k$   &  $\{5,6,7\}$\\ 
      & $1+4m$, $2 \leq m \leq k$&$m+1$   & $m$   &  $m$  &  $m-1$  & $m-1$   & $m-1$    &   $\{5,6,7\}$\\ 
      & $5$			&$2$   & $1$   &  $1$  &  $0$  & $1$   & $1$      & $\{1,2,6,7\}$\\ 
      & $1$			&$1$   & $0$   &   1   &   1   & 2   & $2$   &
                                                                               $(\{0,2,5\},\{6,7\})$\\
      \midrule
      \multirow{6}{*}{\begin{turn}{90}$r=2$\end{turn}}	& $8k+2-4m,$ $0\leq m \leq k-2$  &  $m$  & $m+1$ & $m+1$ & $m+2$ & $m+2$ & $m+2$ &    $\{5,6,7\}$   \\
      &        $4k+6$          & $k-1$ &  $k$  &  $k$  &  $k+1$  & $k$ &  $k$  &  $\{1,2,6,7\}$ \\
      &         $4k+2$           &  $k$  & $k+1$ &  $k$  &  $k$  &  $k-1$  & $k-1$ &   $(\{0,2,5\},\{6,7\})$   \\
      &$1+4m,$ $2 \leq m \leq k$ & $m+1$ &  $m$  &  $m$  &  $m-1$  & $m-1$ & $m-1$ &   $\{5,6,7\}$   \\
      &          $5$           &  $2$  &  $1$  &  $1$  &  $0$  &  $1$  &  $1$  &   $\{1,2,6,7\}$ \\
      &          $1$           &  $0$  &  $1$  &   1   &   1   &   2   &  $2$  &    $(\{0,2,5\},\{6,7\})$   \\ \bottomrule    
    \end{tabular}
  \end{mytabularwrap}
\end{table}	
\end{proof}
\begin{lemma}\label{r5_lemma_02}
  Let $G=C(n,\pm\{1,2,3,4\})$ with $n= 8k+r$, $r\in\{2,5\}$, and let
  \[A=(\{a,\, a+1\},\, \{a+2,\, a+3,\, a+5,\, a+7,\, a+8\})\] be an $S$-cluster for some $S$. If
  $X\subseteq V$ resolves $A$ then $|X|\geq 3$.
\end{lemma}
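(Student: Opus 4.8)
The plan is to mimic the proof of Lemma~\ref{r5_lemma_01} almost verbatim. By the vertex-transitivity of $G$ we may assume $a=0$, so that $A=(\{0,1\},\{2,3,5,7,8\})$. Arguing by contradiction, suppose a two-element set $X=\{x,y\}$ resolves $A$. Since $X$ must in particular resolve the pair $\{0,1\}$, at least one element of $X$ lies in $R_0=\{-4m,\,1+4m:0\le m\le k\}$; without loss of generality this is $x$. Setting $S'=S\cup\{x\}$, the single remaining vertex $y$ must then resolve the block $\{2,3,5,7,8\}$ on its own.

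The heart of the argument is a case analysis over the positions of $x$ in $R_0$. For each such $x$ I would compute the distance vector $(d(x,2),d(x,3),d(x,5),d(x,7),d(x,8))$ and read off the partition of $\{2,3,5,7,8\}$ into classes of vertices equidistant from $x$; each class of size $\ge2$ is an $S'$-block that $y$ alone has to resolve. I expect exactly three partitions to occur, each furnishing a witness whose resolution provably requires at least two vertices. The first is $\{2,3\}\mid\{5,7,8\}$, with witness $\{5,7,8\}\subseteq\{5,6,7,8,9\}$, which needs at least two resolvers by Lemma~\ref{k-2resolving}. The second is $\{2,3,5\}\mid\{7,8\}$, with witness $\{2,3,5\}\subseteq\{2,3,4,5,6\}$, again by Lemma~\ref{k-2resolving}. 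The third is $\{5\}\mid\{2,3,7,8\}$, with witness $\{2,3,7,8\}=\{a,a+1,a+5,a+6\}$ for $a=2$, which needs at least two resolvers by Observation~\ref{r5_observation_0156}. In every case the single vertex $y$ is forced to resolve a set requiring two vertices, the desired contradiction; hence $|X|\ge3$. I would record the cases in a table analogous to Table~\ref{tab:r5_lemma_01}.

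The only real obstacle is the bookkeeping near the ``wrap-around'' end of $R_0$. For $x=1+4m$ with $0\le m\le k$ there is no wrap-around and the three patterns appear cleanly at $m=0$, $m=1$, and $m\ge2$ respectively (the value $x=5$, although a block vertex, is a legitimate resolver of $\{0,1\}$ and falls under $m=1$). For the vertices $x=-4m$, however, the shorter of the two arcs from $x$ to a block vertex switches as $m$ approaches $k$, so the induced partition is the generic $\{2,3\}\mid\{5,7,8\}$ for $m\le k-2$ but degenerates into one of the other two patterns for $m\in\{k-1,k\}$, and the degeneration differs between $r=2$ and $r=5$. I would therefore split the range $x=-4m$ into $m\le k-2$ and $m\in\{k-1,k\}$ and treat the two boundary subcases according to $r$, checking in each that one of the three witnesses above is present. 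Values of $n$ so small that some of these ranges are empty are absorbed by the brute-force search mentioned in Section~\ref{sec:notation}.
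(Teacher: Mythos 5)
Your proposal is correct and takes essentially the same route as the paper's proof: the same reduction to $a=0$, the same case analysis over $x\in R_0$ with $S'=S\cup\{x\}$, and exactly the same three witnesses --- the $S'$-blocks $\{5,7,8\}$ and $\{2,3,5\}$ handled by Lemma~\ref{k-2resolving} and $\{2,3,7,8\}$ handled by Observation~\ref{r5_observation_0156} --- tabulated in the paper as Table~\ref{tab:r5_lemma_02}. One small correction to your anticipated boundary bookkeeping, which the distance computation you prescribe would catch anyway: for $r=5$ the lone wrap-around case $x=4k+5$ does \emph{not} degenerate but retains the generic witness $\{5,7,8\}$, whereas for $r=2$ the two boundary cases $x=4k+6$ and $x=4k+2$ degenerate to the witnesses $\{2,3,7,8\}$ and $\{2,3,5\}$ respectively.
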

\begin{proof}	
  By symmetry we can assume $A=(\{0,1\},\,\{2,3,5,7,8\})$. Suppose $X=\{ x, y \}$ resolves $A$, and
  without loss of generality $x \in R_0=\{-4m,\,1+4m\,:\, 0 \leq m \leq k\}$. Let $S'=S\cup\{x\}$. In
  Table~\ref{tab:r5_lemma_02} we summarize the possible choices for $x$, and in each case we exhibit
  an $S'$-block which implies that $\lvert X-\{x\}\rvert\geq 2$, which is the required
  contradiction:
  \begin{itemize}
  \item For the $S'$-blocks $\{5,7,8\}$ and $\{2,3,5\}$, $\lvert X-\{x\}\rvert\geq 2$ follows from
    Lemma~\ref{k-2resolving}.
  \item For the $S'$-block $\{2,3,7,8\}$, $\lvert X-\{x\}\rvert\geq 2$ follows from
    Observation~\ref{r5_observation_0156}.\qedhere
  \end{itemize}
  \begin{table}[htb]
    \centering
    \caption{Cases in the proof of Lemma~\ref{r5_lemma_02}. The last column contains the $S'$-block
      which implies $\lvert X-\{x\}\rvert\geq 2$.}
    \label{tab:r5_lemma_02}
    \begin{mytabularwrap}
      \begin{tabular}{cccccccccc}
        \toprule
        &$x$& $d(x,0)$   & $d(x,1)$ & $d(x,2)$ & $d(x,3)$ & $d(x,5)$ & $d(x,7)$ & $d(x,8)$ &      $S'$-block\\ 
        \midrule
        \multirow{5}{*}{\begin{turn}{90}$r=5$\end{turn}}	& $-4m$, $0 \leq m \leq k-1$&$m$   & $m+1$ & $m+1$ & $m+1$ & $m+2$ & $m+2$ & $m+2$ &   $\{5,7,8\}$\\ 
        & $4k+5$	&$k$   & $k+1$ & $k+1$ & $k+1$ & $k$   & $k$   & $k$   &   $\{5,7,8\}$\\ 
        & $1+4m$, $2 \leq m \leq k$&$m+1$   & $m$   &  $m$  &  $m$  & $m-1$   & $m-1$   & $m-1$   &   $\{5,7,8\}$\\ 
        & $5$			&$2$   & $1$   &  $1$  &  $1$  & $0$   & $1$   & $1$   & $\{2,3,7,8\}$\\ 
        & $1$			&$1$   & $0$   &   1   &   1   & 1   & $2$   & $2$   &
                                                                                       $\{2,3,5\}$\\
        \midrule
        \multirow{6}{*}{\begin{turn}{90}$r=2$\end{turn}}	& $-4m$, $0\leq m \leq k-2$  &  $m$  & $m+1$ & $m+1$ & $m+1$ & $m+2$ & $m+2$ & $m+2$ &   $\{5,7,8\}$   \\
        &        $4k+6$          & $k-1$ &  $k$  &  $k$  &  $k$  & $k+1$ &  $k$  &  $k$  & $\{2,3,7,8\}$ \\
        &         $4k+2$           &  $k$  & $k+1$ &  $k$  &  $k$  &  $k$  & $k-1$ & $k-1$ &   $\{2,3,5\}$   \\
        &$1+4m$, $2 \leq m \leq k$ & $m+1$ &  $m$  &  $m$  &  $m$  & $m-1$ & $m-1$ & $m-1$ &   $\{5,7,8\}$   \\
        &          $5$           &  $2$  &  $1$  &  $1$  &  $1$  &  $0$  &  $1$  &  $1$  & $\{2,3,7,8\}$ \\
        &          $1$           &  $1$  &  $0$  &   1   &   1   &   1   &  $2$  &  $2$  &   $\{2,3,5\}$   \\ \bottomrule    
      \end{tabular}
    \end{mytabularwrap}
\end{table}
\end{proof}
\begin{observation}\label{r5_observation_01257}
  Let $n=8k+r$, with $r\in\{2,5\}$ and let $A=(\{a,\,a+1\},\,\{a+2,\,a+5,\,a+7\})$ be an $S$-cluster for some
  $S$. If $X\subseteq V$ resolves $A$ then $|X|\geq 2$.
\end{observation}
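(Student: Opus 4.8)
The plan is to show that a single vertex cannot resolve the cluster $A$; the bound $\lvert X\rvert\geq2$ then follows. By the translation symmetry of $C(n,\pm\{1,2,3,4\})$ it suffices to take $a=0$, so that $A=(\{0,1\},\{2,5,7\})$. Suppose, for contradiction, that $X=\{x\}$ resolves $A$. Since $x$ must in particular distinguish $0$ from $1$, it lies in the set
\[R_0=\{-4m\,:\,0\leq m\leq k\}\cup\{1+4m\,:\,0\leq m\leq k\}\]
from Section~\ref{sec:notation}. It therefore remains to verify that for every such $x$ the three values $d(x,2)$, $d(x,5)$, $d(x,7)$ are \emph{not} pairwise distinct, so that $x$ fails to resolve the block $\{2,5,7\}$, contradicting the choice of $X$.

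I would organise the verification by the two shapes of $x$, applying the distance formula from Section~\ref{sec:notation} in each case, exactly in the spirit of Observations~\ref{r5_observation_0156} and~\ref{r5_observation_025_67}. For $x=1+4m$ the relevant cyclic differences to $5$ and $7$ are $4m-4$ and $4m-6$, both at most $\lfloor n/2\rfloor$ for $0\leq m\leq k$, so that $d(x,5)=d(x,7)=m-1$ whenever $2\leq m\leq k$; the two remaining members $x=1$ and $x=5$ are checked directly and give $d(x,2)=d(x,5)=1$ and $d(x,2)=d(x,7)=1$ respectively. For $x=-4m$ the generic situation, in which the differences $4m+2$, $4m+5$, $4m+7$ all stay below $\lfloor n/2\rfloor$, yields $d(x,2)=m+1$ and $d(x,5)=d(x,7)=m+2$, again leaving the pair $\{5,7\}$ unresolved.

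The step I expect to be the main obstacle is the boundary behaviour of the family $x=-4m$ for $m$ close to $k$ (and, relatedly, the smallest orders $n\in\{10,13\}$), where the cyclic difference to one or more of $2,5,7$ exceeds $\lfloor n/2\rfloor$ and the distance formula switches to its second branch $\lceil(n-(j-i))/4\rceil$. In this regime the coincident pair can change: for instance, when $r=2$ and $m=k$ one computes $d(x,2)=d(x,5)=k$ while $d(x,7)=k-1$, and for $r=2$, $m=k-1$ the coincidence is $d(x,2)=d(x,7)$. For this reason I would treat $r=5$ and $r=2$ in separate itemised lists (or a short case table, as in Lemmas~\ref{r5_lemma_01} and~\ref{r5_lemma_02}), checking the top values $m=k$ and $m=k-1$ by hand so that every branch of the distance formula is covered. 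Once this is done, each $x\in R_0$ leaves two of $d(x,2),d(x,5),d(x,7)$ equal, contradicting the assumption that $\{x\}$ resolves $A$ and hence establishing $\lvert X\rvert\geq2$.
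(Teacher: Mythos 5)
Your proposal is correct and follows essentially the same route as the paper's own proof: reduce to $a=0$, restrict $x$ to $R_0$, and check both families $x=1+4m$ and $x=-4m$ case by case, with the generic coincidence $d(x,5)=d(x,7)$ and separate boundary treatment for $r=2$ and $r=5$. The boundary values you compute (for $r=2$: $d(x,2)=d(x,5)=k$ at $m=k$ and $d(x,2)=d(x,7)=k$ at $m=k-1$, plus the small cases $x=1$, $x=5$) agree exactly with the paper's case list, so nothing is missing beyond writing out the itemised verification you describe.
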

\begin{proof}
  By symmetry we can assume $A=(\{0,1\},\,\{2,5,7\})$. Suppose $X=\{x\}$ is a resolving set of $A$. Then
  $x \in R_0=\{8k+r-4m,1+4m: 0\leq m \leq k\}$.

  For $n=8k+5$ we have the following possibilities:
    \begin{itemize}
    \item If $x=8k+5-4m$, $0 \leq m \leq k-1$, then $d(x,5)=d(x,7)=m+2$.
    \item If $x=4k+5$, then $d(x,5)=d(x,7)=k$.
    \item If $x=1$, then $d(x,2)=d(x,5)=1$.
    \item If $x=5$, then $d(x,2)=d(x,7)=1$.
    \item If $x=1+4m$, $2 \leq m \leq k$, then $d(x,5)=d(x,7)=m-1$.
    \end{itemize}

  For $n=8k+2$ we have the following possibilities:
    \begin{itemize}
    \item If $x=8k+2-4m$, $0 \leq m \leq k-2$, then $d(x,5)=d(x,7)=m+2$.
    \item If $x=4k+6$, then $d(x,2)=d(x,7)=k$.
    \item If $x=4k+2$, then $d(x,2)=d(x,5)=k$.
    \item If $x=1$, then $d(x,2)=d(x,5)=1$.
    \item If $x=5$, then $d(x,2)=d(x,7)=1$.
    \item If $x=1+4m$, $2 \leq m \leq k$, then $d(x,5)=d(x,7)=m-1$.\qedhere
    \end{itemize}  
\end{proof}
\begin{observation}\label{r5_observation_023568}
  Let $n=8k+r$ with $r \in \{2,5\}$ and let $A=(\{a,a+2\},\,\{a+3,a+5\},\,\{a+6,a+8\})$ be an
  $S$-cluster for some $S$. If $X\subseteq V$ resolves $A$ then $\lvert X\rvert\geq 2$.
\end{observation}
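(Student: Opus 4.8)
The plan is to use the vertex-transitivity of the circulant graph to reduce to the case $a=0$, so that it suffices to show that no single vertex $x$ can simultaneously resolve the three pairs $\{0,2\}$, $\{3,5\}$ and $\{6,8\}$; equivalently, the three sets of vertices resolving these pairs have empty intersection. So I would suppose, for a contradiction, that some $x$ satisfies $d(x,0)\neq d(x,2)$, $d(x,3)\neq d(x,5)$ and $d(x,6)\neq d(x,8)$, and derive a contradiction.

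The core of the argument treats the case in which $x$ lies on a monotone arc for the segment $\{0,1,\dots,8\}$, meaning that neither $x$ nor its antipode lies in this segment. In that situation the distance formula gives $d(x,v)=\lceil(c+v)/4\rceil$ for $v\in\{0,\dots,8\}$ and a suitable constant $c$ (or $d(x,v)=\lceil(c-v)/4\rceil$ if the offset decreases along the segment), where $c+v$ is exactly the circular distance $\min(|x-v|,\,n-|x-v|)$; since these offsets are consecutive integers, no wrap-around occurs inside the segment. A direct comparison of $\lceil m/4\rceil$ with $\lceil(m+2)/4\rceil$ shows that $x$ resolves $\{v,v+2\}$ if and only if the offset of $v$ is $\equiv 0$ or $3\pmod 4$. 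Applying this to $v=0,3,6$ would force $c$ to satisfy $c\equiv 0,3$, $c\equiv 0,1$ and $c\equiv 1,2\pmod 4$ at once; but $\{0,3\}\cap\{0,1\}\cap\{1,2\}=\emptyset$, so no such $x$ exists. The decreasing-offset case reduces to this one via the reflection $v\mapsto 8-v$, a graph automorphism that permutes the three pairs, so no generality is lost.

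It then remains to dispose of the vertices $x$ for which $\{0,\dots,8\}$ is not contained in a monotone arc. These split into two bounded families. First, $x$ lies in the interior of the segment, i.e. $x\in\{1,2,\dots,7\}$; here all six relevant distances are at most $3$ and depend only on $|x-v|$, so a single $n$-independent verification (valid once $n$ is large enough, the finitely many small orders being covered by the brute-force search noted in Section~\ref{sec:notation}) shows that in each case one of the three pairs is left unresolved. Second, $x$ lies in a window around the antipode of the segment, around $x=4k$; there the distances are close to the diameter and must be read off from $d(x,v)=\lceil\min(|x-v|,\,n-|x-v|)/4\rceil$, whose value is $k$ plus a correction determined by the position inside the window and by $r$, but independent of $k$. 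Splitting on $r\in\{2,5\}$ and checking this bounded window finishes the proof.

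I expect the antipode window to be the main obstacle. There the two arcs from $x$ to the segment have comparable length, so the minimum in the distance formula becomes active, and because $n=8k+r$ is not a multiple of $4$ the clean pattern of residues $\bmod\,4$ is disrupted by the seam between $n-1$ and $0$. Consequently the congruence computation of the monotone case cannot be pushed through there, and one is forced into the explicit, $r$-dependent case check over the window; everywhere else the residue argument applies verbatim.
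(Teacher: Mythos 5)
Your overall strategy is correct, and it is organized genuinely differently from the paper's proof. The paper observes that any single resolver of the pair $\{0,2\}$ must lie in $R_0\cup R_1$ and then simply enumerates the four arithmetic families $8k+r-4m$, $1+4m$, $8k+r+1-4m$, $2+4m$, exhibiting for each member an unresolved pair among $\{3,5\}$, $\{6,8\}$, with the $r$-dependent exceptional values at the ends of the families handled individually. You instead classify $x$ by its position relative to the segment $\{0,\dots,8\}$, and your residue argument in the monotone regime is correct: $\lceil m/4\rceil\neq\lceil (m+2)/4\rceil$ exactly when $m\equiv 0$ or $3\pmod{4}$, and the three resulting conditions $c\in\{0,3\}$, $c\in\{0,1\}$, $c\in\{1,2\}$ modulo $4$ are jointly unsatisfiable; the reflection $v\mapsto 8-v$ does permute the three pairs, so the decreasing case follows. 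This is more illuminating than the paper's tables: it shows the obstruction is a congruence and confines any possible failure to two bounded windows, which is exactly where the paper's exceptional subcases live. I checked that the two finite verifications you defer do succeed: in the interior, every $x\in\{1,\dots,7\}$ lies within circular distance $4$ of both members of some pair it does not belong to, so both graph distances equal $1$; in the antipodal window, for $r=5$ every $x\in\{4k+3,\dots,4k+11\}$ leaves a pair unresolved (e.g.\ $d(4k+5,6)=d(4k+5,8)=k$, $d(4k+6,3)=d(4k+6,5)=k+1$, $d(4k+9,6)=d(4k+9,8)=k+1$), and for $r=2$ every $x\in\{4k+1,\dots,4k+9\}$ does too (e.g.\ $d(4k+1,6)=d(4k+1,8)=k-1$, $d(4k+3,3)=d(4k+3,5)=k$, $d(4k+6,6)=d(4k+6,8)=k$).

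Two slips to patch. First, your trichotomy misses $x\in\{0,8\}$: these lie in the segment, so they violate your stated monotone-arc hypothesis, yet they are outside your interior family $\{1,\dots,7\}$. Fortunately the offset formula still holds at the endpoints ($d(0,v)=\lceil v/4\rceil$ and $d(8,v)=\lceil (8-v)/4\rceil$), and your congruence test then reports $\{6,8\}$, respectively $\{0,2\}$, unresolved; so extend either case to cover them. Second, the brute-force fallback you invoke covers only $n\leq 9$ in the paper, whereas your three regimes are disjoint only for somewhat larger $n$; the few orders where the segment and the antipodal window collide (e.g.\ $n=10$, $n=13$) need their own finite check. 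Both repairs are routine, so I regard your proposal as a correct alternative proof in outline, with the caveat that the window computation you yourself flag as the crux is asserted rather than carried out.
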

\begin{proof}
  By symmetry we can assume $A=(\{0,2\},\,\{3,5\},\,\{6,8\})$. Suppose $x$ resolves $A$. Then
  $x \in R_0\cup R_1$, where $R_0=\{8k+r-4m,\,1+4m\,:\,0\leq m \leq k\}$ and
  $R_1=\{8k+r+1-4m,\,2+4m\,:\,0\leq m \leq k\}$.
  For $n=8k+5$ we have the following possibilities:
  \begin{itemize}
  \item If $x=8k+5-4m, \ 0\leq m \leq k$, then $d(x,6)=d(x,8)=
    \begin{cases}
      m+2& \text{when } 0 \leq m \leq k-1,\\
      k& \text{when } m=k.
    \end{cases}$
  \item If $x=1+4m,\ 0 \leq m \leq k$, then $d(x,6)=d(x,8)=
    \begin{cases}
      m-1& \text{when }2 \leq m \leq k,\\
      2-m& \text{when } m \in \{0,1\}.
    \end{cases}$
  \item If $x=8k+6-4m$, $0 \leq m \leq k$, then $d(x,3)=d(x,5)=m+1$.
  \item If $x=2+4m,\ 0 \leq m \leq k$, then $d(x,3)=d(x,5)=
    \begin{cases}
      m &\text{when }1 \leq m \leq k,\\
      1&\text{when }m=0.
    \end{cases}$
 \end{itemize}
  For $n=8k+2$ we have the following possibilities:
  \begin{itemize}
  \item If $x=8k+2-4m,\ 0\leq m \leq k$, then $d(x,6)=d(x,8)=\begin{cases}
      m+2& \text{when } 0 \leq m \leq k-2,\\
      2k-m-1& \text{when } m\in \{k-1,k\}.
    \end{cases}$
  \item If $x=1+4m,\ 0 \leq m \leq k$, then $d(x,6)=d(x,8)=\begin{cases}
      m-1& \text{when }2 \leq m \leq k,\\
      2-m& \text{when } m \in \{0,1\}.
    \end{cases}$\\
  \item If $x=8k+6-4m,\ 0 \leq m \leq k$, then $d(x,3)=d(x,5)=\begin{cases}
      m+1& \text{when } 0 \leq m \leq k-1,\\
      k& \text{when }m=k.
    \end{cases}$.
  \item If $x=2+4m,\ 0 \leq m \leq k$, then $d(x,3)=d(x,5)=\begin{cases}
      m &\text{when }1 \leq m \leq k,\\
      1&\text{when }m=0.
    \end{cases}$\qedhere
  \end{itemize}
\end{proof}
\begin{lemma}\label{r5lemma_03}
  Let $G=C(n,\pm\{1,2,3,4\})$ with $n= 8k+r$ with $r\in\{2,5\}$, and let
  \[A=(\{a, a+1, a+2\},\,\{a+3, a+5, a+6, a+8\})\] be an $S$-cluster for some $S$. If $X\subseteq V$
  of $A$, $\lvert X\rvert\geq 3$.
\end{lemma}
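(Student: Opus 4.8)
By symmetry I may assume $a=0$, so that
\[A=(\{0,1,2\},\,\{3,5,6,8\}).\]
Suppose, for contradiction, that $X=\{x,y\}$ resolves $A$. Since $X$ in particular resolves the pair $\{0,1\}\subseteq\{0,1,2\}$, one of its two vertices, say $x$, must lie in
\[R_0=\{-4m,\,1+4m\,:\,0\leq m\leq k\}.\]
Writing $S'=S\cup\{x\}$, it then suffices to show that in every case the single remaining vertex $y$ cannot resolve what $x$ leaves merged. Concretely, the plan is to exhibit, for each admissible $x$, an $S'$-block or $S'$-cluster contained in $A$ whose resolution requires at least two vertices, so that $\lvert X-\{x\}\rvert=1$ is impossible.

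I would organise this exactly as in Lemmas~\ref{r5_lemma_01} and~\ref{r5_lemma_02}, tabulating separately for $r=5$ and $r=2$ the distances $d(x,v)$ for $v\in\{0,1,2,3,5,6,8\}$ as $x$ ranges over $R_0$ (the analogues of Tables~\ref{tab:r5_lemma_01} and~\ref{tab:r5_lemma_02}). Reading off which vertices of each block are placed at equal distance from $x$, I expect exactly three witness types to occur, each a genuine $S'$-cluster whose blocks lie in distinct $\sim_{S'}$-classes:
\begin{itemize}
\item the $S'$-block $\{5,6,8\}\subseteq\{4,\dots,8\}$, for which Lemma~\ref{k-2resolving} gives $\lvert X-\{x\}\rvert\geq2$;
\item the $S'$-cluster $(\{0,2\},\{3,5\},\{6,8\})$, for which Observation~\ref{r5_observation_023568} gives $\lvert X-\{x\}\rvert\geq2$;
\item the $S'$-cluster $(\{1,2\},\{3,6,8\})$, which is of the form $(\{a,a+1\},\{a+2,a+5,a+7\})$ with $a=1$, so that Observation~\ref{r5_observation_01257} gives $\lvert X-\{x\}\rvert\geq2$.
\end{itemize}

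For the ``generic'' choices of $x$ this assignment is clean: $x=1+4m$ with $2\leq m\leq k$ gives distance pattern $(m+1,m,m,m,m-1,m-1,m-1)$, and $x=-4m$ away from the antipode gives $(m,m+1,m+1,m+1,m+2,m+2,m+2)$; in both $\{1,2\}$ and $\{5,6,8\}$ are merged, so the first witness applies. The small cases $x=1$ and $x=5$ produce, respectively, the clusters $(\{0,2\},\{3,5\},\{6,8\})$ and $(\{1,2\},\{3,6,8\})$, invoking the two Observations.

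The step I expect to be the main obstacle is the behaviour near the antipode, where the two branches of the distance formula meet and the cases $r=2$ and $r=5$ genuinely diverge. For $r=5$ the merged block $\{5,6,8\}$ survives all the way to $x=4k+5$, whereas for $r=2$ the values $m\in\{k-1,k\}$ in the $-4m$ branch break it apart: at $x=4k+6$ (i.e.\ $m=k-1$) the computation yields $(k-1,k,k,k,k+1,k,k)$, so the merged block is $\{3,6,8\}$ and the third witness applies, while at $x=4k+2$ (i.e.\ $m=k$) one gets $(k,k+1,k,k,k,k-1,k-1)$, so that $(\{0,2\},\{3,5\},\{6,8\})$ appears and the second witness applies. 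Verifying that \emph{some} valid witness survives in each of these boundary cases, with every ceiling evaluated in the correct regime, is the delicate bookkeeping; once it is checked the contradiction $\lvert X-\{x\}\rvert\geq2$ follows uniformly.
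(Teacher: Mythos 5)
Your proposal is correct and follows essentially the same route as the paper's own proof: the same reduction to $a=0$, the same case split over $x\in R_0$, and exactly the same three witnesses (the $S'$-block $\{5,6,8\}$ via Lemma~\ref{k-2resolving}, the $S'$-cluster $(\{1,2\},\{3,6,8\})$ via Observation~\ref{r5_observation_01257}, and the $S'$-cluster $(\{0,2\},\{3,5\},\{6,8\})$ via Observation~\ref{r5_observation_023568}), with your boundary computations at $x=4k+5$ for $r=5$ and at $x=4k+6$, $x=4k+2$ for $r=2$ agreeing entry-for-entry with Table~\ref{tab:r5_lemma_03}. Incidentally, your block $\{5,6,8\}$ is the correct one; the paper's prose misprints it as $\{5,7,8\}$ while its table has it right.
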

\begin{proof}
  By symmetry we can assume $A=(\{0,1,2\},\,\{3,5,6,8\})$. Suppose $X=\{ x, y \}$ resolves
  $A$. Without loss of generality $x \in R_0=\{-4m,\,1+4m\,:\,0 \leq m \leq k\}$.  Let
  $S'=S\cup\{x\}$. In Table~\ref{tab:r5_lemma_03} we summarize the possible choices for $x$, and in
  each case we exhibit an $S'$-block or $S'$-cluster implying that $\lvert X-\{x\}\rvert\geq 2$,
  which is the required contradiction:
  \begin{itemize}
  \item For the $S'$-block $\{5,7,8\}$, $\lvert X-\{x\}\rvert\geq 2$ follows from
    Lemma~\ref{k-2resolving}.
  \item For the $S'$-cluster $(\{1,2\},\,\{3,6,8\})$, $\lvert X-\{x\}\rvert\geq 2$ follows from
    Observation~\ref{r5_observation_01257}.
  \item For the $S'$-cluster $(\{0,2\},\,\{3,5\},\,\{6,8\})$, $\lvert X-\{x\}\rvert\geq 2$ follows from
    Observation~\ref{r5_observation_023568}.\qedhere
  \end{itemize}
  \begin{table}[htb]
    \centering
    \caption{Cases in the proof of Lemma~\ref{r5lemma_03}. The last column contains the $S'$-block
      or $S'$-cluster which implies $\lvert X-\{x\}\rvert\geq 2$.}
    \label{tab:r5_lemma_03}
    \begin{mytabularwrap}
      \begin{tabular}{cccccccccc}
        \toprule
        &$x$& $d(x,0)$   & $d(x,1)$ & $d(x,2)$ & $d(x,3)$ & $d(x,5)$ & $d(x,6)$ & $d(x,8)$ &      Witness\\ 
        \midrule
        \multirow{5}{*}{\begin{turn}{90}$r=5$\end{turn}}	& $-4m,\ 0 \leq m \leq k-1$&$m$   & $m+1$ & $m+1$ & $m+1$ & $m+2$ & $m+2$ & $m+2$ &   $\{5,6,8\}$\\ 
        & $4k+5$	&$k$   & $k+1$ & $k+1$ & $k+1$ & $k$   & $k$   & $k$   &   $\{5,6,8\}$\\ 
        & $1+4m,\ 2 \leq m \leq k$&$m+1$   & $m$   &  $m$  &  $m$  & $m-1$   & $m-1$   & $m-1$   &   $\{5,6,8\}$\\ 
        & $5$			&$2$   & $1$   &  $1$  &  $1$  & $0$   & $1$   & $1$   & $(\{1,2\},\{3,6,8\})$\\ 
        & $1$			&$1$   & $0$   &   1   &   1   & 1   & $2$   & $2$   &
                                                                                       $(\{0,2\},\{3,5\},\{6,8\})$\\
        \midrule
        \multirow{6}{*}{\begin{turn}{90}$r=2$\end{turn}}	& $-4m,\ 0\leq m \leq k-2$  &  $m$  & $m+1$ & $m+1$ & $m+1$ & $m+2$ & $m+2$ & $m+2$ &   $\{5,6,8\}$   \\
        &        $4k+6$          & $k-1$ &  $k$  &  $k$  &  $k$  & $k+1$ &  $k$  &  $k$  & $(\{1,2\},\{3,6,8\})$ \\
        &         $4k+2$           &  $k$  & $k+1$ &  $k$  &  $k$  &  $k$  & $k-1$ & $k-1$ &   $(\{0,2\},\{3,5\},\{6,8\})$   \\
        &$1+4m,\ 2 \leq m \leq k$ & $m+1$ &  $m$  &  $m$  &  $m$  & $m-1$ & $m-1$ & $m-1$ &   $\{5,6,8\}$   \\
        &          $5$           &  $2$  &  $1$  &  $1$  &  $1$  &  $0$  &  $1$  &  $1$  & $(\{1,2\},\{3,6,8\})$ \\
        &          $1$           &  $1$  &  $0$  &   1   &   1   &   1   &  $2$  &  $2$  &   $(\{0,2\},\{3,5\},\{6,8\})$   \\ \bottomrule    
		\end{tabular}
	\end{mytabularwrap}
\end{table}
\end{proof}
In the next six lemmas we show that for $n\equiv 3\pmod 8$ certain $S$-blocks or $S$-clusters cannot
be resolved by a single vertex $x$. In the proofs we always assume $a=0$, and we verify that
for every possible $x$ there remains a pair of unresolved vertices.
\begin{lemma}\label{mod3lemma2-3-2}
  For $n=8k+3$, if $X\subseteq V$ resolves $A=\{a,\,a+1,\,a+5,\,a+6\}$
  then $\lvert X\rvert\geq 2$.
\end{lemma}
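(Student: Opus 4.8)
The plan is to follow the strategy of Observation~\ref{r5_observation_0156}, which settled the same four-element set $A=\{a,a+1,a+5,a+6\}$ for the residues $r\in\{2,5\}$; only the modular bookkeeping changes for the case $r=3$ at hand. By symmetry we may assume $a=0$, so that $A=\{0,1,5,6\}$. Suppose, for contradiction, that a single vertex $x$ resolves $A$. Then $x$ must in particular separate the consecutive pairs $\{0,1\}$ and $\{5,6\}$, so $x\in R_0\cap R_5$, where $R_i$ denotes the set of vertices resolving $\{i,i+1\}$ introduced at the end of Section~\ref{sec:notation}.

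First I would write $R_0$ and $R_5$ explicitly, reducing everything modulo $n=8k+3$:
\[R_0=\{0\}\cup\{1,5,\dotsc,4k+1\}\cup\{4k+3,4k+7,\dotsc,8k-1\},\]
\[R_5=\{1,5\}\cup\{6,10,\dotsc,4k+6\}\cup\{4k+8,4k+12,\dotsc,8k\}.\]
The cleanest way to intersect these is to sort the elements by their residue modulo $4$: apart from the single vertex $0$, the elements of $R_0$ lie in the residue classes $1$ and $3$, whereas the elements of $R_5$ lie in the classes $0$, $1$ and $2$. Hence any common element is $\equiv1\pmod 4$, and intersecting the two arithmetic progressions in that class gives $R_0\cap R_5=\{1,5\}$; one checks separately that the lone vertex $0\in R_0$ is not reproduced in the residue-class-$0$ part of $R_5$, namely the progression $\{4k+8,\dotsc,8k\}$.

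It then remains to eliminate the two surviving candidates by a direct distance computation. If $x=1$, then $d(1,0)=d(1,5)=1$, so $x$ does not separate the pair $\{0,5\}$; if $x=5$, then $d(5,1)=d(5,6)=1$, so $x$ does not separate $\{1,6\}$. In either case $A$ retains an unresolved pair, contradicting the assumption that $x$ resolves $A$. Hence no single vertex can resolve $A$, and every resolving set $X$ satisfies $\lvert X\rvert\geq 2$.

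I expect the only delicate point to be the reduction yielding $R_0\cap R_5=\{1,5\}$: one has to track how the progressions $\{-4j\}$ and $\{5-4j\}$ wrap around modulo $8k+3$ and confirm that they fall into the residue classes claimed, and in particular that $0$ does not sneak into $R_5$. Everything afterwards is a two-line check, and the argument goes through unchanged for the smallest case $k=1$ (that is, $n=11$).
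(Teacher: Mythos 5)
Your proof is correct and is essentially the paper's own argument: the paper likewise restricts a hypothetical single resolver to $R_0$ and eliminates every candidate, showing $d(x,5)=d(x,6)$ for all $x\in R_0\setminus\{1,5\}$ (which is exactly your statement $R_0\cap R_5=\{1,5\}$, proved there by explicit distance computations rather than your residue-class intersection), before disposing of $x=1$ and $x=5$ via the very same unresolved pairs $\{0,5\}$ and $\{1,6\}$. Your intersection computation, modelled on Observation~\ref{r5_observation_0156}, checks out, including the boundary case $k=1$ where the residue-class-$0$ part of $R_5$ is empty.
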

\begin{proof}	
  If $x\in V$ resolves $A$ then $x\in R_0=\{8k+3-4m,\,1+4m\,:\,0 \leq m \leq k\}$.
  \begin{itemize}
  \item If $x=8k+3-4m,\ 0\leq m \leq k-1$, then $d(x,5)=d(x,6)=m+2$.
  \item If $x=4k+3$, then $d(x,5)=d(x,6)=k$.
  \item If $x=1+4m,\ 2 \leq m \leq k$, then $d(x,5)=d(x,6)=m-1$.
  \item If $x=5$, then $d(x,1)=d(x,6)=1$.
  \item If $x=1$, then $d(x,0)=d(x,5)=1$. \qedhere
  \end{itemize}
\end{proof}
\begin{lemma}\label{mod3lemma2-7-2a}
  For $n=8k+3$, if $X$ resolves $A=(\{a,a+1\},\,\{a+2,a+5,a+7\},\,\{a+9,a+10\})$ then
  $\lvert X\rvert\geq 2$.
\end{lemma}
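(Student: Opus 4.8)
The plan is to follow the template announced for these ``$\bmod 3$'' lemmas: fix $a=0$ by symmetry, assume for contradiction that a single vertex $X=\{x\}$ resolves the cluster $A=(\{0,1\},\,\{2,5,7\},\,\{9,10\})$, and exhibit, for every admissible $x$, a pair of vertices lying in a common block that $x$ fails to separate. Since $x$ must in particular resolve the pair $\{0,1\}$, it is forced into the set $R_0=\{8k+3-4m,\,1+4m\,:\,0\le m\le k\}$ recorded in Section~\ref{sec:notation}. This reduces the statement to a finite case analysis over the two arithmetic progressions constituting $R_0$, exactly as in Lemma~\ref{mod3lemma2-3-2}.

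First I would dispose of the ``lower'' vertices $x=1+4m$. Here $x-v\le 4k+1$ for all targets $v\in\{2,5,7\}$, so the distance formula reduces to $\lceil(x-v)/4\rceil$ with no wrap-around, and three sub-cases suffice: $m=0$ gives $d(x,2)=d(x,5)$, $m=1$ gives $d(x,2)=d(x,7)$, and $2\le m\le k$ gives $d(x,5)=d(x,7)=m-1$. In every sub-case the block $\{2,5,7\}$ is unresolved. These computations are routine and mirror the bookkeeping already carried out in the preceding observations.

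The substantive work is in the ``upper'' vertices $x=8k+3-4m$, where the distance to the small targets $2,5,7,9,10$ switches between the direct branch $\lceil(x-v)/4\rceil$ and the wrapped branch $\lceil(4m+v)/4\rceil$ as $x-v$ crosses $\lfloor n/2\rfloor=4k+1$. I would locate the threshold in $m$ for each target and split into $m\le k-2$ and $m\in\{k-1,k\}$. For $m\le k-2$ both $5$ and $7$ lie in the wrapped regime, giving $d(x,5)=d(x,7)=m+2$, so $\{5,7\}$ is unresolved; for $m\in\{k-1,k\}$ the vertices $9$ and $10$ fall in the direct regime with $d(x,9)=d(x,10)$, so $\{9,10\}$ is unresolved. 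Thus in every case some block contains an unseparated pair, contradicting the assumption that $\{x\}$ resolves $A$, and hence $\lvert X\rvert\ge 2$.

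I expect the only real obstacle to be the careful placement of the wrap-around boundaries: the direct/wrapped transition value of $m$ is not uniform, occurring at $m=k,\,k-1,\,k-1,\,k-2$ for $v=5,7,9,10$ respectively, so the boundary cases $m\in\{k-2,k-1,k\}$ must be checked by hand rather than read off a single closed formula (and one should note that the $m\le k-2$ branch is simply empty for $k=1$, where the $\{9,10\}$ argument already covers every upper vertex). Presenting the upper-half distances in a small table, in the style of Lemmas~\ref{r5_lemma_01}--\ref{r5lemma_03}, would make the verification transparent and guard against off-by-one errors in the ceiling functions.
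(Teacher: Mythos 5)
Your proposal is correct and follows essentially the same route as the paper: force $x\in R_0$ via the pair $\{0,1\}$, then sweep both arithmetic progressions of $R_0$ and exhibit an unresolved pair inside a block, with all your claimed distance values checking out ($d(1,2)=d(1,5)=1$, $d(5,2)=d(5,7)=1$, $d(1+4m,5)=d(1+4m,7)=m-1$ for $2\le m\le k$, $d(x,5)=d(x,7)=m+2$ for $x=8k+3-4m$ with $m\le k-2$, and $d(x,9)=d(x,10)$ for $m\in\{k-1,k\}$). The only deviation is cosmetic: at $x=4k+7$ you use the pair $\{9,10\}$ where the paper instead notes $d(x,2)=d(x,7)=k$, and both witnesses are valid.
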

\begin{proof}
  If $x\in V$ resolves $A$ then $x\in R_0=\{8k+3-4m,\,\ 1+4m\,:\,0\leq m\leq k\}$.
  \begin{itemize}
  \item If $x=8k+3-4m,\ 0\leq m \leq k-2$, then $d(x,5)=d(x,7)=m+2$.
  \item If $x=4k+7$, then $d(x,2)=d(x,7)=k$.
  \item If $x=4k+3$, then $d(x,9)=d(x,10)=k-1$.
  \item If $x=1+4m,\ 3 \leq m \leq k$, then $d(x,5)=d(x,7)=m-1$.
  \item If $x=9$, then $d(x,5)=d(x,7)=1$.
  \item If $x=5$, then $d(x,2)=d(x,7)=1$.
  \item If $x=1$, then $d(x,2)=d(x,5)=1$.\qedhere
  \end{itemize}
\end{proof}
\begin{lemma}\label{mod3lemma2-5-2}
  For $n=8k+3$, in $X$ resolves $A=(\{a,a+1\},\,\{a+7,a+8\})$ then $\lvert X\rvert\geq 2$.
\end{lemma}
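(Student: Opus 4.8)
The plan is to follow the template of the preceding lemmas (e.g.\ Lemma~\ref{mod3lemma2-3-2}): by the vertex-transitivity of the circulant graph we may assume $a=0$, so that $A=(\{0,1\},\{7,8\})$. A single vertex $x$ resolves the cluster $A$ if and only if it resolves both pairs $\{0,1\}$ and $\{7,8\}$; in particular, any resolving $x$ must lie in $R_0=\{8k+3-4m,\,1+4m\,:\,0\leq m\leq k\}$. Hence it suffices to show that \emph{no} $x\in R_0$ resolves $\{7,8\}$, i.e.\ that $d(x,7)=d(x,8)$ for every $x\in R_0$; this rules out a single-vertex resolving set and forces $\lvert X\rvert\geq 2$.

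First I would dispose of the degeneracy. The claim genuinely fails for $k=1$ (where $n=11$): there $7\in R_0$, and moreover $x=1$ already resolves $\{7,8\}$ once the short way around the cycle is taken, since $\lfloor n/2\rfloor=5<6$. This is precisely why $n=11$ is a documented exception to Theorem~\ref{mainthm}, so I assume $k\geq 2$ from here on. For $k\geq 2$ one checks easily that neither $7$ nor $8$ lies in $R_0$ (solving $7,8=1+4m$ or $7,8=8k+3-4m$ gives no admissible $m\in\{0,\dots,k\}$), and the relevant coordinate differences stay at most $\lfloor n/2\rfloor=4k+1$, so the distance formula applies with no wrap-around except in the controlled large-$m$ cases.

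Then I would split $R_0$ into its two arithmetic progressions and compute directly with the $\lceil\,\cdot/4\rceil$ formula. For $x=1+4m$ with $0\leq m\leq k$ the vertex sits just before or after $7,8$, and one obtains $d(x,7)=d(x,8)$ equal to $2$, $1$, and $m-1$ in the subcases $m=0$, $m=1$, and $2\leq m\leq k$, respectively. For $x=8k+3-4m$ with $0\leq m\leq k$ the distances are measured the short way around, and the common value is $d(x,7)=d(x,8)=m+2$ when $0\leq m\leq k-2$ and $d(x,7)=d(x,8)=2k-m-1$ when $m\in\{k-1,k\}$, the switch occurring exactly where $x-7=8k-4m-4$ crosses $4k+1$. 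In every case $d(x,7)=d(x,8)$, so $x$ does not resolve $\{7,8\}$, which gives the required contradiction.

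The step I expect to be the main obstacle is nothing deep but purely the bookkeeping: pinning down the ceilings and the wrap-around threshold correctly at the boundary values $m\in\{0,1\}$ (small $x$) and $m\in\{k-1,k\}$ (large $x$), and remembering to isolate the genuinely false case $k=1$ so that the stated conclusion is only asserted for $k\geq 2$.
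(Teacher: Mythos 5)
Your proof is correct and follows essentially the same route as the paper's: reduce to $a=0$, note that a single resolving vertex must lie in $R_0$, and check $d(x,7)=d(x,8)$ for every $x\in R_0$, with your case values ($2$, $1$, $m-1$ on one progression; $m+2$, $k$, $k-1$ on the other) matching the paper's table exactly. Your isolation of $k=1$ is a genuine (if minor) refinement: for $n=11$ both $x=1$ and $x=7\in R_0$ do resolve $A$, so the paper's blanket claims ($d(1,7)=d(1,8)=2$ and $d(4k+3,7)=d(4k+3,8)=k-1$) fail there — harmlessly, since the lemma is only invoked for $k\geq 3$ and $n=11$ is excluded from Theorem~\ref{mainthm}, but you were right to flag it.
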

\begin{proof}
  If $x\in V$ resolves $A$ then $x\in R_0=\{8k+3-4m,\,1+4m\,:\,0 \leq m \leq k\}$.
  \begin{itemize}
  \item If $x=8k+3-4m$, $0\leq m \leq k-2$, then $d(x,7)=d(x,8)=m+2$.
  \item If $x=4k+7$, then $d(x,7)=d(x,8)=k$.
  \item If $x=4k+3$, then $d(x,7)=d(x,8)=k-1$.
  \item If $x=1+4m,\ 2 \leq m \leq k$, then $d(x,7)=d(x,8)=m-1$.
  \item If $x=5$, then $d(x,7)=d(x,8)=1$.
  \item If $x=1$, then $d(x,7)=d(x,8)=2$.\qedhere
  \end{itemize}
\end{proof}
\begin{lemma}\label{mod3lemma222}
  For $n=8k+3$, if $X$ resolves $A=(\{a,a+1\},\, \{a+2,a+3\},\, \{a+4,a+5\})$ then $\lvert X\rvert\geq 2$.
\end{lemma}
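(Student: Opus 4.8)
The plan is to argue, exactly in the spirit of Observation~\ref{observation_0123}, that no single vertex can resolve $A$, so that any resolving $X$ must contain at least two vertices. Using the rotation automorphism $v\mapsto v-a$ we may assume $a=0$, so $A=(\{0,1\},\{2,3\},\{4,5\})$. A single vertex $x$ resolves $A$ precisely when it resolves each of the three pairs simultaneously, i.e.\ $x\in R_0\cap R_2\cap R_4$, and the whole lemma reduces to showing that this triple intersection is empty.

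The first step is to compute $R_0\cap R_2$. This is exactly the intersection that is empty for every $r\neq 3$ (which is why $r=3$ was excluded in Observation~\ref{observation_0123}); for $n=8k+3$ it turns out to be nonempty, and the key point is that it collapses to a single vertex. Writing
\[
R_0=\{8k+3-4m,\,1+4m\,:\,0\le m\le k\},\qquad
R_2=\{2-4m,\,3+4m\,:\,0\le m\le k\},
\]
reduced modulo $n=8k+3$, and sorting the elements according to their residue modulo $4$, one sees that the two ``descending'' progressions meet only at $4k+3$, while all remaining pairings are kept apart either by mismatched residues modulo $4$ or by disjoint ranges. I expect this bookkeeping --- tracking how the step-$4$ progressions wrap around modulo the non-multiple-of-$4$ number $8k+3$, and handling the boundary indices $m=0$ and $m=k$ --- to be the only delicate part of the argument; everything else is immediate.

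The final step is to rule out the one surviving candidate $x=4k+3$ against the third pair. One checks directly from the distance formula that it fails to resolve $\{4,5\}$, since
\[
d(4k+3,4)=\big\lceil(4k-1)/4\big\rceil=k=\big\lceil(4k-2)/4\big\rceil=d(4k+3,5),
\]
equivalently $4k+3\notin R_4$. Hence $R_0\cap R_2\cap R_4=\emptyset$, no single vertex resolves $A$, and therefore $\lvert X\rvert\ge 2$, as claimed. Should one prefer to remain closer to the style of the surrounding ``$n\equiv 3\pmod 8$'' lemmas, the same conclusion follows by instead listing every $x\in R_0$ and exhibiting, in each case, an unresolved pair among $\{2,3\}$ and $\{4,5\}$; the intersection computation above is merely a compressed form of that case analysis.
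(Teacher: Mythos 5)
Your argument is correct and coincides in substance with the paper's own proof, which simply lists every $x\in R_0=\{8k+3-4m,\,1+4m\,:\,0\leq m\leq k\}$ and exhibits the tie $d(x,2)=d(x,3)$ in all cases except $x=4k+3$, where it notes $d(x,4)=d(x,5)=k$ --- exactly your computation $R_0\cap R_2=\{4k+3\}$ followed by $4k+3\notin R_4$, as you yourself observe in your closing remark. One small correction to the bookkeeping: the survivor $4k+3$ arises as the meeting point of the \emph{descending} branch $\{8k+3-4m\}$ of $R_0$ with the \emph{ascending} branch $\{3+4m\}$ of $R_2$ (the equation $8k+3-4m=3+4m'$ gives $m+m'=2k$, forcing $m=m'=k$), not of the two descending branches, whose elements have distinct residues ($3$ and $2$) modulo $4$.
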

\begin{proof}
  If $x\in V$ resolves $A$ then $x\in R_0=\{8k+3-4m,1+4m: 0 \leq m \leq k\}$.
  \begin{itemize}
  \item If $x=8k+3-4m,\ 0\leq m \leq k-1$, then $d(x,2)=d(x,3)=m+1$.
  \item If $x=4k+3$, then $d(x,4)=d(x,5)=k$.
  \item If $x=1+4m,\ 1 \leq m \leq k$, then $d(x,2)=d(x,3)=m$.
  \item If $x=1$, then $d(x,2)=d(x,3)=1$.\qedhere
  \end{itemize}
\end{proof}
\begin{lemma}\label{mod3lemma2-7-2b}
  For $n=8k+3$, if $X$ resolves $A=(\{a,a+1\},\,\{a+3,a+5,a+8\})$ then $|X|\geq 2$.
\end{lemma}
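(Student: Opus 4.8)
The plan is to mirror the template used in the preceding lemmas \ref{mod3lemma2-3-2}--\ref{mod3lemma2-5-2}: assume $a=0$ by symmetry, suppose for contradiction that a single vertex $x$ resolves $A$, and derive that $x$ cannot in fact distinguish the three vertices of the second block $\{3,5,8\}$. Any $x$ that resolves $A$ must in particular resolve $\{0,1\}$, so $x\in R_0=\{8k+3-4m,\,1+4m\,:\,0\le m\le k\}$. It therefore suffices to walk through the two arithmetic progressions comprising $R_0$ and, for each admissible $x$, point to two of $3,5,8$ that are equidistant from $x$, contradicting the assumption that $\{x\}$ resolves $A$.

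For the ascending progression $x=1+4m$ I would read the distances off the distance formula $d=\lceil g/4\rceil$ applied to the clockwise gap $g$. Generically, for $2\le m\le k$, the gaps to $5$ and $8$ are $4m-4$ and $4m-7$, giving $d(x,5)=d(x,8)=m-1$; the endpoints $x=1$ (where $d(x,3)=d(x,5)=1$) and $x=5$ (where $d(x,3)=d(x,8)=1$) are checked separately. For the descending progression $x=8k+3-4m\equiv-4m$ the clockwise gaps to $3,5,8$ are $4m+3$, $4m+5$, $4m+8$, and as long as $0\le m\le k-2$ all three stay below $\lfloor n/2\rfloor=4k+1$, so the formula yields $d(x,5)=d(x,8)=m+2$.

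The step I expect to require the most care---though it is routine rather than conceptually hard---is the two boundary values $m=k-1$ and $m=k$ of the descending progression, where some gap exceeds $4k+1$ and the distance must be taken from the complementary gap through $d=\lceil(n-g)/4\rceil$. A direct evaluation shows that $x=4k+7$ gives $d(x,3)=d(x,8)=k$ and $x=4k+3$ gives $d(x,3)=d(x,5)=k$, so an equidistant pair survives there too. Since every $x\in R_0$ leaves some pair in $\{3,5,8\}$ unresolved, no single vertex resolves $A$, whence $\lvert X\rvert\ge 2$. One should finally note that for the smallest order $k=1$ the two ``generic'' ranges are empty, but the four remaining values $x\in\{0,1,5,7\}$ are exactly the endpoint and boundary cases already treated, so the argument covers all $n=8k+3$.
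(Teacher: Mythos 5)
Your proposal is correct and takes essentially the same route as the paper's proof: reduce to $x\in R_0$ and, for each case ($x=8k+3-4m$ with $0\leq m\leq k-2$, $x=4k+7$, $x=4k+3$, $x=1+4m$ with $2\leq m\leq k$, $x=5$, $x=1$), exhibit an equidistant pair in $\{3,5,8\}$, with exactly the distance values the paper records. Your closing remark on $k=1$ is a harmless extra check that the paper omits.
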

\begin{proof}
  If $x\in V$ resolves $A$ then $x\in R_0=\{8k+3-4m,\,1+4m\,:\,0 \leq m \leq k\}$.
  \begin{itemize}
  \item If $x=8k+3-4m,\ 0\leq m \leq k-2$, then $d(x,5)=d(x,8)=m+2$.
  \item If $x=4k+7$, then $d(x,3)=d(x,8)=k$.
  \item If $x=4k+3$, then $d(x,3)=d(x,5)=k$.
  \item If $x=1+4m,\ 2 \leq m \leq k$, then $d(x,5)=d(x,8)=m-1$.
  \item If $x=5$, then $d(x,3)=d(x,8)=1$.
  \item If $x=1$, then $d(x,3)=d(x,5)=1$.\qedhere
  \end{itemize}
\end{proof}

\begin{lemma}\label{mod3lemma2-1-2}
  For $n=8k+3$, if $X$ resolves $A=(\{a,a+1\},\,\{a+3,a+4\})$ then $|X|\geq 2$.
\end{lemma}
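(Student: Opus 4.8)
The plan is to reduce, exactly as in Observation~\ref{observation_0123}, to showing that no single vertex can resolve the cluster, i.e.\ that the two relevant $R$-sets are disjoint. By the translation symmetry of $C(n,\pm\{1,2,3,4\})$ we may take $a=0$, so that $A=(\{0,1\},\{3,4\})$. Since the two blocks lie in distinct $\sim_S$-classes, a set $X$ resolves $A$ precisely when it resolves both pairs $\{0,1\}$ and $\{3,4\}$. A single vertex $x$ accomplishes this iff $x\in R_0\cap R_3$, so it suffices to prove $R_0\cap R_3=\emptyset$; then every resolving set $X$ must contain at least two vertices.

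Next I would write out both sets explicitly for $n=8k+3$. Using $R_i=\{i-4j:0\leq j\leq k\}\cup\{i+1+4j:0\leq j\leq k\}$ and reducing modulo $n$, one obtains
\[
R_0=\{0\}\cup\{8k+3-4m:1\leq m\leq k\}\cup\{1+4m:0\leq m\leq k\},
\]
\[
R_3=\{3\}\cup\{8k+6-4m:1\leq m\leq k\}\cup\{4+4m:0\leq m\leq k\}.
\]
The cleanest way to finish is to sort the elements by residue modulo $4$. Because $n=8k+3\equiv3\pmod4$, every element of $R_0$ is $\equiv0,1$, or $3\pmod4$, while every element of $R_3$ is $\equiv0,2$, or $3\pmod4$; hence any common element must be $\equiv0$ or $\equiv3\pmod4$.

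Finally I would dispose of these two residue classes directly. Among elements $\equiv0\pmod4$, $R_0$ contributes only the vertex $0$, whereas $R_3$ contributes $\{4,8,\dots,4k+4\}$, and $0$ is not in this range. Among elements $\equiv3\pmod4$, $R_0$ contributes $\{4k+3,4k+7,\dots,8k-1\}$ while $R_3$ contributes only the vertex $3$, and $3<4k+3$ for all $k\geq1$. Thus $R_0\cap R_3=\emptyset$, which gives the claim. The only genuine subtlety, and the step I expect to need the most care, is the modular wraparound: since $n\equiv3\pmod4$, the reductions of $-4j$ and of $3-4j$ land in the residue classes $3$ and $2$ respectively, so one cannot read off disjointness from the unreduced formulas and must track residues honestly. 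This is exactly the phenomenon that forces the exclusion $r\neq3$ in Observation~\ref{observation_0123} (there the analogous intersection $R_0\cap R_2$ contains $4k+3$ when $r=3$), which is why the shifted cluster $(\{0,1\},\{3,4\})$ is needed here rather than $(\{0,1\},\{2,3\})$.
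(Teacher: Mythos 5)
Your proof is correct, and it rests on the same reduction as the paper's: after translating to $a=0$, a single vertex resolving the cluster would have to lie in $R_0\cap R_3$, and the entire content of the lemma is that this intersection is empty. Where you genuinely diverge is in how emptiness is verified. The paper runs through the candidates $x\in R_0$ in four families ($x=8k+3-4m$ for $0\leq m\leq k-1$, $x=4k+3$, $x=1+4m$ for $1\leq m\leq k$, and $x=1$) and in each case computes $d(x,3)=d(x,4)$ explicitly, i.e.\ it checks $x\notin R_3$ pointwise by distance calculations; you instead compute both resolver sets in reduced form and sort by residue modulo $4$, disposing of the only two shared residue classes ($0$ and $3$) by an interval comparison. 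Your computations check out: modulo $4$, $R_0$ meets only the classes $0,1,3$ and $R_3$ only the classes $0,2,3$; in class $0$ the contributions are $\{0\}$ versus $\{4,8,\dotsc,4k+4\}$, and in class $3$ they are $\{4k+3,4k+7,\dotsc,8k-1\}$ versus $\{3\}$, disjoint in both cases since $4k+4<8k+3$ and $3<4k+3$ for $k\geq 1$. Your closing remark is also accurate and identifies the real point of the lemma's formulation: for $n=8k+3$ one has $4k+3=8k+3-4k\in R_0$ and $4k+3=3+4k\in R_2$, so $R_0\cap R_2\neq\emptyset$, which is exactly why Observation~\ref{observation_0123} excludes $r=3$ and why the shifted cluster $(\{a,a+1\},\{a+3,a+4\})$ is needed here. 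What your version buys is a structural, essentially computation-free argument in the style of the proof of Observation~\ref{observation_0123}; what the paper's version buys is uniformity with the neighbouring lemmas for $n=8k+3$ (Lemmas~\ref{mod3lemma2-3-2}--\ref{mod3lemma2-1-2}), all of which proceed by the same tabulated distance checks.
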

\begin{proof}
  If $x\in V$ resolves $A$ then $x\in R_0=\{8k+3-4m,\ 1+4m: 0 \leq m \leq k\}$.
  \begin{itemize}
  \item If $x=8k+3-4m,\ 0\leq m \leq k-1$, then $d(x,3)=d(x,4)=m+1$.
  \item If $x=4k+3$, then $d(x,3)=d(x,4)=k$.
  \item If $x=1+4m,\ 1 \leq m \leq k$, then $d(x,3)=d(x,4)=m$.
  \item If $x=1$, then $d(x,3)=d(x,4)=1$.\qedhere
  \end{itemize}
\end{proof}
\section{Lower bounds}\label{sec:proof}
In this section we prove the lower bounds for Theorem~\ref{mainthm}, using the lemmas proved in Section~\ref{sec:auxiliary}. 
\subsection{$n \equiv 1 \pmod 8$}
We write $n=8k+9$, $G=C(n,\pm\{1,2,3,4\})$, and prove $\dim(G)\geq 6$. Suppose
$B=\{w_1,w_2,w_3,w_4,w_5\}$ is a metric basis. Without loss of generality $w_1=0$ and
$w_2,w_3\in\{1,2,\dotsc,4k+3\}$. We make a case distinction with respect to the possibilities for
the set $S=\{0,w_2,w_3\}$. By Lemma~\ref{lem:min_dist_789} we may assume that
$w_2\geq 4$, $w_3\geq 4$, and $\lvert w_2-w_3\rvert\geq 4$.
\begin{description}
\item[Case 1] $w_2\equiv w_3\pmod 4$, that is $S=\{0,\,4m+i,\,4m'+i\}$ with $i\in\{0,1,2,3\}$ and
  $1\leq m<m'\leq k$. The vertex set $A=\{4k+i+1,\,4k+i+2,\,4k+i+3,\,4k+i+4\}$ is an $S$-block with
  representing vector $r(A\vert S)=(k+1,k-m+1,k-m'+1)$. Since $B-S$ resolves $A$,
  Lemma~\ref{k-2resolving} implies $\lvert B-S\rvert\geq 3$.
\item[Case 2] $S=\{0,\,4m,\,4m'+i\}$ with $m\in\{1,\dotsc,k+1\}$, $m'\in\{1,\dotsc,k\}$, $m\neq m'$ and
  $i\in\{1,2,3\}$. In this case $A=\{4(m'+k)+\ell\ :\ \ell\in\{5,6,7,8\}\}$ is an $S$-block with
  \[r(A\vert S)=
    \begin{cases}
      (k-m'+1,\ k-m'+m+1,\ k+1) & \text{if }m<m',\\
      (k-m'+1,\ k-(m-m')+2,\ k+1) & \text{if }m>m'.
    \end{cases}
  \]
  By Lemma~\ref{k-2resolving}, $\lvert B-S\rvert\geq 3$.
\item[Case 3] $S=\{0,\,4m+1,\,4m'+i\}$ with $m,m'\in\{1,\dotsc,k\}$, $m\neq m'$ and $i\in\{2,3\}$.  Let
  $A=(A_1,A_2)$ with $A_1=\{4k+4,4k+5\}$ and $A_2=\{4k+4+j+4m'\,:\,j\in\{2,3,4\}\}$. Then $A$
  is an $S$-cluster with $r(A_1\vert S)=(k+1,k-m+1,k-m'+1)$ and
  \[r(A_2\vert S)=
    \begin{cases}
      (k-m'+1,\ k-m'+m+1,\ k+1) & \text{if }m<m',\\
      (k-m'+1,\ k-(m-m')+2,\ k+1) & \text{if }m>m'.
    \end{cases}
  \]
  By Lemma~\ref{2-4-3lemma_1}, $\lvert B-S\rvert\geq 3$.
\item[Case 4] $S=\{0,4m+2,4m'+3\}$ with $m,m'\in\{1,\dotsc,k\}$, $m\neq m'$. Let $A=(A_1,A_2)$
  with $A_1=\{3,4\}$ and $A_2=\{4(m+k)+\ell\,:\,\ell\in\{5,6,7\}\}$. Then $A$ is an $S$-cluster with
  $r(A_1\vert S)=(1,m,m')$ and
  \[r(A_2\vert S)=
    \begin{cases}
      (k-m+1,\,k+1,\,k-(m'-m)+1) & \text{if }m<m',\\
      (k-m+1,\,k+1,\,k-m+m'+1) & \text{if }m>m'.
    \end{cases}
  \]
By Lemma~\ref{2-4-3lemma_1}, $\lvert B-S\rvert\geq 3$.
\end{description}
In any case, we conclude $\lvert B\rvert\geq 6$, which is the required contradiction.

\subsection{$n\equiv 0 \pmod 8$}
We write $n=8k+8$, and prove $\dim(G)\geq 6$. Similar to the previous subsection we
assume that there is a resolving set $B=\{w_1,w_2,w_3,w_4,w_5\}$ such that $w_1<w_2<w_3<w_4<w_5$.
Without loss of generality $w_1=0$ and $w_2,w_3\in\{3,4,\dotsc,4k+3\}$, and we make a case
distinction with respect to the set $S=\{0,\,w_2,\,w_3\}$.
\begin{description}
\item[Case 1] $w_2\equiv w_3\pmod 4$, that is $S=\{0,\,4m+i,\,4m'+i\}$ with $i\in\{0,1,2,3\}$ and
  $1\leq m<m'\leq k$. The vertex set $A=\{4k+i+1,\,4k+i+2,\,4k+i+3,\,4k+i+4\}$ is an $S$-block with
  representing vector $r(A\vert S)=(k+1,\,k-m+1,\,k-m'+1)$. Since $B-S$ resolves $A$,
  Lemma~\ref{k-2resolving} implies $\lvert B-S\rvert\geq 3$.
\item[Case 2] $S=\{0,\,4m,\,4m'+i\}$, $i\in\{1,2,3\}$, $1\leq m<m'\leq k$. Then
  $A=\{4(m'+k)+\ell\,:\,\ell\in\{4,5,6,7\}\}$ is an $S$-block with
  $r(A\vert S)=(k-m'+1,k+m-m'+1,k+1)$, and by Lemma~\ref{k-2resolving}, $\lvert B-S\rvert\geq 3$.
\item[Case 3] $S=\{0,\,4m+1,\,4m'+i\}$, $i\in\{0,2\}$,\ $1 \leq m <m' \leq k$. Let $A=(A_1,\,A_2)$
  with $A_1=\{4k+3,4k+4\}$ and $A_2=\{4k+3+4m'+\ell:\ell \in\{2,3,4\}$. Then $A$ is an $S$-cluster
  with $r(A_1|S)=(k+1,k-m+1,k-m'+1)$ and $r(A_2|S)=(k-m'+1,k-m'+m+1,k+1)$. By
  Lemma~\ref{2-4-3lemma78}, this implies $\lvert B-S\rvert\geq 3$.
\item[Case 4] $S=\{0,4m+1,4m'+3\}$, $1\leq m<m' \leq k$. In this case
  $A=\{4m+4k+j\,:\,j\in\{4,5,6,7\}$ is an $S$-block with $r(A|S)=(k-m+1,\,k+1,\,k-m'+m+1)$. Hence, by
  Lemma~\ref{k-2resolving}, $|B-S|\geq 3$.
\item[Case 5] $S=\{0,\,4m+2,\,4m'\}$, $1 \leq m <m'\leq k$. Let $A=(A_1,\,A_2)$ with
  $A_1=\{4k+3,4k+4\}$ and $A_2=\{4k+3+4m+\ell\,:\,\ell \in \{2,3,4\}\}$. Then $A$ is an $S$-cluster
  with $r(A_1|S)=(k+1,\,k-m+1,\,k-m'+1)$ and $r(A_2|S)=(k-m+1,\,k+1,\,k+m-m'+2)$. By
  Lemma~\ref{2-4-3lemma78}, this implies $|B-S|\geq 3$.
\item[Case 6] $S=\{0,\,4m+2,\,4m'+1\}$, $1 \leq m <m'\leq k$. Let $A=(A_1,\,A_2)$ with
  $A_1=\{4k+3,\,4k+4,\,4k+5\}$ and $A_2=\{4k+5+4m+j\,:\,j \in\{1,2\}\}$. Then $A$ is an $S$-cluster
  as $r(A_1|S)=(k+1,\,k-m+1,\,k-m'+1)$ and $r(A_2|S)=(k-m+1,\,k+1,\,k+m-m'+2)$. Hence, by
  Lemma~\ref{2-4-3lemma78}, $\lvert B-S \rvert \geq 3$.
\item[Case 7] $S=\{0,\,4m+2,\,4m'+3\}$, $1 \leq m<m'\leq k$. In this case
  $A=\{4k+4m+3+j\,:\,j \in\{1,2,3,4\}\}$ is an $S$-block with
  $r(A|S)=(k-m+1,\,k+1,\,k+m-m'+1)$. Hence, by Lemma~\ref{k-2resolving}, $|B-S|\geq 3$.
\item[Case 8] $S=\{0,\,4m+3,\,4m'+i\}$, $i \in \{0,1\}$, $0 \leq m<m'\leq k$ (Note that in this case
  $m$ can be 0 as well). Then $A=(A_1,\, A_2)$ with $A_1=\{4k+2,\,4k+3\}$ and
  $A_2=\{4k+4m'+3+j\,:\,j\in\{1,2,3\} \}$ is an $S$-cluster with $r(A_1|S)=(k+1,\,k-m,\,k-m'+1)$. Hence,
  by Lemma~\ref{2-4-3lemma78}, $|B-S|\geq 3$.
\item[Case 9] $S=\{0,\,4m+3,\,4m'+2\}$, $0 \leq m<m'\leq k$. Let $A=(A_0,\,A_1,\dotsc,
  A_k,\,B_1,\,B_2,\,B_3)$ where
  \begin{align*}
A_0 &= \{4k+4,4k+5,4k+6\},\qquad A_i=\{4k+4+4i,a+5+4i\}\text{ for }1 \leq i \leq k,\\
B_1 &= \{8k+7,1\},\qquad B_{2}=\{2,4\},\qquad B_{3}=\{4m'+1,4m'+3\}.    
  \end{align*}
  Then $A$ is an $S$-cluster with the following representations:
  \begin{align*}
    r(A_0\,\vert\,S) &= (k+1,\,k-m+1,\,k-m'+1),\\
    r(A_i\,\vert\,S) &=
               \begin{cases}
                 (k-i+1,\,k-m+i+1,\,k-m'+i+1)& \text{if } 1 \leq i \leq m,\\
                 (k-i+1,\,k+m-i+2,\,k+m'-i+2)& \text{if } m< i,
               \end{cases}\\
    r(B_1\,\vert\,S) &= (1,m+1,m'+1),\\
    r(B_2\,\vert\,S) &= (1,m+1,m'),\\
    r(B_3\,\vert\,S) &= (m'+1,m'-m,1).
  \end{align*}
 Hence, by Lemma~\ref{8lemmaAkBk}, $|B-S|\geq 3$.
\end{description}

\subsection{$n \equiv 7 \pmod 8$}
We write $n=8k+7$ and prove $\dim(G)\geq 6$. Similar to the previous subsection we
assume that there is a resolving set $B=\{w_1,w_2,w_3,w_4,w_5\}$ such that $w_1<w_2<w_3<w_4<w_5$.
Without loss of generality $w_1=0$ and $w_2,w_3\in\{2,3\dotsc,4k+3\}$, and we make a case
distinction with respect to the set $S=\{0,w_2,w_3\}$.
\begin{description}
\item[Case 1] $w_2\equiv w_3\pmod 4$, that is $S=\{0,\,4m+i,\,4m'+i\}$ with $i\in\{0,1,2\}$ and
  $1\leq m<m'\leq k$. Then $A=\{4k+i+1,\,4k+i+2,\,4k+i+3,\,4k+i+4\}$ is an $S$-block with
  representing vector $r(A\vert S)=(k+1,k-m+1,k-m'+1)$. Since $B-S$ resolves $A$,
  Lemma~\ref{k-2resolving} implies $\lvert B-S\rvert\geq 3$.
\item[Case 2] $S=\{0,\,4m,\,4m'+i\}$, $i\in\{1,2\}$, $1\leq m<m'\leq k$. Then
  $A=\{4(m'+k)+j\,:\,j\in\{3,4,5,6\}\}$ is an $S$-block with $r(A\vert S)=(k-m'+1,\,k+m-m'+1,k+1)$,
  and by Lemma~\ref{k-2resolving}, $\lvert B-S\rvert\geq 3$.
\item[Case 3] $S=\{0,\,4m,\,4m'+3\}$, for $1 \leq m \leq m'\leq k-1$. Let $A=(A_1,\,A_2)$ with
  $A_1=\{4k+2,\,4k+3\}$ and $A_2=\{4k+2+4m+j\,:\,j\in\{2,3,4\}\}$. Then $A$ is an $S$-cluster with
  $r(A_1|S)=(k+1,\,k-m+1,\,k-m')$ and $r(A_2|S)=(k-m+1,\,k+1,\,k+m-m'+1)$. Hence, by
  Lemma~\ref{2-4-3lemma78}, $|B-S|\geq 3$.
\item[Case 4] $S=\{0,\,4m+1,\,4m'\}$, for $1\leq m<m'\leq k$. Let $A=(A_1,\,A_2)$ with $A_1=\{4k+2,\,4k+3\}$
  and $A_2=\{4k+2+4m'+j\,:\,j \in\{2,3,4\}\}$. Then $A$ is an $S$-cluster with
  $r(A_1|S)=(k+1,\,k-m+1,\,k-m'+1)$ and $r(A_2|S)=(k-m'+1,\,k+m-m'+1,\,k+1)$. Hence, by
  Lemma~\ref{2-4-3lemma78}, $|B-S|\geq 3$.
\item[Case 5] $S=\{0,\,4m+1,\,4m'+3\}$, for $1\leq m<m'\leq k-1$. Let $A=(A_1,\,A_2)$ with
  $A_1=\{4k+2,\,4k+3\}$ and $A_2=\{4k+2+4m'+j\,:\,j \in\{2,3,4\}\}$. Then $A$ is an $S$-cluster with
  $r(A_1|S)=(k+1,\,k-m+1,\,k-m'+1)$ and $r(A_2|S)=(k-m'+1,\,k+m-m'+1,\,k+1)$. Hence, by
  Lemma~\ref{2-4-3lemma78}, $|B-S|\geq 3$.
\item[Case 6] $S=\{0,\,4m+1,\,4m'+2\}$, for $1\leq m <m' \leq k$. Then
  $A=\{4k+4m+2+j\,:\,j \in \{1,2,3,4\}\}$ is an $S$-block with representation
  $r(A|S)=(k-m+1,\,k+1,\,k+m-m'+1)$. By Lemma~\ref{k-2resolving}, $|B-S|\geq 3$.
\item[Case 7] $S=\{0,\,4m+2,\,4m'\}$, for $0\leq m <m' \leq k$. Let
  $A=(A_0,\,B_0,\,A_1,\,B_1, \dotsc, A_k,\,B_k,\,A_{k+1})$ with $A_i=\{4k+1+4i,\,4k+2+4i\}$ and
  $B_i=\{4k+3+4i,\,4k+4+4i\}$. Then $A$ is an $S$-cluster with the following representations:
  \begin{align*}
    r(A_0|S) &= (k+1,\,k-m,\,k-m'+1),\\
    r(A_i|S) &=
               \begin{cases}
                 (k-i+2,\,k+i-m,\,k+i-m'+1) & \text{for } 1 \leq i \leq m+1,\\
                 (k-i+2,\,k+m-i+2,\,k+i-m'+1) & \text{for } m+1 < i \leq m',\\
                 (k-i+2,\,k+m-i+2,\,k+m'-i+2)& \text{for } m'<i \leq k+1,
               \end{cases}\\
    r(B_i|S) &=
               \begin{cases}
                 (k-i+1,\,k+i-m+1,\,k+i-m'+1)& \text{for }0 \leq i \leq m, \\
                 (k-i+1,\,k+m-i+2,\,k+i-m'+1)& \text{for }m < i \leq m',\\
                 (k-i+1,\,k+m-i+2,\,k+m'-i+1)& \text{for }m'<i\leq k.
               \end{cases}
  \end{align*}
 Hence, by Lemma \ref{7lemmaAk}, $|B-S|\geq 3$.
\item[Case 8] $S=\{0,\,4m+2,\,4m'+1\}$, $0\leq m <m' \leq k$. Let $A=(A_0,\,A_1,\dotsc A_k,\,B_1,\,B_2)$,
  where
  \begin{align*}
    A_i &= \{3+4i,\,4+4i\},\text{ for }0 \leq i \leq k-1, & A_k&=\{3+4k,\,4+4k,\,5+4k\},\\
    B_{1} &= \{3+4(k+m'),\,4+4(k+m')\}, & B_{2} &=\{5+4(k+m'),\,6+4(k+m')\}.   
  \end{align*}
  Then $A$ is an $S$-cluster with with the following representations.
  \begin{align*}
    r(A_i|S) &=
               \begin{cases}
                 (i+1,\,m-i,\,m'-i), & \text{for }  0 \leq i <m,\\
                 (i+1,\,i-m+1,\,m'-i), & \text{for } m \leq i <m',\\
                 (i+1,\,i-m+1,\,i-m'+1),& \text{for } m' \leq i \leq k,
               \end{cases}\\
    r(B_1|S) & =(k-m'+1,\,k+m-m'+2,\,k+1),\\
    r(B_2|S) &= (k-m'+1,\,k+m-m'+1,\,k+1).    
  \end{align*}
By Lemma~\ref{7lemmaAkBk}, $|B-S|\geq 3$.
\item[Case 9] $S=\{0,\,4m+2,\,4m'+3\}$, for $0\leq m <m' \leq k-1$. Let $A=(A_1,\,A_2)$, where
  $A_1=\{4k+4m+j\,:\,j\in\{4,5,6\}\}$ and $A_2=\{4k+4m'+j\,:\,j \in \{7,8\}\}$. Then $A$ is an
  $S$-cluster with the representations $r(A_1|S)=(k-m+1,\,k+1,\,k+m-m'+1)$ and
  $r(A_2|S)=(k-m',\,k+m-m'+1,\,k+1)$. Hence, by Lemma~\ref{2-4-3lemma78}, $|B-S|\geq 3$.
\item[Case 10] $S=\{0,\,4m+3,\,4m'\}$, for $0 \leq m <m'\leq k$. Let $A=(A_1,\,A_2)$, where
  $A_1=\{4k+5,\,4k+6\}$ and $A_2=\{4k+4m'+j\,:\,j \in\{3,4,5\}\}$. Then $A$ is an $S$-cluster with
  representations $r(A_1|S)=(k+1,\,k-m+1,\,k-m'+2)$ and $r(A_2|S)=(k-m'+1,\,k+m-m'+2,\,k+1)$. By
  Lemma~\ref{2-4-3lemma78}, $|B-S|\geq 3$.
\item[Case 11] $S=\{0,\,4m+3,\,4m'+1\}$,for $0\leq m<m'\leq k$. Let $A=(A_1,A_2,A_3,A_4)$, where
  $A_1=\{1,2\}$, $A_2=\{4k+2,\,4k+3\}$,
  $A_3=\{4k+4,\,4k+5\},A_4=\{4k+4m+j\,:\,j\in\{7,8,9\}\}$. Then $A$ is an $S$-cluster with
  representations $r(A_1|S)=(1,\,m+1,\,m')$, $r(A_2|S)=(k+1,\,k-m,\,k-m'+1)$,
  $r(A_3|S)=(k+1,\,k-m+1,\,k-m'+1)$, $r(A_4|S)=(k-m,\,k+1,\,k+m-m'+2)$. Hence, by
  Lemma~\ref{2-22-3lemma7}, $|B-S|\geq 3$.
\item[Case 12] $S=\{0,\,4m+3,\,4m'+2\}$, for $0 \leq m <m'\leq k$. Let $A=(A_1,\,A_2)$, where
  $A_1=\{4k+5,\,4k+6\}$ and $A_2=\{4k+4m+j\,:\,j \in\{7,8,9\}\}$. Then $A$ is an $S$-cluster with
  representations $r(A_1|S)=(k+1,\,k-m+1,\,k-m'+1)$ and $r(A_2|S)=(k-m,\,k+1,\,k+m-m'+2)$. By Lemma
  \ref{2-4-3lemma78}, this implies $|B-S|\geq 3$.
\item[Case 13] $S=\{0,\,4m+3,\,4m'+3\}$, for $0 \leq m <m'\leq k$. Let $A=(A_1,\,A_2)$, where
  $A_1=\{4k+5,\,4k+6\}$ and $A_2=\{4k+4m+j\,:\,j \in\{7,8,9\}\}$. Then $A$ is an $S$-cluster with
  representations $r(A_1|S)=(k+1,\,k-m+1,\,k-m'+1)$ and $r(A_2|S)=(k-m',\,k+m-m'+1,\,k+1)$. By
  Lemma~\ref{2-4-3lemma78}, this implies $|B-S|\geq 3$.
\end{description}

\subsection{$n \equiv 5 \pmod 8$}
We write $n=8k+5$ and prove $\dim(G)\geq 5$. Assume that there is a resolving set
$B=\{w_1,w_2,w_3,w_4\}$ such that $w_1<w_2<w_3<w_4$. Without loss of generality $w_1=0$ and
$w_2\in\{1,2,\dotsc,4k+2\}$, and we make a case distinction with respect to the set $S=\{0,w_2\}$.
\begin{description}
\item[Case 1] $S=\{0,\,4m\}$, for $1 \leq m \leq k$. Let $A=\{8k+4,\,8k+3,\,8k+2,\,8k+1\}$. Then $A$
  is an $S$-block with the representation $r(A|S)=(1,\,m+1)$. By Lemma~\ref{k-2resolving}, this
  implies $|B-S|\geq 3$.
\item[Case 2] $S=\{0,4m+1\}$, for $1 \leq m \leq k$. Let $A=\{1, 2, 3, 4\}$. Then $A$ is an
  $S$-block with the representation $r(A|S)=(1,m)$. This implies that by Lemma~\ref{k-2resolving},
  $|B-S|\geq 3$.
\item[Case 3] $S=\{0,4m+2\}$, for $1 \leq m \leq k$. Let $A=(A_1,A_2)$, where $A_1=\{8k+1,8k+2\}$
  and $A_2=\{2,3,4\}$. Then $A$ is an $S$-cluster with representations $r(A_1|S)=(1,m+1)$, for
  $1 \leq m <k$, $r(A_1|S)=(1,k)$ for $m=k$ and $r(A_2|S)=(1,m)$. This implies that $|B-S|\geq 3$ by
  Lemma~\ref{2-4-3lemma_r56}.
\item[Case 4] $S=\{0,4m+3\}$, for $1 \leq m < k$. Let $A=(A_1,A_2)$, where $A_1=\{8k+1,8k+2,8k+3\}$
  and $A_2=\{3,4\}$. Then $A$ is an $S$-cluster with representations $r(A_1|S)=(1,m+2)$ and
  $r(A_2|S)=(1,m)$. Hence, by Lemma~\ref{2-4-3lemma_r56}, we have $|B-S|\geq 3$.
\item[Case 5] $S=\{0,1\}$. Consider $A=\{8k+2,8k+3,8k+4,2,3,4\}$. Clearly $A$ is an $S$-block with
  the representation $r(A|S)=(1,1)$. Hence, by Lemma~\ref{r5_lemma_01}, $|B-S|\geq 3$.
\item[Case 6] $S=\{0,2\}$. Let $A=(A_1,A_2)$, where $A_1=\{8k+1,8k+2\}$ and
  $A_2=\{8k+3,8k+4,1,3,4\}$. Then $A$ is an $S$-cluster with representations $r(A_1|S)=(1,2)$ and
  $r(A_2|S)=(1,1)$. This implies that by Lemma~\ref{r5_lemma_02}, $|B-S|\geq3$.
\item[Case 7] $S=\{0,3\}$. Let $A=(A_1,A_2)$, where $A_1=\{8k+1,8k+2,8k+3\}$, and
  $A_2=\{8k+4,1,2,4\}$. Then $A$ is an $S$-cluster with representations as $r(A_1|S)=(1,2)$ and
  $r(A_2|S)=(1,1)$. Hence by Lemma~\ref{r5lemma_03}, $|B-S|\geq 3$.
\end{description}
The above cases are summarized in the table below. The first column has the different choices for
$w_2$, the second column has the $S$-cluster generated by $S=\{0,w_2\}$ and the last column gives
the Lemma which gives the contradiction.
\[
	\begin{tabular}{*{3}{c}}
		\toprule
		         $w_2$          &              $S-cluster$              &           Lemma            \\ \midrule
		 $4m$, $1 \leq m \leq k$  &      $\{8k+4,\,8k+3,\,8k+2,\,8k+1\}$       &   \ref{k-2resolving}  \\
		$4m+1$, $1 \leq m \leq k$ &           $\{1,\, 2,\, 3,\, 4\}$            &   \ref{k-2resolving}  \\
		$4m+2$, $1 \leq m \leq k$ &     $(\{8k+1,\,8k+2\},\{2,\,3,\,4\})$     & \ref{2-4-3lemma_r56} \\
		$4m+3$, $1 \leq m \leq k$ &    $(\{8k+1,\,8k+2,\,8k+3\},\{3,\,4\})$    & \ref{2-4-3lemma_r56} \\
		           1            &      $\{8k+2,\,8k+3,\,8k+4,\,2,\,3,\,4\}$       &   \ref{r5_lemma_01}   \\
		           2            & $(\{8k+1,\,8k+2\},\,\{8k+3,\,8k+4,\,1,\,3,\,4\})$ &   \ref{r5_lemma_02}   \\
		           3            & $(\{8k+1,\,8k+2,\,8k+3\},\{8k+4,\,1,\,2,\,4\})$ &   \ref{r5lemma_03}   \\ \bottomrule
	\end{tabular}
\]

\subsection{$n \equiv 2\pmod 8$}
We write $n=8k+2$ and prove that this implies $\dim(G)=5$. Suppose $B=\{w_1,w_2,w_3,w_4\}$ be a
resolving set of $G$ with $w_1 < w_2 <w_3 <w_4$. Without loss of generality, $w_1=0$ and
$w_2 \in \{1,2, \dotsc,4k\}$. We make a case distinction with respect to the possibilities for the
set $S=\{0,\,w_2\}$.
\begin{description}
\item[Case 1] $S=\{0,\,4m\}$, for $1 \leq m \leq k-1$. Then $A=\{8k-2,\,8k-1,\,8k,\,8k+1\}$ is an
  $S$-block with the representation $r(A|S)=(1,\,m+1)$. By Lemma~\ref{k-2resolving}, this implies
  $|B-S|\geq 3$.
\item[Case 2] $S=\{0,\,4m+1\}$, for $1 \leq m \leq k-1$. Then $A=\{1,2,3,4\}$ is an $S$-block
  with representation $r(A|S)=(1,\,m)$. By Lemma~\ref{k-2resolving}, this implies $|B-S|\geq 3$.
\item[Case 3] $S=\{0,\,4m+2\}$, for $1 \leq m \leq k-2$. Then $A=(A_1,\,A_2)$ with
  $A_1=\{8k-2,8k-1\}$ and $A_2=\{2,3,4\}$ is an $S$-cluster with representations $r(A_1|S)=(1,m+1)$
  and $r(A_2|S)=(1,m)$. This implies that $|B-S|\geq 3$ by Lemma~\ref{2-4-3lemma_r56}.
\item[Case 4] $S=\{0,\,4m+3\}$, for $1 \leq m < k-2$. Then $A=(A_1,\,A_2)$ with
  $A_1=\{8k-2,\,8k-1,\,8k\}$ and $A_2=\{3,\,4\}$ is an $S$-cluster with representations
  $r(A_1|S)=(1,\,m+2)$ and $r(A_2|S)=(1,\,m)$. Hence, by Lemma~\ref{2-4-3lemma_r56}, $|B-S|\geq 3$.
\item[Case 5] $S=\{0,\,1\}$. Then $A=\{8k-1,\,8k,\,8k+1,\,2,\,3,\,4\}$ is an $S$-block with representation $r(A|S)=(1,1)$. By Lemma~\ref{r5_lemma_01}, $|B-S|\geq 3$.
\item[Case 6] $S=\{0,\,2\}$. Then $A=(A_1,\,A_2)$ with $A_1=\{8k-2,\,8k-1\}$ and
  $A_2=\{8k,\,8k+1,\,1,\,3,\,4\}$ is an $S$-cluster with representations $r(A_1|S)=(1,\,2)$
  and $r(A_2|S)=(1,\,1)$. By Lemma~\ref{r5_lemma_02}, $|B-S|\geq3$.
\item[Case 7] $S=\{0,\,3\}$. Then $A=(A_1,\,A_2)$ with $A_1=\{8k-2,\,8k-1,\,8k\}$ and
  $A_2=\{8k+1,\,1,\,2,\,4\}$ is an $S$-cluster with representations $r(A_1|S)=(1,\,2)$ and
  $r(A_2|S)=(1,\,1)$. Hence, by Lemma~\ref{r5lemma_03}, $|B-S|\geq 3$.
\item[Case 8] $S=\{0,\,4k\}$. Then $A=\{8k-2,\,8k-1,\,8k,\,1,\,2,\,3\}$ is an $S$-cluster with
  $r(A|S)=(1,\,k)$. Hence, by Lemma~\ref{r5_lemma_01}, $|B-S|\geq 3$.
\item[Case 9] $S=\{0,\,4k-1\}$. Then $A=(A_1,\,A_2)$ with $A_1=\{4,\,3\}$ and
  $A_2=\{2,\,1,\,8k+1,\,8k-1,\,8k-2\}$ is an $S$-cluster with representations $r(A_1|S)=(1,\,k-1)$ and
  $r(A_2|S)=(1,\,k)$. Hence, by Lemma~\ref{r5_lemma_02}, $|B-S|\geq 3$.
\item[Case 10] $S=\{0,\,4k-2\}$. Then $A=(A_1,\,A_2)$ with $A_1=\{4,\,3,\,2\}$ and
  $A_2=\{1,\,8k+1,\,8k,\,8k-2\}$ is an $S$-cluster with representations $r(A_1|S)=(1,\,k-1)$ and
  $r(A_2|S)=(1,\,k)$. Hence, by Lemma~\ref{r5lemma_03}, $|B-S|\geq 3$.
\end{description}
The above cases are summarised in the table below. 
\[
\begin{tabular}{*{3}{c}}
	\toprule
	          $w_2$           &              $S$-cluster              &          Lemma           \\ \midrule
	 $4m,\ 1 \leq m \leq k-1$  &        $\{-1,\, -2,\, -3,\, -4\}$         &  \ref{k-2resolving} \\
	$4m+1,\ 1 \leq m \leq k-1$ &          $\{1,\, 2,\, 3,\, 4\}$           &  \ref{k-2resolving} \\
	$4m+2,\ 1 \leq m \leq k-2$ &      $\{-4,\, -3\},\, \{2,\, 3,\, 4\}$      &   \ref{2-4-3lemma_r56}     \\
	$4m+3,\ 1 \leq m \leq k-2$ &     $\{-4,\, -3,\, -2\},\, \{3,\, 4\}$      &   \ref{2-4-3lemma_r56}     \\
	            1             &     $\{-3,\, -2,\, -1,\, 2,\, 3,\, 4  \}$     &    \ref{r5_lemma_01}    \\
	            2             & $\{ -4,\, -3 \}, \{-2,\, -1,\, 1,\, 3,\, 4\}$ &    \ref{r5_lemma_02}    \\
	            3             &  $\{ -4,\, -3,\, -2 \}\{-1,\, 1,\, 2,\, 4\}$  &    \ref{r5lemma_03}    \\
	          $4k$            &     $\{-4 ,\, -3,\, -2,\, 1,\, 2,\, 3\}$      &   \ref{r5_lemma_01}    \\
	         $4k-1$           &  $\{4,\, 3\},\, \{2,\, 1,\, -1,\, -3,\, -4\}$   &    \ref{r5_lemma_02}    \\
	         $4k-2$           &  $\{4,\, 3,\, 2\},\,\{1,\, -1,\, -2,\, -4 \}$   &    \ref{r5lemma_03}    \\ 
 \bottomrule
\end{tabular}
\]

\subsection{$n \equiv 3 \pmod 8$}
We write $n=8k+3$ with $k\geq 3$ and prove $\dim(G)\geq 5$. Suppose $B=\{w_1,w_2,w_3,w_4\}$ is a
resolving set of $G$ with $w_1 < w_2 <w_3 <w_4$. Without loss of generality, $w_1=0$ and
$w_2,w_3\in\{1,2,\dotsc,4k+1\}$. We make a case distinction with respect to the possibilities for
the set $S$.
\begin{description}
\item[Case 1] $S=\{0,\,4m\}$, $1 \leq m \leq k-1$. Then $A=\{8k+2,\,8k+1,\,8k,\,8k-1\}$ is an
  $S$-block with the representation $r(A|S)=(1,\,m+1)$. Hence, by Lemma~\ref{k-2resolving},
  $|B-S|\geq 3$.
\item[Case 2] $S=\{0,\,4m+1\}$, $1 \leq m \leq k$. Then $A=\{1,\,2,\,3,4\}$ is an $S$-block with
  the representation $r(A|S)=(1,\,m)$. Hence, by Lemma~\ref{k-2resolving}, $|B-S|\geq 3$.
\item[Case 3] $S=\{0,\,4m+2\}$, $1 \leq m \leq k$. Then $A=\{4k+3,\,4k+4,\,4k+5,\,4k+6\}$ is an
  $S$-block with the representation $r(A|S)=(k,\,k-m+1)$. Hence, by Lemma~\ref{k-2resolving},
  $|B-S|\geq 3$.
\item[Case 4] $S=\{0,\,4m+3,\,w_3\}$, $1 \leq m \leq k-2$ and $w_3=4m'+i$ $i\in\{0,1,2,3\}$ and
  $m<m'$. Now depending on the choice of $w_3$, we have the following possibilities as given in the
  table below. The first column gives the possibilities of $w_3$, the second column has the set $A$
  which is an $S$-cluster, the third column gives the representation $r(A|S)$, and the last column
  contains the lemma by which we have $|B-S|\geq 2$.
  \[\begin{tabular}{*{4}{c}}
      \toprule
      $w_3$&$A$ ( $S$-cluster)&$r(A|S)$& Lemma \\
      \midrule
      $4m',\ m < m' \leq k$& $\{4k-1,\,4k-2,\,4k-3\} $&$(k,\,k-m-1,\,k-m')$& \ref{k-2resolving}  \\ 
      $4m'+1,\  m< m' \leq k$& $\{4m+1,\,4m+2,\,4m+4\}$ &$(m+1,\,1,\,m'-m)$& \ref{k-2resolving} \\ 
      $4m'+2,\ m< m' \leq k-1$ & $\{4k+4,\,4k+5,\,4k+6\}$ &$(k,\,k-m+1,\,k-m'+1)$& \ref{k-2resolving}  \\ 
      $4m'+3,\ m< m' \leq k-1$& $\{4k+4,\,4k+5,\,4k+6\}$ &$(k,\,k-m+1,\,k-m'+1)$& \ref{k-2resolving} \\ 
      \bottomrule
    \end{tabular}\]
\item[Case 5] $S=\{0,\,1,\,w_3\}$, where $w_3 \in \{2,3,\dotsc,4k+1\}$.
  \[\begin{tabular}{*{4}{c}}
      \toprule
      $w_3$& $A$ ($S$-cluster)&$r(A|S)$& Lemma \\ 
      \midrule
      $4m',\ 1\leq m' \leq k-1$& $\{8k+2,\,8k+1,\,8k\} $&$(1,\,1,\,m'+1)$& \ref{k-2resolving} \\ 
      $4m'+1,\ 1\leq m' \leq k$& $\{2,\,3,\,4\}$ &$(1,\,1,\,m')$& \ref{k-2resolving} \\ 
      $4m'+2,\ 1\leq m' \leq k-1$ & $\{4k+4,\,4k+5,\,4k+6\}$ &$(k,\,k,\,k-m'+1)$& \ref{k-2resolving}  \\ 
      $4m'+3,\ 1\leq m' \leq k-1$& $\{4k+4,\,4k+5,\,4k+6\}$ &$(k,\,k,\,k-m'+1)$& \ref{k-2resolving}  \\ 
      2 & $\{8k+1,\,8k+2,\,3,\,4\}$ &(1,\,1,\,1)& \ref{mod3lemma2-3-2} \\ 
      3 & $(\{8k,\,8k+1\},\{8k+2,\,2,\,4\},\{6,\,7\})$&((1,\,1,\,2),\,(1,\,1,\,1),\,(2,\,2,\,1))& \ref{mod3lemma2-7-2a} \\ 
      $4k$ & $(\{4k-2,\,4k-1\},\,\{4k+5,\,4k+6\})$ &$((k,\,k,\,1),\,(k,\,k,\,2))$ & \ref{mod3lemma2-5-2} \\ 
      \bottomrule
    \end{tabular}\]
  The assumption $k\geq 3$ is used for $S=\{0,\,1,\,3\}$: for $k=1$ the set $\{6,\,7\}$ is not an
  $S$-block because $d(0,6)\neq d(0,7)$.
\item[Case 6] $S=\{0,\,2,\,w_3\}$, where $w_3 \in \{3,4,\dotsc,4k+1\}$. 
  \[
    \begin{mytabularwrap}
      \begin{tabular}{*{4}{c}}
	\toprule
	$w_3$&$A$ ($S$-cluster)&$r(A|S)$& Lemma \\ 
	\midrule
	$4m',\ 1\leq m' \leq k$& $(\{4k+1,\,4k+2\},\,\{4k+3,\,4k+4\},\,\{4k+5,\,4k+6\}) $& $\rho_1$ & \ref{mod3lemma222}  \\ 
	$4m'+1,\ 1\leq m' \leq k$& $\{1,\,3,\,4\}$ &$(1,\,1,\,m')$& \ref{k-2resolving} \\ 
	$4m'+2,\ 1\leq m' \leq k-1$ & $\{4k+1,\,4k+2\},\,\{4k+3,\,4k+4\},\,\{4k+5,\,4k+6\}$ & $\rho_2$ & \ref{mod3lemma222}  \\ 
	$4m'+3,\ 1\leq m' \leq k-2$& $\{8k-1,\,8k\},\,\{3,\,4\}$ &$((1,\,2,\,m'+2),\,(1,\,1,\,m'))$& \ref{mod3lemma2-5-2}   \\ 
	3 & $(\{8k-1,\,8k\},\,\{8k+2,\,1,\,4\})$ &((1,\,2,\,2),\,(1,\,1,\,1))& \ref{mod3lemma2-7-2b}  \\ 
	$4k-1$ & $(\{4k-2,\,4k-3\},\,\{4k-4,\,4k-5\},\,\{4k-6,\,4k-7\})$&$\rho_3$& \ref{mod3lemma222} \\ 
	\bottomrule
      \end{tabular}
    \end{mytabularwrap}
  \]
  The representations $\rho_1$, $\rho_2$ and $\rho_3$ in the table are
  \begin{align*}
    \rho_1 &= ((k+1,\,k,\,k-m'+1),\,(k,\,k+1,\,k-m'+1),\,(k,\,k,\,k-m'+2)),\\
    \rho_2 &= ((k+1,\,k,\,k-m'),\,(k,\,k+1,\,k-m'+1),\,(k,\,k,\,k-m'+1)),\\
    \rho_3 &= ((k,\,k-1,\,1),\,(k-1,\,k-1,\,1),\,(k-1,\,k-2,\,2)).        
  \end{align*}
The assumption $k\geq 3$ is used for $S=\{0,\,1,\,4k-1\}$.
\item[Case 7] $S=\{0,\,3,\,w_3\}$, where $w_3 \in \{4,5, \dotsc, 4k+1\}$. 
  \[
    \begin{mytabularwrap}
      \begin{tabular}{*{4}{c}}
	\toprule
	$w_3$&$A$ ($S$-cluster)&r(A|S)& Lemma \\ 
	\midrule
	$4$& $\{5,\,6,\,7\} $&(2,\,1,\,1)& \ref{k-2resolving}  \\ 
	$4m',\ 2\leq m' \leq k$& $\{5,\,6,\,7\} $&$(2,\,1,\,m'-1)$& \ref{k-2resolving}  \\ 
	$4m'+1,\ 1\leq m' \leq k$& $\{1,\,2,\,4\}$&$(1,\,1,\,m')$ & \ref{k-2resolving} \\ 
	$4m'+2,\ 1\leq m' \leq k-1$ & $(\{4k+1,\,4k+2\},\,\{4k+4,\,4k+5\})$ &$((k+1,\,k,\,k-m'),\,(k,\,k+1,\,k-m'+1))$& \ref{mod3lemma2-1-2}  \\ 
	$4m'+3,\ 1\leq m' \leq k-1$& $\{-1,\,1,\,2\}$ &$(1,\,1,\,m'+1)$& \ref{k-2resolving} \\ 
	\bottomrule
      \end{tabular}
    \end{mytabularwrap}
  \]

\item[Case 8] $S=\{0,\,4k,\,4k+1\}$. Then $A=\{4k-1,\,4k-2,\,4k-3\}$ is an $S$-block with the
  representation $r(A|S)=(k,\,1,\,1)$. Hence, by Lemma~\ref{k-2resolving}, $|B-S|\geq 2$.
\item[Case 9] $S=\{0,\,4k-1,\,w_3\}$, where $w_3\in\{4k,\,4k+1\}$.
\[
\begin{tabular}{*{4}{c}}
\toprule
$w_3$&$A$ ($S$-cluster)&$r(A|S)$& Lemma \\ 
\midrule
$4k$& $(\{8k-3,\,8k-2\},\,\{1,\,2\})$ &$((2,\,k,\,k),\,(1,\,k,\,k))$& \ref{mod3lemma2-5-2}  \\ 
$4k+1$& $\{4k,\,4k-2,\,4k-3\}$ &$(k,\,1,\,1)$& \ref{k-2resolving} \\ 
\bottomrule
\end{tabular}\]	
\end{description}

\section{Upper bounds}\label{sec:upper}
The upper bound for the cases which can not be derived from~\cite{Vetrik2016}, are proved in this section. 

\begin{lemma}
Let $G=C(n,\pm\{1,2,3,4\})$ be a circulant graph with $n=8k+9$. Then $\dim(G)\leq 6$.
\end{lemma}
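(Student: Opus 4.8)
The plan is to exhibit an explicit set $B\subseteq V$ with $\lvert B\rvert=6$ and to verify that the map $v\mapsto r(v\vert B)$ is injective. Two structural facts guide the choice of $B$. First, $B$ cannot contain four consecutive vertices: if $\{i,i+1,i+2,i+3\}\subseteq B$, then $\{i+4k+4,\dots,i+4k+8\}$ is an $S$-block for $S=\{i,i+1,i+2,i+3\}$ lying in a window $\{j,\dots,j+4\}$, which $B\setminus S$ must resolve, so Lemma~\ref{k-2resolving} forces $\lvert B\setminus S\rvert\geq 4$ and hence $\lvert B\rvert\geq 8$. Second, since every resolving set meets each $R_i$, a necessary condition on $B$ is $B\cap R_v\neq\emptyset$ for all $v$.

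First I would recast this covering condition in a usable form. A vertex $p$ resolves $\{v,v+1\}$ exactly when $v-p\in G$, where
\[G=\{0,4,\dots,4k\}\cup\{-1,-5,\dots,-(4k+1)\}.\]
Reducing modulo $n=8k+9$ shows that $G$ is precisely the set of multiples of $4$ in $\{0,1,\dots,8k+8\}$ with the single element $4k+4$ deleted. Since $\gcd(4,n)=1$, multiplication by $4^{-1}$ is a bijection of $\Z_n$ carrying $G$ onto the interval $\{0,1,\dots,2k+2\}$ with its midpoint $k+1$ removed. Under this relabelling the requirement $B+G=\Z_n$ becomes: six translates of an interval of length $2k+3$, each missing its centre, must cover $\Z_n$. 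This is an explicit finite covering problem; for small $k$ one checks a solution by hand (for $k=1$, $n=17$, the translate-centres $\{0,3,5,8,10,15\}$ already cover, with each ``hole'' filled by the body of another translate), and I would extend this to a placement uniform in $k$. Transporting a solution back through $4^{-1}$ yields a candidate $B$ that is automatically spread out and free of four consecutive elements.

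With $B$ fixed, the second step is to check that $B$ resolves every pair of vertices, not merely the adjacent ones. Here I would compute $r(v\vert B)$ directly from the distance formula, organising the vertices by their position relative to the six reference points, and confirm that no two representations coincide. The dangerous coincidences are exactly the $S$-blocks and $S$-clusters catalogued in Section~\ref{sec:auxiliary}, in particular the $8$-vertex antipodal blocks $D_p=\{p+4k+1,\dots,p+4k+8\}$; for each such block one verifies that the remaining references separate all of its vertices, which is where having six well-placed points is genuinely needed.

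I expect the construction step to be the main obstacle. The deleted centres (the ``holes'' $k+1$ in each translate) are what make the covering delicate: a valid six-point configuration must arrange that every hole falls in the body of another translate, and one must produce such a configuration uniformly in $k$ and prove the covering rigorously. Once $B$ is in hand, ruling out the remaining non-adjacent coincidences and checking the antipodal blocks is a routine, if lengthy, application of the distance formula together with Lemma~\ref{k-2resolving} and the observations of Section~\ref{sec:auxiliary}.
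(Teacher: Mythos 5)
Your preliminary reductions are correct and even clean: the identification of $R_v$ as a translate of $G=\{0,4,\dotsc,4k\}\cup\{-1,-5,\dotsc,-(4k+1)\}$, the observation that modulo $8k+9$ this is the set of multiples of $4$ with $4k+4$ deleted, and the $4^{-1}$ relabelling turning the condition $B\cap R_v\neq\emptyset$ for all $v$ into a punctured-interval covering problem are all valid, as is the argument that four consecutive vertices in $B$ would force $\lvert B\rvert\geq 8$ via Lemma~\ref{k-2resolving}. But the proposal is a plan, not a proof, and the gap is exactly where you predict it: no set $B$ is ever exhibited for general $k$. The covering solution is checked only for $k=1$ and its extension is deferred (``I would extend this to a placement uniform in $k$''), and the injectivity verification --- which is the actual content of the statement, since meeting every $R_v$ is only a necessary condition --- is deferred as ``routine, if lengthy.'' The paper short-circuits your construction step entirely by simply writing down $X=\{0,1,4,7,4k+6,4k+7\}$ and then spends the whole proof on the verification you postpone: writing $a=4m_1+r_1$, $b=4m_2+r_2$ with $m_i\in\{0,\dotsc,2k+1\}$, $r_i\in\{1,2,3,4\}$, it splits into the cases $m_1,m_2\leq k$; $m_1,m_2\geq k+1$; and mixed, where $d(0,\cdot)$ disposes of everything except the diagonals $m_1=m_2$ and $m_1+m_2=2k+1$, and each residual subcase is killed by an explicit distance computation against one of the six reference points.

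A further caution: your claim that the dangerous coincidences ``are exactly the $S$-blocks and $S$-clusters catalogued in Section~\ref{sec:auxiliary}'' is not right. Those lemmas enumerate configurations that are expensive to resolve, tailored to the lower-bound arguments; they are nowhere near an exhaustive list of representation coincidences for an arbitrary candidate $B$. For instance, the pairs handled in Case~3 of the paper's proof, with $m_1+m_2=2k+1$ (vertices roughly antipodal through $0$, at equal distance from $0$, $1$ and several other reference points), do not appear in that catalogue, and they are precisely where the placement of the point $4$ in the paper's basis earns its keep. So even granting a uniform-in-$k$ solution of your covering problem, the second stage cannot be discharged by citing Section~\ref{sec:auxiliary}; it requires a case analysis of all pairs, of essentially the same length as the paper's. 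As it stands the proposal establishes necessary conditions on $B$ and a correct framework, but proves neither the existence of a suitable $B$ nor that any candidate resolves $G$.
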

\begin{proof}
  We show that the set $X=\{0,1,4,7,4k+6,4k+7\}$ is a metric basis for $G$. For any two vertices
  $a,b\in V-X$, we need to show that $d(a,x)\neq d(b,x)$ for some $x\in X$. Writing $a=4m_1+r_1$ and
  $b=4m_2+r_2$ with $m_1,m_2\in \{0,1,\dotsc,2k+1\}$ and $ r_1,r_2 \in \{1,2,3,4\}$, we have the
  following cases.
\begin{description}
\item[Case 1] $m_1,m_2\leq k$. If $m_1 \neq m_2$, then
  $d(0,a)\neq d(0,b)$, as $d(0,a)=m_1+1$ and $d(0,b)=m_2+1$. If $m_1=m_2$, without loss of
  generality $r_1<r_2$. The following list describes how $a$ and $b$ are resolved for each of the
  possible values of $a$.
  \begin{align*}
    a=4m_1+1,\ 0\leq m_1 \leq k &\implies d(1,a)=m_1,\ d(1,b)=m_1+1, \\
    a=4m_1+2,\ 1\leq m_1 \leq k &\implies d(4k+7,a)=k-m_1+2,\ d(4k+7,b)=k-m_1+1, \\
    a=4m_1+3,\ 1\leq m_1 \leq k &\implies d(7,a)=m_1-1,\ d(7,b)=m_1, \\
    a=2 &\implies d(7,a)=2,\ d(7,b)=1, \\      
    a=3 &\implies d(4,a)=1,\ d(4,b)=0.
  \end{align*}
\item[Case 2] $m_1,m_2\geq k+1$. If $m_1 \neq m_2$, then $d(0,a)\neq d(0,b)$, as $d(0,a)=(2k+2)-m_1$
  and $d(0,b)=(2k+2)-m_2$. If $m_1=m_2$, without loss of generality $r_1<r_2$. The following list
  describes how $a$ and $b$ are resolved for each of the possible values of $a$.
  \begin{align*}
      a=4m_1+1,\ k+2\leq m_1 \leq 2k+1 &\implies d(1,a)=2k-m_1+3,\ d(1,b)=2k-m_1+2, \\      
      a=4m_1+2,\ k+1\leq m_1 \leq 2k+1 &\implies d(4k+6,a)=m_1-k-1,\ d(4k+6,b)=m_1-k, \\
    a=4m_1+3,\ k+1\leq m_1 \leq 2k+1 &\implies d(4k+7,a)=m_1-k-1,\ d(4k+7,b)=m_1-k,\\
    a=4k+5 &\implies d(7,a)=k+1,\ d(7,b)=k. 
  \end{align*}
\item[Case 3] $m_1\leq k$ and $m_2\geq k+1$. If $m_1+m_2\neq 2k+1$, then $d(0,a)\neq d(0,b)$, as $d(0,a)=m_1+1$ and $d(0,b)=2k+2-m_2$. If $m_1+m_2=2k+1$, then we have the following possibilities.
  \begin{align*}
    a=4m_1+r_1,\ 1 \leq m_1 \leq k-1 &\implies d(4,a)=m_1,\ d(4,b)=m_1+2, \\    
    a=4k+r_1 &\implies d(4,a)=k, d(4,b)=k+1,\\
    a=r_1 &\implies d(4,a) \in \{1,0\},\ d(4,b)=2.\qedhere
  \end{align*}
\end{description}
\end{proof}
\begin{lemma}
  Let $G=C(n,\pm\{1,2,3,4\})$ with $n=8k+7$. Then $\dim(G)\leq 6$.
\end{lemma}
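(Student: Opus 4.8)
The plan is to mirror the method of the preceding lemma (the $n=8k+9$ case): exhibit one explicit $6$-element set $X$ and verify directly that distinct vertices of $V-X$ receive distinct representations. A natural candidate, by analogy with the set $\{0,1,4,7,4k+6,4k+7\}$ used for $n=8k+9$, is a set $X$ built from a cluster $\{0,1,4,7\}$ near the origin together with two vertices $\{4k+a,\,4k+b\}$ near the antipode, with the small offsets $a,b$ tuned to the present residue $r=7$. The role of the single vertex $0$ is the same as before: since $d(0,4m+s)=m+1$ for a vertex on the ``near'' side and $d(0,4m+s)=(2k+2)-m$ (up to the $r=7$ boundary correction) on the ``far'' side, the distance to $0$ essentially records the layer index $m$, and so $0$ already resolves any two vertices lying in different layers on the same side of the circle.

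First I would write an arbitrary vertex as $v=4m+s$ with $s\in\{1,2,3,4\}$ (handling the wrap-around coming from $8k+7=4(2k+1)+3$ as a boundary subcase), and split into the three cases used in the $n=8k+9$ proof: both vertices on the near side ($m_1,m_2\le k$), both on the far side ($m_1,m_2\ge k+1$), and one on each side. In the first two cases, vertex $0$ reduces everything to the situation $m_1=m_2$, i.e. two vertices within a single distance layer $\{4m+1,\dots,4m+4\}$; these four-element layers are then resolved by $1,4,7$ (and by the antipodal pair for layers near the antipode), exactly as the displayed distance lists in the $n=8k+9$ lemma do. The opposite-side case is governed by the condition $m_1+m_2=2k+1$, where a single well-placed vertex such as $4$ or $7$ separates the two reflected candidates.

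The main obstacle will be the \emph{antipodal band} $D_0=\{4k+1,\dots,4k+6\}$, which for $r=7$ consists of six vertices all at distance $k+1$ from $0$. Here the coordinate $0$ is useless, so the remaining five vertices of $X$ must by themselves give these six vertices distinct representations; note this is exactly the tight regime, since Lemma~\ref{k-2resolving} shows that resolving six consecutive vertices needs five resolvers. One checks that $1$ and $7$ peel off $4k+1$ (and, on the far side, the largest index), $4$ splits the band into $\{4k+1,\dots,4k+4\}$ and $\{4k+5,4k+6\}$, and the two antipodal vertices $4k+a,4k+b$ must be chosen so that they separate the otherwise-coincident middle pair (the troublesome $4k+2$ versus $4k+3$, which agree on all of $0,1,4,7$). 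The delicate point is that $X$ must be \emph{asymmetric}: any reflection $v\mapsto c-v$ fixing $X$ setwise would force pairs of reflected vertices to have permuted, hence potentially equal, representations, so the offsets $a,b$ have to be selected to destroy every such symmetry while still resolving the band. Once a set passing these checks is fixed, the remainder is the same bounded, tabular case verification as in the $n=8k+9$ lemma.
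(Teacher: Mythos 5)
There is a genuine gap: your proposal never actually produces a resolving set. The entire argument hinges on choosing offsets $a,b$ so that $\{0,1,4,7,4k+a,4k+b\}$ resolves the antipodal band $\{4k+1,\dotsc,4k+6\}$ while breaking every reflection symmetry, but you explicitly defer that choice (``the offsets $a,b$ have to be selected\ldots'') and never verify any candidate. Since the correctness of the whole construction rests precisely on that verification --- and since, as you yourself observe, this is the tight regime where a single bad coincidence kills the set --- what you have is a search strategy, not a proof. A complete argument must exhibit one concrete $X$ and carry out the tabular case check; until then the claim $\dim(G)\leq 6$ is unproven. There is also a technical error in your supporting remark: Lemma~\ref{k-2resolving} applies only to sets $A\subseteq\{i,\dotsc,i+t\}$ with $t=4$, i.e.\ to at most five vertices in a window of five consecutive positions, so it says nothing about the six consecutive vertices $4k+1,\dotsc,4k+6$; the assertion that resolving six consecutive vertices needs five resolvers does not follow from it (and is false in general --- for instance the single vertex $i+5$ already splits $\{i,\dotsc,i+5\}$ into $\{i\}$, $\{i+1,\dotsc,i+4\}$, $\{i+5\}$, after which three further vertices suffice by the lemma).

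It is worth noting that the paper avoids all of these difficulties by making a structurally different choice: it takes $X=\{0,1,2,3,4,5\}$, six \emph{consecutive} vertices, and writes every vertex in the signed form $\pm(4m+r)$ with $0\leq m\leq k$, $1\leq r\leq 4$. The vertex $0$ pins down the layer $m$ outright (there is no six-element antipodal block of equal $0$-distances in these coordinates to agonize over), and then three short cases --- both vertices positive, both negative, mixed --- are settled by the resolvers $r_1$, $4-r_1$ (or $6-r_1$ in the boundary layer $m=k$), and $4$ respectively. If you want to salvage your approach, the honest advice is to abandon the $\{0,1,4,7,4k+a,4k+b\}$ template, whose asymmetry conditions you would have to check painstakingly, and adopt a consecutive block, where the verification collapses to a few one-line distance computations.
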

\begin{proof}
We show that $X=\{0,1,2,3,4,5\}$ is a metric basis for $G$. For any two vertices
$a,b\in V=\{0,\pm1,\pm2, \dotsc, \pm (4k+3)\}$, we need to show that $d(a,x)\neq d(b,x)$ for some
$x\in X$. Let $a=\pm(4m_1+r_1)$ and $b=\pm(4m_2+r_2)$, where
$0 \leq m_1,m_2 \leq k$, $1\leq r_1,r_2 \leq 4$ (if $m_i=k$, then $1 \leq r_i\leq 3$). Note that
$d(0,a)=m_1+1$ and $d(0,b)=m_2+1$, so we may assume $m_1=m_2=m$.
\begin{description}
\item[Case 1] $a=4m+r_1$ and $b=4m+r_2$. Without loss of
  generality $r_1<r_2$, hence $r_1\in\{1,2,3\}$, and $d(r_1,a)=m_1$, $d(r_1,b)=m_1+1$.
\item[Case 2] $a=-(4m+r_1)$ and $b=-(4m+r_2)$. If $m\leq k-1$,
  without loss of generality $r_1<r_2$, hence $r_1\in\{1,2,3\}$, and $d(4-r_1,a)=m_1$,
  $d(4-r_1,b)=m_1+1$. If $m_1=m_2=k$, without loss of generality $r_1<r_2$, hence $r_1\in\{1,2\}$,
  and $d(6-r_1,a)=k+1$, $d(6-r_1,b)=k$.
\item[Case 3] $a=4m+r_1$ and $b=-(4m+r_2)$. If $m=0$ then $d(4,a)\in\{0,1\}$, $d(4,b)=2$. If $1\leq
  m\leq k-1$ then $d(4,a)=m$, $d(4,b)=m+2$. If $m=k$ then $d(4,a)=k$, $d(4,b)=k+1$.\qedhere	
\end{description}
\end{proof}

%\bibliographystyle{amsplain}
%\bibliography{Metricbasis}

\begin{thebibliography}{10}

\bibitem{BorchertGosselin2014}
Adam Borchert and Shonda Gosselin, \emph{The metric dimension of circulant
  graphs and {C}ayley hypergraphs}, Utilitas Mathematica (to appear), available
  online: \href{http://ion.uwinnipeg.ca/~sgosseli/Borchert Gosselin
  posted.pdf}{ion.uwinnipeg.ca/~sgosseli/Borchert Gosselin posted.pdf}.

\bibitem{ChartrandErohJohnsonEtAl2000}
Gary Chartrand, Linda Eroh, Mark~A. Johnson, and Ortrud~R. Oellermann,
  \emph{Resolvability in graphs and the metric dimension of a graph}, Discrete
  Applied Mathematics \textbf{105} (2000), no.~1–3, 99 -- 113.

\bibitem{ChauGosselin}
Kevin Chau and Shonda Gosselin, \emph{The metric dimension of circulant graphs
  and their cartesian products}, Opuscula Mathematica \textbf{37} (2017),
  no.~4, 509--534.

\bibitem{GrigoriousManuelMillerEtAl2014}
Cyriac Grigorious, Paul Manuel, Mirka Miller, Bharati Rajan, and Sudeep
  Stephen, \emph{On the metric dimension of circulant and {H}arary graphs},
  Applied Mathematics and Computation \textbf{248} (2014), 47 -- 54.

\bibitem{HararyMelter1976}
Frank Harary and Robert~A. Melter, \emph{On the metric dimension of a graph},
  Ars Combinatoria \textbf{2} (1976), 191--195.

\bibitem{ImranBaigBokharyEtAl2012}
Muhammad Imran, A.Q. Baig, Syed Ahtsham Ul~Haq Bokhary, and Imran Javaid,
  \emph{On the metric dimension of circulant graphs}, Applied Mathematics
  Letters \textbf{25} (2012), no.~3, 320 -- 325.

\bibitem{KhullerRaghavachariRosenfeld1996}
Samir Khuller, Balaji Raghavachari, and Azriel Rosenfeld, \emph{Landmarks in
  graphs}, Discrete Applied Mathematics \textbf{70} (1996), no.~3, 217--229.

\bibitem{MelterTomescu1984}
Robert~A Melter and Ioan Tomescu, \emph{Metric bases in digital geometry},
  Computer Vision, Graphics, and Image Processing \textbf{25} (1984), no.~1,
  113--121.

\bibitem{Slater75}
Peter~J Slater, \emph{Leaves of trees}, Congressus Numerantium \textbf{14}
  (1975), 549--559.

\bibitem{Vetrik2016}
Tomas Vetrik, \emph{The metric dimension of circulant graphs}, Canad. Math.
  Bull. \textbf{60} (2017), 206--216.

\end{thebibliography}

\providecommand{\bysame}{\leavevmode\hbox to3em{\hrulefill}\thinspace}
\providecommand{\MR}{\relax\ifhmode\unskip\space\fi MR }
% \MRhref is called by the amsart/book/proc definition of \MR.
\providecommand{\MRhref}[2]{%
  \href{http://www.ams.org/mathscinet-getitem?mr=#1}{#2}
}
\providecommand{\href}[2]{#2}

\end{document}